\newcommand{\bigO}{O}
\newcommand{\e}{\mathrm{e}}
\renewcommand{\d}{\mathrm{d}}
\newcommand{\bydef}{:=}
\newcommand{\eps}{\varepsilon}
\newcommand{\ini}{{\mathrm{I}}}
\newcommand{\R}{\mathbb{R}}
\newcommand{\Ind}{\mathcal{I}}
\newtheorem{thm}{Theorem}
\newtheorem{rmk}{Remark}
\newtheorem{lem}{Lemma}
\newtheorem{cor}{Corollary}
\newtheorem{prop}{Proposition}
\newtheorem{deff}{Definition}
\newcommand{\slf}{S}
\newcommand{\ad}{\mathrm{ad}}
\newcommand{\bx}{\mathbf{x}}
\newcommand{\by}{\mathbf{y}}
\newcommand{\balpha}{\boldsymbol{\alpha}}
\newcommand{\sing}{\mathop{\mathrm{sing\ supp}}}
\newcommand{\btau}{\boldsymbol{\tau}}
\newcommand{\E}{\mathbb{G}}
\newcommand{\bm}{\mathsf{BM}^+\left(S^{n-1}\right)}
\newcommand{\tp}{\tilde{p}}
\newcommand{\dhat}[1]{\hat{\hat{#1}}}
\newcommand{\tx}{\widetilde{x}}
\def\squareforqed{\hbox{\rlap{$\sqcap$}$\sqcup$}}
\def\qed{\ifmmode\else\unskip\quad\fi\squareforqed}
\def\smartqed{\def\qed{\ifmmode\squareforqed\else{\unskip\nobreak\hfil
\penalty50\hskip1em\null\nobreak\hfil\squareforqed
\parfillskip=0pt\finalhyphendemerits=0\endgraf}\fi}}
\title{The frequency-dependent Wright-Fisher model: diffusive and non-diffusive approximations.}
\author{Fabio A. C. C. Chalub}
\address{Departamento de Matem\'atica 
and Centro de Matem\'atica e Aplica\c c\~oes, 
Universidade Nova de Lisboa,
Quinta da Torre, 2829-516, Caparica, Portugal.}
\email{chalub@fct.unl.pt}
\author{Max O. Souza}
\address{Departamento de Matem\'atica Aplicada,
Universidade Federal Fluminense,
R. M\'ario Santos Braga, s/n, 22240-920, Niter\'oi, RJ, Brasil.}
\email{msouza@mat.uff.br}
\date{\today}
\begin{document}
 
\maketitle

\begin{abstract}

 We study a class of processes that are akin to the Wright-Fisher model,  with transition probabilities  weighted in terms of the frequency-dependent fitness of the population types. 
By considering an approximate weak formulation of the discrete problem, we are able to derive a corresponding continuous weak formulation for the probability density. Therefore, we obtain a family of partial differential equations (PDE) for the evolution of the probability density, and which will be an approximation of the discrete process in the joint large population, small time-steps and weak selection limit.
If the fitness functions are sufficiently regular, we can recast the weak formulation in a more standard formulation, without any boundary conditions, but supplemented by a number of conservation laws. The equations in this family can be purely diffusive, purely hyperbolic or of convection-diffusion type, with frequency dependent convection. The particular  outcome will depend on the assumed scalings. The diffusive equations are of the degenerate type; using a duality approach, we also obtain a frequency dependent version of the Kimura equation without any further assumptions.   We also show that the convective approximation is related to the replicator dynamics and provide some estimate of how accurate is the convective approximation, with respect to the convective-diffusion approximation. In 
particular, we show that the mode, but not the expected value, of the probability distribution is modelled by the replicator dynamics. Some numerical simulations that illustrate the results 
are also presented. 
\keywords{Wright-Fisher process \and diffusion approximations \and continuous limits \and replicator equation}
\subjclass{92D15 \and 92D25 \and 35K57 \and 35K67 \and 35L65}
\end{abstract}

\section{Introduction}
\label{sec:intro}

Evolution is naturally a multiscale phenomenon~\citep{Keller,Metz_mdl}. The choice of right scale to describe a particular problem has as much art as science. For some populations
(e.g, with non overlapping generations) a discrete time provides adequate description; for different examples, this is excessively simplifying. Large populations can be described
as infinite (in order to use differential equations, for example), but this imposes limitations in the time validity of the model~\citep{Chalub_Souza:TPB_2009}. On the other hand, some finite population effects,
like, for example, the bottleneck effect, will be missing in any description relying in infinite populations~\citep{Hartl_Clark}.

In this vein, diffusion approximations, frequently used for large populations and long time scales, enjoy a long tradition in population genetics. This tradition  dates back as early as the work by \cite{Feller1951} and references there in. In particular, diffusion approximations were implicitly used in the pioneering works of \cite{Wright1,Wright2} and \cite{Fisher1,Fisher2}. These efforts have been further developed in a number of directions as, for instance, in the studies on multispecies models in \cite{Sato1976a,Sato1983}; see
 also the review in \cite{Sato1978}. Subsequently, \cite{EthierKurtz} systematically studied the approximation of finite Markov chain models by diffusions. In particular, they showed the validity of a diffusion approximation to a multidimensional Wright-Fisher model, in the regime of weak selection, and linear fitness. This led to a notable progress in diffusion theory, as reported for instance in \citep{EthierKurtz,StrockVarandhan:1997}. This considerable progress, in turn, led to a large use of diffusion theory in population genetics, which can be verified in contemporary introductions to the subject~\citep[see][]{Ewens,EtheridgeLNM}.

There is also a more heuristic approach, called the Kramers-Moyal expansion, where the kernel of the master equation of the stochastic
process is fully expanded in a series. The diffusion approximation 
can be viewed as a Kramers-Moyal expansion truncated at second order. 
Although it is commonly claimed that the  full expansion 
is needed in order to obtain a continuous approximation of discrete processes, 
it is known that under various conditions discrete Markov chains can be approximated by diffusions; cf. \cite{EthierKurtz} and \cite{StrockVarandhan:1997} 
for instance.
In this work, we shall show that under a number of conditions similar results hold for
the discrete processes considered.
See~\citep{VanKampen} for a discussion about this and other techniques for continuous approximations of discrete processes.

As observed above, results along similar lines had been obtained earlier by a number of authors \citep{Feller1951,EthierKurtz,Ewens}.
These works approach the problem mostly within a probabilistic framework, while here we take a pure analytical setting, and this brings two immediate consequences: firstly, we are able to directly derive a weak formulation for the  forward Kolmogorov equation, assuming only continuity of the fitness functions 
in contrast to the weak formulation for the backward  problem for Feller processes; see~\cite{RogersWilliams:2000a,RogersWilliams:2000b}
. The second, and possibly most important consequence, is that we are able to deal with a variety of scalings for the evolution problem. This yields a full family of evolution problems: genetic-drift dominated evolution, which is described by a diffusion equation; selection dominated evolution, which is governed by a hyperbolic equation; and an evolutionary dynamics, where the two forces are balanced, which is governed by a convection-diffusion equation that we term replicator-diffusion.

If we assume some more regularity of the fitness functions, we can then recast the weak formulation in a strong formulation. In this case, we cannot impose  any boundary conditions, but  we must supplement it by a number of conservation laws, namely that the probability of fixation of each type for a given probability density of population, in any time, must be the same as in the initial time. The conservation laws are used to circumvent the impossibility of imposing boundary conditions when the boundaries are absorbing. 

 Furthermore, by a duality argument we obtain the backward equation formulation. For the particular case of linear fitness and  balanced scalings, we then recover the classical result by \cite{EthierKurtz}. Additionally, by an appropriate combination of the weak and strong formulations, we are able to give a  complete description of the forward solution.

A complimentary approach to the study of evolution, based on evolutionary game theory, has also been developed \citep[cf.][]{JMS} with conclusions that are not always compatible with  results from  diffusion theory. 
As an example, diffusion models without mutation lead to the fixation of a homogeneous population, while frequency dependent models associated to the replicator dynamics\footnote{In this work, we will use the expressions ``replicator dynamics'', ``replicator equation'' and ``replicator system'' indistinctly.}
 may lead to stable mixed populations. For an introduction to  evolutionary game theory and replicator dynamics, we refer the reader to \cite{HofbauerSigmund} and \cite{Weilbull}. 
The replicator equation was also modified to introduce stochasticity at population level~\citep{Fudenberg_Harris,Foster_Young}. 
Relations between the matching scheme in a population and the deterministic approximation of its stochastic evolution are studied in~\citep{Boylan_1992,Boylan_1995}.

Consistent interaction among these two modelling schools have been attempted by a number of authors, with 
different degrees of success~\citep[see][]{Traulsen_etal_2006,LessardLadret,Lessard_2005,McKaneWaxman07,Waxman2011,ChampagnatFerriereMeleard_TPB2006,ChampagnatFerriereMeleard_SM2008,Fournier_Meleard_AAP2004,Molzon_2009,BenaimWeibull_2003,CorradiSarin_2000,TraulsenClaussenHauert_PRE2012}. 
We will show, as in many of these works, that both descriptions --- the one based on the diffusion approximation and the one based on the replicator dynamics --- are correct as models for the evolutionary dynamics of a given trait, but in
different scalings. As a by product, we will provide a generalisation of the Kimura equation valid for an arbitrary number of types and general fitness functions.
The long time asymptotics of both descriptions will suggest that the replicator equation is 
a model with limited time validity, given a certain maximum admissible error. Such a limitation is to be naturally anticipated on the grounds that the diffusion process will be eventually absorbed, while the replication dynamics might converge to an equilibrium in the interior.
We also confirm that the solution of the replicator equation indicates the most probable state (mode) of the population, conditional on not have been absorbed. Hence, it does not necessarily indicate the expected value of the 
trait.

The work presented here is a development of earlier work in \citep{Chalub_Souza:TPB_2009,Chalub_Souza:CMS_2009}: the former studying the derivation and convergence of the Moran model with two types to the 1-d version of the replicator-diffusion equation discussed here, and the latter with a comprehensive analytical study of the 1-d replicator-diffusion equation.  The derivation of the continuous model in \cite{Chalub_Souza:TPB_2009} hinged on the idea that a formal expansion of master equation, but with control of the local error,  and results on well-posedness of the continuous classical problem can be brought together via numerical analysis approximation results. This combination then yielded  uniform convergence, in any proper closed sub interval of [0,1], of the rescaled probabilities of the discrete model to the continuous probability density. This convergence result, combined with the analytical results in \cite{Chalub_Souza:CMS_2009} on a weak formulation that satisfies the conservation laws provided 
a continuous measure solution. The discrete process then converges weakly  towards such a solution, on a neighbourhood of each endpoint, but uniformly as described above. To study the Wright-Fisher continuous limits, however, we took a different route. 
This allows us to derive an approximate discrete weak formulation of the discrete process, with global error control. Further, by  embedding the discrete probabilities in an appropriate measure space, we could use compactness arguments to obtain the continuous limit. Thus, in this setting both the weak formulation and the weak convergence of the discrete model to the continuous one follows with considerable less effort, but we do not get the improved convergence on the interior.

\subsection{Scalings, limits and approximations}
\label{ssec:DCA}

In order to be able to study  more general models, we follow the approach used by the authors in \cite{Chalub_Souza:TPB_2009}. In particular, we are interested not only in diffusion approximations, but in approximations that can be consistent with the dynamics of the corresponding discrete process. 

We begin with a definition:
\begin{deff}\label{def:DCA}
  We shall say that a simplified model $\mathcal{M}_0$
is an approximation
of the family of detailed models $\mathcal{M}_\gamma$, $\gamma>0$, in a sense $\chi$, where $\chi$ is
an appropriate metric as, for instance, any norm in a suitable space of functions (e.g.,  $L^1$, $L^2$, $L^\infty$, etc) if the following holds true:
\begin{enumerate}
\item Consider a certain family of initial conditions $h^\ini_\gamma$ such that
$\lim_{\gamma\to0}h^\ini_\gamma=h^\ini_0$, in the sense $\chi$;
\item Evolve through the model $\mathcal{M}_\gamma$ the initial condition $h^\ini_\gamma$
and through the model $\mathcal{M}_0$ the initial condition $h^\ini_0$ until the time
$t<\infty$ obtaining $h_\gamma(t)$ and $h_0(t)$ respectively;
\end{enumerate}
If for all $t<\infty$ we have that $\lim_{\gamma\to0} h_\gamma(t)=h_0(t)$, in the sense $\chi$, then we say that the model $\mathcal{M}_\gamma$ converge in
the limit $\gamma\to0$,  in the sense $\chi$, to the model $\mathcal{M}_0$. If, furthermore, this convergence is uniformly in $t\in[0,\infty)$, then we say that the model $\mathcal{M}_\gamma$ converge in
the limit $\gamma\to0$ to the model $\mathcal{M}_0$ uniformly in time. 
\end{deff}

Some examples of the relation between detailed and simplified models  are listed in Table~\ref{table:drm}.

In general, some extra assumptions are frequently required to allow the passage to the limit. If, for example, there are more than one small parameter in the detailed model, it
is natural to assume a relationship among them, called \textsl{scaling}, as, in general, the limit model will depend on how these
parameters approaches zero. Other assumptions may also be necessary, as it will be discussed in the next paragraph.
 The process of taking the limit of a family of models,
considering a given scaling, will be called ``the thermodynamical limit''; by extension,
we shall also call the limit model the \textsl{thermodynamical limit}.
In this work, depending on the precise choice of the scaling, the limit equation
can be of drift-type (a partial differential equation fully equivalent to the
replicator equation or system), of purely diffusion type, or, in a delicate
balance, of drift-diffusion type.
\begin{table}
{\footnotesize
\begin{center}
 \begin{tabular}{c|c|c}
Detailed model&Meaning of parameter $\gamma$&Simplified model\\
\hline
 Kinetic models&mean free path&hydrodynamical models\\
Othmer-Dumbar-Alt model&mean free path&Keller-Segel model\\
Quantum Mechanics&rescaled Planck constant&Classical Mechanics\\
Relativistic mechanics&(rescaled light velocity)${}^{-1}$&Non-relativistic Mechanics\\
Moran process&inverse of population size&replicator-diffusion equation\\
Moran process&inverse of population size&replicator equation
 \end{tabular}
\end{center}
\caption{Detailed and simplified models. The last two lines state that both the replicator equation and the replicator
diffusion equation approximates the Moran process. References to these works are
\citep{Bardos_Golse_Levermore_I,Bardos_Golse_Levermore_II,Cercignani,Othmer_Hillen_I,Othmer_Hillen_II,CMPS,Stevens_2000,
Hepp,Cirincione,Bjorken_Drell,Chalub_Souza:CMS_2009,Chalub_Souza:TPB_2009}.}\label{table:drm}
}
\end{table}

In what follows, an important and natural assumption that must be introduced in order that we have an approximation in the sense of definition~\ref{def:DCA} is the so-called \textsl{weak selection principle}, to be precisely stated in equation~(\ref{WSP}). Generally speaking, we assume that the \textsl{fitness} of a given
individual 
converges to 1 when the time separation between two successive generations $\Delta t$ approaches zero. This is a
natural assumption when we consider that two successive generations collapses into a
single one. However, in most of the literature, the weak selection principle
is assumed in the limit of $N\to\infty$, where $N$ is the population size. 
Although they are equivalent (as we shall
assume a certain scaling relation between $N$ and $\Delta t$), we consider our approach more natural.

 In this work, we will consider as the detailed model, the Wright-Fisher process, to be studied in detail for finite populations in Section~\ref{sec:finite}: an evolutionary process for an asexual population of $N$ individuals, constant in size, divided in $n$ different types, that evolves according to a specific rule, with fixed time separation between generations of $\Delta t>0$ (the detailed model in the discussion above, where $\gamma$ is the inverse of the population size --- or, as we shall see, equivalently, the inter generation time).

In short, given the weak-selection principle, we are able to find a precise scaling that yields, as the thermodynimical limit. a parabolic equation with degenerate  diffusion at the boundaries: namely the replicator-diffusion equation.  Under a range of different scalings, however,  we shall also obtain a simpler first order differential system: the replicator equation. This simpler model turns out to be  a good approximation for the detailed model over a short time scale, if genetic drift is weak or selection is strong\footnote{Strong selection in this context is
 not directly related or opposed to weak selection as introduced before.}. For the  former approximation, nevertheless, we shall be able to show that $\lim_{\gamma\to 0}\lim_{t\to\infty}h_\gamma(t)=\lim_{t\to\infty}\lim_{\gamma\to 0}h_\gamma(t)$, and hence  we conjecture that such an approximation should be uniform in time.

\subsection{Outline}

Section~\ref{sec:prelim} introduces the basic notation and provides an extended abstract of our main results. In Section~\ref{sec:finite}, we  review some classical results about the discrete process (the finite population Wright-Fisher process); we also show the existence of a number of associated conservation laws,
and explicitly obtain the first moments of the Wright-Fisher process.
 In Section~\ref{sec:infinite}, with the assumption of weak-selection,
we obtain a family of continuous limits of the Wright-Fisher process depending on the scalings that are derived within a weak formulation, with solutions in appropriate measure spaces. In particular, we derive the replicator-diffusion equation, and show that it satisfies continuous counterparts of the conservation laws for the discrete process. We then continue the study of the replicator-diffusion equation in Section~\ref{sec:forward}, where we derive the main properties of its solutions, including a description of the solution structure as a regular part and a sum of singular measures over the sumbsimplices, and the large time convergence to a sum of Dirac measures over the vertices of the simplex.
  We also show that the probability distribution associated with all types in the
population concentrates along the evolutionary stable states.
Additionally, in Subsection~\ref{ssec:duality},  we obtain the backward equation as the proper dual of the replicator-diffusion equation,
providing a consistent generalisation of the Kimura equation for the $n$ types and arbitrary fitness functions.
 In Section~\ref{sec:replicator}, we study the replicator equation and show that, in the regime of strong selection the solutions to the replicator-diffusion will be well approximated by the
  solutions to the replicator equation within a finite time interval. 
Numerical examples are given in Section~\ref{sec:numerics}, where we also point out that, for intermediate times and large but finite populations, the replicator equation will approximate the mode of the discrete  evolution, but not the expected value of a given trait. Conclusions are presented in Section~\ref{sec:conclusion}.

\section{Preliminaries and main results}
\label{sec:prelim}

We begin by introducing the space of states for the evolution:
\begin{deff}\label{def:simplex}
Let $\R_+=[0,\infty)$. We define the $n-1$ dimensional simplex
\[
S^{n-1}\bydef\left\{\mathbf{x}\in\R_+^n\left|\, |\mathbf{x}|\bydef\sum_{i=1}^nx_i=1\right.\right\}.                                   
\]
We also define the set of vertices of the simplex
$\Delta S^{n-1}\bydef\{\mathbf{x}\in S^{n-1}|\exists i, x_i=1\}$,
its interior $\mathrm{int}\,{S^{n-1}}\bydef\{\mathbf{x}\in S^{n-1}|\forall i, x_i>0\}$
and its boundary $\partial S^{n-1}=S^{n-1}\backslash\mathrm{int}\, S^{n-1}$.
 The \textsl{state} of the population is a vector $\mathbf{x}\in S^{n-1}$.
The elements of $\Delta S^{n-1}$ are denoted $\mathbf{e}_i$, $i=1,\dots,n$ and called
``homogeneous states''. A vector $\mathbf{x}\in S^{n-1}\backslash\Delta S^{n-1}$ is a ``mixed state''. 
\end{deff}

In what follows, we let $p(\mathbf{x},t)$ to be the probability density of finding the population at state $\mathbf{x}\in S^{n-1}$ at time $t\ge 0$. 

\begin{deff}
 The fitness of type $i$, $i=1,\dots,n$ is a continuous function 
$\psi^{(i)}:S^{n-1}\to\R$, and the average fitness in a given population 
is given by  $\bar\psi(\mathbf{x})\bydef\sum_{i=1}^nx_i\psi^{(i)}(\mathbf{x})$. Note that we consider
the fitness function to not depend explicitly on time. 
\end{deff}

In this work, we derive a family of detailed models described by a parabolic equation of 
drift-diffusion type, with degenerated coefficients~\citep{DiBenedetto93,CS76},
defined in the simplex $S^{n-1}$, called \textsl{the replicator-diffusion equation}, namely:
\begin{equation}\label{replicator_diffusion_eps}
\left\{
\begin{array}{l}
\partial_t p=\mathcal{L}_{n-1,x}p\bydef\frac{\kappa}{2}\sum_{i,j=1}^{n-1}\partial_{ij}^2\left(D_{ij}p\right)
-\sum_{i=1}^{n-1}\partial_{i}\left(\Omega_ip\right)\ ,\\
D_{ij}\bydef x_i\delta_{ij}-x_ix_j\ ,\\
\Omega_i\bydef x_i\left(\psi^{(i)}(\mathbf{x})-\bar\psi(\mathbf{x})\right)\ ,
\end{array}\right.
\end{equation}
with $i,j=1,\dots,n-1$, $\kappa> 0$, and
where $\delta_{ij}=1$ if $i=j$ and 0 otherwise is the Kronecker delta. The above equation has a solution in the classical sense (i.e., everywhere differentiable).
Furthermore, in the  classical sense, it is a well posed problem, without any boundary conditions. However, this classical solution is not
the correct limit of the discrete process. In order to find the correct limit, 
equation (\ref{replicator_diffusion_eps}) is to be supplemented with
$n$ conservation laws. From now on, whenever we refer to the replicator-diffusion equation~(\ref{replicator_diffusion_eps}), we are implicitly assuming
these conservation laws.

Our main conclusions are:
\begin{enumerate}
\item  An analysis of the equation (\ref{replicator_diffusion_eps}) leads to a unique solution of measure type. This will require definitions of appropriate functional spaces.
\item  This unique solution approximates, in the thermodynamical limit, the evolution of a discrete population by the Wright-Fisher process 
pointwise for any time. In addition, the large time asymptotics is consistent with the discrete model.
\item  A reduced model, obtained by setting $\kappa=0$ in (\ref{replicator_diffusion_eps}) (with only one conservation law), is
shown to be equivalent to the replicator dynamics. This will suggest that the replicator dynamics approximates the discrete process for any $t$, however with an error increasing in $t$ along a fixed discretisation
\item Furthermore, the
solution of the replicator equation models the time evolution of the mode of the probability distribution associated to the discrete process (and not the \textsl{expected value}
of the same distribution);  
\item A frequency dependent generalisation of the Kimura equation for an arbitrary number of types is
obtained by looking at the dual problem for (\ref{replicator_diffusion_eps}).
\end{enumerate}

Before going into the technical details, we explain the last paragraph a little further.

Equation~(\ref{replicator_diffusion_eps}) has two natural time scales, one for the
natural selection (the mathematical drift and, as we shall see, fully compatible with
the replicator equation), the second for the genetic drift
(the mathematical diffusion). That is why we call equation~(\ref{replicator_diffusion_eps})
together with the conservation laws to be introduced in Subsection~\ref{ssec:conservation},
the ``replicator-diffusion equation''.
More precisely, the solution of the replicator-diffusion
equation when $\kappa=0$ (which is of hyperbolic type) is the leading order term
of the solution $p_\kappa$ of the replicator-diffusion equation for small $\kappa$ (i.e., large fitness
and/or short times). The replicator-diffusion equation with zero diffusion ($\kappa=0$) happens
to be the replicator equation (or system)~\citep{HofbauerSigmund}.
In an appropriate sense, to be made precise in Section~\ref{ssec:local} (Theorem~\ref{thm:conv_to_replicator}), we have
$p_\kappa\stackrel{\kappa\to 0}{\longrightarrow} p_0$,  pointwise, but not uniformly in time.

%This theorem 
Theorem~\ref{thm:conv_to_replicator} cannot be made uniform in time, for general fitness functions and initial conditions,
as the Wright-Fisher process always converge in $t\to\infty$
to a linear combination of homogeneous states, while it is possible that the solution of the
replicator equation converges to a stable mixed state. 

The former statement is the mathematical formulation of a known principle in evolutionary biology that states that ``given enough time every
mutant gene will be fixed or extinct.''~\citep{Kimura}. This means that the final state of the replicator-diffusion equation
with any $\kappa>0$ will be a linear combination of Dirac deltas at the vertices of the simplex $S^{n-1}$. Actually, for any 
positive time, the solution of equation~(\ref{replicator_diffusion_eps}) with the conservation laws described above is a sum of
 a classical function in the simplex plus a sum  of singular measures over all the subsimplexes on $\partial S^{n-1}$ and, inductively,
 on their boundaries subsimplexes. In particular, we shall have also Dirac measures supported on the vertices of the simplex. 
These measures appears immediately, i.e., for any
$t>0$. This represents the fact that in a single step there is a non zero --- however, small --- probability that the population reaches
 fixation through Wright-Fisher evolution. The full evolution and the final states of the replicator-diffusion equation will
be studied in Section~\ref{sec:forward}.

From the practical point of view, we are, however, often interested in transient states (``in the long run, we are
all dead'', said John Maynard Keynes), specially because the transient states become more and more
important for the discrete evolution as the population size increases. Heuristically, when the
population is large the stochastic fluctuations
decrease in importance, and therefore, its evolution is deterministic. The associated limit will be given by
equation~(\ref{replicator_diffusion_eps}), with $\kappa=0$, i.e., the hyperbolic limit of equation~(\ref{replicator_diffusion_eps}).  
This equation does not develop finite-time singularities. 

The relationships between the three models is summarised in Figure~\ref{fig:relations}.

\begin{figure}
\centering
\includegraphics[width=1\linewidth]{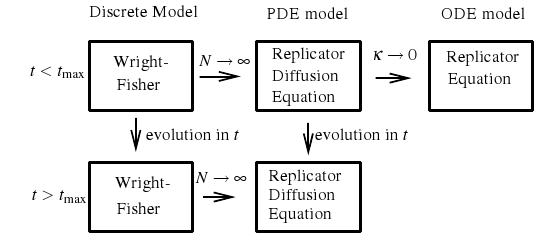}
\caption{The boxes in the figure represents the solutions of three different models: the Wright-Fisher
process (finite population $N$), the replicator-diffusion equation (positive diffusion $\kappa$)
and the replicator equation. The vertical axis indicates the arrow of time (top-down), and the horizontal
axis indicates, first the large population limit, secondly the no-diffusion limit. 
Consider that there is 
 a maximum acceptable error $\epsilon$ (in the $L^\infty$ norm) between the Wright-Fisher model (suitably \emph{Radonmised} --- see Subsection~\ref{ssec:continuos_rep}) and 
 the continuous approximation. Therefore, there is a population size $N_0$ such that for $N>N_0$ the difference between the
 replicator-diffusion equation and the discrete model is less than $\epsilon$. For the replicator approximation, for any $N$, it may exist maximal time $t_{\max}(N)<\infty$ such that for $t>t_{\max}$
  the error is too large.
}
\label{fig:relations}
\end{figure}

Finally, we observe that the natural formulation for the continuous limit is the weak one. For such a formulation, we only require the fitness functions to be continuous. If, in addition, these functions are also Lipschitz, we can then recast the problem in a strong sense, provided that we supplement it with the conservation laws. Finally, requiring the fitness functions to be smooth allows for a number of results about the  solutions to be easily derived. In particular, one can prove a structure theorem that show that the problem is equivalent to a hierarchy of classical degenerate problems, provided that some members are interpreted as densities for singular measures.

\section{The discrete model}
\label{sec:finite}

In this section, we study the discrete model, i.e., the Wright-Fisher model for constant population, arbitrary number
of types and arbitrary fitnesses functions. We start, in Subsection~\ref{ssec:prelim} with basic definitions; in 
Subsection~\ref{ssec:sstates_fstates_claws} we briefly review some important results in the literature.
We also prove that the discrete process has as many linear conservation laws as types. Additionally, we also show that the
final state is a linear superposition of these independent stationary states, with coefficients that depend on the initial condition
and that can be calculated from a set of $n$ linearly independent conservation laws. All these results will be useful in
the correct determination of the continuous process, to be done in Sections~\ref{sec:infinite} and~\ref{sec:forward}.
 The discrete Wright-Fisher process was studied, with different level of details in, for
 example,~\citep{Ewens,Nowak_2006,Imhof_Nowak_2006}, but, to the best of our knowledge the conservation laws associated to the process
 were overlooked. 

The fact that the final state in the Wright-Fisher process, among others, for a finite population is always homogeneous was also
a matter of dispute with respect to the validity of the modelling~\citep{Vickery_1988,Smith_1988}. As we will shortly see in this work, this dispute is basically a consequence of the existence of two different time scales hidden in the model: the non-diffusive (drift) and the diffusive one. See also~\cite{EthierKurtz} and \cite{EtheridgeLNM} and references therein for a discussion in the r\^ole of time scales.

\subsection{Preliminaries}
\label{ssec:prelim}

We consider a
fixed size population of $N$ individuals
at time $t$ consisting of a fraction $x_i\in\left\{0,\frac{1}{N},\frac{2}{N},\dots,1\right\}$ of
individuals of type $i=1,2,\cdots,n$. 
The population evolves in discrete generations with time-step separation of $\Delta t$.
We introduce the following notation:
\begin{deff}
The \textsl{state} of a population is defined by a vector in
the $N$-discrete $n-1$-dimensional simplex 
\[
S^{n-1}_N\bydef\left\{\mathbf{x}=(x_1,\cdots,x_n)\big||\mathbf{x}|\bydef\sum_{i=1}^nx_i=1, 
x_i\in\left\{0,\frac{1}{N},\frac{2}{N},\cdots,1\right\}\right\}\ .
\]
We also define the set of vertices of the $n-1$-dimensional simplex
\[
\Delta S^{n-1}_N\bydef\{\mathbf{x}\in S^{n-1}_N|\exists i, x_i=1\}=\left\{\mathbf{e}_i|i=1,\dots,n\right\}\ .
\]
The elements of $\Delta S^{n-1}_N$ are called \textsl{the homogeneous states}.
To each type we attribute a
function, called fitness, $\Psi^{(i)}_{\Delta t}:S^{n-1}_N\to(0,\infty)$.
It is convenient to assume that $\Psi^{(i)}_{\Delta t}$ is a discretisation
of a continuous function on the simplex $S^{n-1}$; more assumptions on $\Psi^{(i)}_{\Delta t}$ will be introduced in
Section~\ref{sec:infinite}.
\end{deff}

A population at time $t+\Delta t$ is obtained from 
the population at time $t$ choosing $N$ individuals with
probability proportional to the fitness.
More precisely, we define the average fitness $\bar\Psi_{\Delta t}(\mathbf{x})=\sum_{i=1}^nx_i\Psi^{(i)}_{\Delta t}(\mathbf{x})$ and
then the transition probability from a population 
at state $\mathbf{y}$ to a population at state $\mathbf{x}$ is given by
\begin{equation}\label{tran_prob_WF}
 \Theta_{N,\Delta t}(\mathbf{y}\to\mathbf{x})=\frac{N!}{(Nx_1)!(Nx_2)!\cdots (Nx_n)!}
\prod_{i=1}^n\left(\frac{y_i\Psi^{(i)}_{\Delta t}(\mathbf{y})}{\bar\Psi_{\Delta t}(\mathbf{y})}\right)^{Nx_i}\ .
\end{equation}

The evolutionary process given by a Markov chain with transition probabilities given by
equation~(\ref{tran_prob_WF}) is called \textsl{the (frequency dependent) Wright-Fisher process}.

Let $\mathcal{P}(t)=\left(P(\mathbf{x},t)\right)_{\mathbf{x}\in S^{n-1}_N}$, with
\[
\mathcal{P}\in\left\{P:S^{n-1}_N\times\R_+\to\R_+|\sum_{\mathbf{x}\in S^{n-1}_N}P(\mathbf{x},\cdot)=1\right\}\ ,
\]
where $P(\bx,t)$
is the probability of finding the population at a given 
state $\mathbf{x}\in S^{n-1}_N$  at time $t$.
Then, the evolution is given by the so called ``master equation'':
\begin{equation}\label{discrete_evolution}
P(\mathbf{x},t+\Delta t)=
(\mathcal{T}\mathcal{P}(t))(\mathbf{x})\bydef
\sum_{\mathbf{y}\in S^{n-1}_N}\Theta_{N,\Delta t}(\mathbf{y}\to\mathbf{x})P(\mathbf{y},t)\ .
\end{equation}

\subsection{Stationary states, final states and conservation laws}
\label{ssec:sstates_fstates_claws}

We call an homogeneous population a population of a single type, i.e., $P(\mathbf{x},t)=\hat P_{\mathbf{v}}(\mathbf{x})$
for $\mathbf{v}\in\Delta S^{n-1}_N$,
where 
\[
 \hat P_{\mathbf{x}}(\mathbf{y})=\left\{
\begin{array}{ll}
 1\ ,&\qquad \mathbf{y}=\mathbf{x}\ ,\\
0\ ,&\qquad\mathbf{y}\ne\mathbf{x}\ .
\end{array}
\right.
\]
From the inner product definition:
 \[
\langle v,w\rangle\bydef\sum_{\mathbf{x}\in S^{n-1}_N}v(\mathbf{x})w(\mathbf{x})\ ,
\]
it follows immediately that 
$\langle\hat P_{\mathbf{x}},\hat P_{\mathbf{y}}\rangle=\delta_{\mathbf{x},\mathbf{y}}=1$ if $\mathbf{x}=\mathbf{y}$ and 0 otherwise.

Now, we state classical results for the Wright-Fisher process that will be useful in the sequel. Firstly, from the Perron-Frobenius theorem, the operator $\mathcal{T}^\infty:\slf^{n-1}_N\to\slf^{n-1}_N$, $\mathcal{T}^\infty P\bydef\lim_{m\to\infty}\mathcal{T}^m P$ is
well defined. For details in the lemma below, the interested reader should consult~\cite{Karlin_Taylor_first}.

\begin{lem}\label{lem:classical}
A function $f$ defined in $S^{n-1}_N$ is a fixed state of the operator $\mathcal{T}$
if, and only if, $f$ is a linear combination of homogeneous states. In particular, $\mathcal{T}$ has exactly $n$ linearly independent eigenfunctions associated to the eigenvalue $\lambda=1$.
For all non-negative initial condition $P^\ini$, the final result is
a linear combination of homogeneous states,
\[
\mathcal{T}^\infty P\bydef\lim_{t\to\infty}P(\cdot,t)=
\sum_{i=1}^n F^{(i)}_{P_{\mathbf{e}_i}}\hat P_{\mathbf{e}_i}\ ,
\]
\end{lem}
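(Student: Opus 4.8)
The plan is to prove Lemma~\ref{lem:classical} by exploiting the structure of the transition matrix $\Theta_{N,\Delta t}$ restricted to the boundary strata of $S^{n-1}_N$, together with the Perron--Frobenius theorem for the finite Markov chain. First I would observe that the homogeneous states $\mathbf{e}_i$ are \emph{absorbing}: from equation~(\ref{tran_prob_WF}), if $\mathbf{y}=\mathbf{e}_i$ then $y_j\Psi^{(j)}_{\Delta t}(\mathbf{y})=0$ for all $j\neq i$, so $\Theta_{N,\Delta t}(\mathbf{e}_i\to\mathbf{x})$ is nonzero only for $\mathbf{x}=\mathbf{e}_i$, where it equals $1$. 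Hence each $\hat P_{\mathbf{e}_i}$ is a fixed state of $\mathcal{T}$, and by linearity every linear combination of homogeneous states is a fixed state; this gives at least $n$ linearly independent eigenfunctions for $\lambda=1$.

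For the converse --- that these are the \emph{only} fixed states --- I would argue that from any non-homogeneous state the chain reaches some $\mathbf{e}_i$ with positive probability in finitely many steps. The key point is that whenever $x_i>0$, the transition probability in (\ref{tran_prob_WF}) assigns positive mass to the move that sets all $Nx_j$ with $j\neq i$ to zero (since $y_i\Psi^{(i)}_{\Delta t}(\mathbf{y})/\bar\Psi_{\Delta t}(\mathbf{y})>0$ and the multinomial coefficient is positive), so in one step there is positive probability of absorption at $\mathbf{e}_i$. Consequently the only recurrent states are the vertices $\mathbf{e}_1,\dots,\mathbf{e}_n$, each forming its own recurrence class, and all mixed states are transient. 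Standard finite Markov chain theory (Perron--Frobenius applied to the recurrent classes, cf.~\cite{Karlin_Taylor_first}) then gives that the eigenvalue $1$ has geometric multiplicity exactly equal to the number of recurrence classes, namely $n$, and that $\mathcal{T}^\infty$ exists with $\mathcal{T}^\infty P=\sum_{i=1}^n F^{(i)}\hat P_{\mathbf{e}_i}$, where $F^{(i)}$ is the absorption probability at $\mathbf{e}_i$ starting from the distribution $P^\ini$; writing $F^{(i)}=F^{(i)}_{P_{\mathbf{e}_i}}$ records the dependence on the initial data through the conservation laws introduced in the surrounding text.

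The main obstacle is making rigorous the claim that there are \emph{no other} fixed states beyond the span of the homogeneous ones --- i.e., pinning down that the geometric multiplicity of $\lambda=1$ is exactly $n$ and not larger. This requires care because $\mathcal{T}$ is not irreducible on all of $S^{n-1}_N$, so one must decompose the state space into the $n$ absorbing singletons and the (communicating or not) set of transient states, verify that every transient state leads with positive probability to at least one vertex, and then invoke the structure theorem for the stationary subspace of a reducible stochastic matrix: the stationary distributions are precisely convex combinations of the stationary distributions on the recurrent classes. Since each recurrent class here is a single absorbing vertex, its unique stationary distribution is $\hat P_{\mathbf{e}_i}$, and the result follows. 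I would cite~\cite{Karlin_Taylor_first} for the general reducible-chain decomposition and the existence of $\mathcal{T}^\infty$, keeping the absorption-probability identification brief since it is classical.
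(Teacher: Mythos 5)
Your proposal is correct and follows exactly the classical absorbing-chain argument that the paper itself does not spell out but delegates to \cite{Karlin_Taylor_first}: vertices are absorbing by \eqref{transition_stationary}, every non-vertex state reaches some $\mathbf{e}_i$ in one step with positive probability (so all such states are transient), and the stationary subspace of the reducible stochastic matrix is spanned by the $n$ point masses at the recurrent singletons. Since the paper offers no proof beyond this citation, your write-up supplies the same route in more detail and is a valid justification of the lemma.
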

where $F^{(i)}_{P}\bydef\lim_{m\to\infty}\langle\mathbf{e}_i,\mathcal{T}^m P\rangle$ is the fixation probability of the
type $i$ in a population initially with a probability distribution $P\in\slf^{n-1}_N$. 

\begin{deff}
 We define a \textit{linear conservation law} as one given by a linear functional $\mathsf{L}$ over the functions of $S^{n-1}_N$
such that $\mathsf{L}\left(\mathcal{P}(t+\Delta t)\right)=\mathsf{L}\left(\mathcal{P}(t)\right)$. A set of linear conservation laws is linearly independent,
if the only linear combination providing a trivial conservation law $\mathsf{L}(\mathcal{P}(t))=0$ is the trivial one.
\end{deff}

\begin{prop}
 Define  $\mathsf{F}^{(i)}\bydef\sum_{\mathbf{x}\in S^{n-1}_N}F^{(i)}_{\hat P_{\mathbf{x}}}\hat P_{\mathbf{x}}$, $i=1,\dots, n$, a functional over $S^{n-1}_N$. Therefore $\mathsf{F}^{(i)}(\mathbf{x})$ is 
the fixation probability of the type $i$ associated to the initial condition $\mathbf{x}$. 
Finally, the 
set  $\{\mathsf{F}^{(1)},\dots,\mathsf{F}^{(n)}\}$ is a basis for the set of linear conservation laws associated to the operator $\mathcal{T}$. 
\end{prop}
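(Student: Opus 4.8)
The plan is to show three things: (i) each $\mathsf{F}^{(i)}$ is a linear conservation law; (ii) the $\mathsf{F}^{(i)}$ are linearly independent; and (iii) they span the space of all linear conservation laws. For (i), recall that by Lemma~\ref{lem:classical} the fixation probability of type $i$ starting from a distribution $P$ is $F^{(i)}_P = \lim_{m\to\infty}\langle \mathbf{e}_i,\mathcal{T}^m P\rangle$. I would first observe that $F^{(i)}_{\mathcal{T}P} = \lim_{m\to\infty}\langle\mathbf{e}_i,\mathcal{T}^{m+1}P\rangle = F^{(i)}_P$, simply because shifting the index by one does not change the limit. Next I would identify the functional $\mathsf{F}^{(i)}$, acting on a distribution $\mathcal{P}$ via the inner product $\langle \mathsf{F}^{(i)},\mathcal{P}\rangle = \sum_{\mathbf{x}} F^{(i)}_{\hat P_{\mathbf{x}}}\,P(\mathbf{x})$, with the map $P\mapsto F^{(i)}_P$: by linearity of $P\mapsto\mathcal{T}^mP$ and of the inner product, $F^{(i)}_P = \sum_{\mathbf{x}} P(\mathbf{x}) F^{(i)}_{\hat P_{\mathbf{x}}} = \langle\mathsf{F}^{(i)},P\rangle$. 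Combining these, $\langle\mathsf{F}^{(i)},\mathcal{T}\mathcal{P}\rangle = F^{(i)}_{\mathcal{T}\mathcal{P}} = F^{(i)}_{\mathcal{P}} = \langle\mathsf{F}^{(i)},\mathcal{P}\rangle$, which is exactly the statement that $\mathsf{F}^{(i)}$ is a linear conservation law for $\mathcal{T}$ (equivalently, $\mathsf{F}^{(i)}$ is a left-eigenvector of $\mathcal{T}$ with eigenvalue $1$).

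For linear independence (ii), I would evaluate on the homogeneous states: $\mathsf{F}^{(i)}(\mathbf{e}_j) = F^{(i)}_{\hat P_{\mathbf{e}_j}} = \delta_{ij}$, since a population already fixed at type $j$ has fixation probability $1$ for type $j$ and $0$ for every other type. Hence if $\sum_i c_i\mathsf{F}^{(i)} \equiv 0$, evaluating at $\mathbf{e}_j$ gives $c_j = 0$ for every $j$, so the $\mathsf{F}^{(i)}$ are linearly independent; this also shows $\sum_{i=1}^n\mathsf{F}^{(i)}\equiv 1$ on $S^{n-1}_N$ (summing the Kronecker deltas and using that every orbit is absorbed), recovering conservation of total probability as the combination $c_i\equiv 1$.

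For spanning (iii) — the part I expect to be the main obstacle — I would argue by a dimension count. A linear conservation law is precisely a vector $v$ with $\mathcal{T}^{T}v = v$, i.e., an element of the eigenspace of $\mathcal{T}^T$ for eigenvalue $1$; this eigenspace has the same dimension as the eigenspace of $\mathcal{T}$ for eigenvalue $1$, which by Lemma~\ref{lem:classical} is exactly $n$. Since $\mathsf{F}^{(1)},\dots,\mathsf{F}^{(n)}$ are $n$ linearly independent conservation laws by (i)--(ii), they form a basis. The one subtlety to nail down here is that the algebraic and geometric multiplicities of the eigenvalue $1$ of $\mathcal{T}$ coincide (so that $\dim\ker(\mathcal{T}^T-I)=\dim\ker(\mathcal{T}-I)=n$); this follows from the Perron--Frobenius / ergodic decomposition structure already invoked before Lemma~\ref{lem:classical}, since $\mathcal{T}^{\infty}=\lim_m\mathcal{T}^m$ exists and is a projection onto the $n$-dimensional fixed space, forcing the eigenvalue $1$ to be semisimple. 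With this in hand the proof concludes. \qed
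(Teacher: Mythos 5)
Your proposal is correct and follows essentially the same route as the paper: identify $\mathsf{F}^{(i)}$ with the action of $(\mathcal{T}^\dagger)^\infty$ on $\hat P_{\mathbf{e}_i}$ (equivalently, show $F^{(i)}_{\mathcal{T}P}=F^{(i)}_P$), use $\mathsf{F}^{(i)}(\mathbf{e}_j)=\delta_{ij}$ for independence, represent an arbitrary conservation law as a fixed vector of $\mathcal{T}^\dagger$, and conclude by a dimension count against Lemma~\ref{lem:classical}. The only difference is that you spell out the multiplicity issue explicitly (and in fact only the elementary identity $\dim\ker(\mathcal{T}^\dagger-I)=\dim\ker(\mathcal{T}-I)$ is needed, not semisimplicity), whereas the paper leaves this step implicit.
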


\begin{proof}
From the fact that
\[
 \mathcal{T}^\infty P=\sum_{i=1}^n F^{(i)}_P\hat P_{\mathbf{e}_i}
\]
we find
\begin{equation*}
 F^{(i)}_{P}=\left(\mathcal{T}^\infty P\right)\left(\mathbf{e}_i\right)=\langle\mathcal{T}^\infty P,\hat P_{\mathbf{e}_i}\rangle
=\langle P,\left(\mathcal{T}^\dagger\right)^\infty\hat P_{\mathbf{e}_i}\rangle\ .
\end{equation*}
In particular 
\[
 \sum_{\mathbf{x}\in S^{n-1}_N}F^{(i)}_{\hat P_{\mathbf{x}}}\hat P_{\mathbf{x}}=\sum_{\mathbf{x}\in S^{n-1}_N}
\langle\hat P_{\mathbf{x}},\left(\mathcal{T}^\dagger\right)^\infty\hat P_{\mathbf{e}_i}\rangle\hat P_{\mathbf{x}}.
\]
Finally,
\[
 \left(\mathcal{T}^\dagger\right)^\infty\hat P_{\mathbf{e}_i}=\sum_{\mathbf{x}\in S^{n-1}_N}F^{(i)}_{\hat P_{\mathbf{x}}}\hat P_{\mathbf{x}}=\mathsf{F}^{(i)}\ .
\]
Therefore, $\mathsf{F}^{(i)}$ is an eigenvector of $\mathcal{T}^\dagger$.
In particular,
\[
 \mathsf{F}^{(i)}(\mathbf{e}_j)=\langle\left(\mathcal{T}^\dagger\right)^{\infty}\hat P_{\mathbf{e}_i},\hat P_{\mathbf{e}_j}\rangle
=\langle\hat P_{\mathbf{e}_i},\mathcal{T}^\infty\hat P_{\mathbf{e}_j}\rangle=
\langle\hat P_{\mathbf{e}_i},\hat P_{\mathbf{e}_j}\rangle=\delta_{ij}\ .
\]
It is immediate to prove that they are linearly independent; let $\alpha_1,\dots,\alpha_n$ such
that $\sum_{i=1}^{n}\alpha_i\mathsf{F}^{(i)}=0$, i.e., 
for every $\mathbf{x}\in S^{n-1}_N$,
$
\sum_{i=1}^{n}\alpha_i\mathsf{F}^{(i)}(\mathbf{x})=0
$.
Using $\mathbf{x}=\mathbf{e}_i$, we conclude that $\alpha_i=0$, and then $\{\mathsf{F}^{(1)},\dots,\mathsf{F}^{(n)}\}$ is a basis for the eigenspace of $\mathcal{T}^\dagger$ associated to
$\lambda=1$.

 Now, consider a linear conservation law $\mathsf{L}$. From standard representation theorems, there is a vector $w\in S^{n-1}_N$ such that
 \begin{equation}\label{discrete_conservation_laws}
  \langle\mathcal{P}(t),w\rangle=\mathsf{L}(\mathcal{P}(t))=\mathsf{L}(\mathcal{P}(t+\Delta t))
=\langle\mathcal{T}\mathcal{P}(t),w\rangle=\langle\mathcal{P}(t),\mathcal{T}^\dagger w\rangle\ .
 \end{equation}
Therefore, $w$ is an eigenvector of $\mathcal{T}^\dagger$ associated to $\lambda=1$ and then it is a linear combination of $\mathsf{F}^{(i)}$, $i=1,\dots,n$.

\end{proof}

\begin{rmk}
The conservation of probability (the most natural conservation law), follows directly from the equation
\[
 \sum_{i=1}^n\mathsf{F}^{(i)}(\mathbf{x})=\sum_{i=1}^nF^{(i)}_{\hat P_{\mathbf{x}}}=1\ ,\ \forall\mathbf{x}\in S^{n-1}_N\ .
\]
\end{rmk}

\subsection{Properties of the transition kernel}

The probability conservation is a consequence of the definition~(\ref{tran_prob_WF}) and reads
\begin{equation}\label{conservation_of_probability}
\sum_{\mathbf{x}\in S^{n-1}_N}\Theta_{N,\Delta t}(\mathbf{y}\to\mathbf{x})=1, \qquad\forall\mathbf{y}\in S^{n-1}_N\ .
\end{equation}
It also follows from the definition that 
\begin{equation}\label{transition_stationary}
\Theta_{N,\Delta t}(\mathbf{y}\to\mathbf{x})=\left\{
\begin{array}{ll}
1&\quad\text{if}\quad\mathbf{x}=\mathbf{y}\in\Delta S^{n-1}_N\ ,\\
0&\quad\text{if}\quad\mathbf{x}\ne\mathbf{y}\in\Delta S^{n-1}_N\ ,
\end{array}
\right.
\end{equation}
which can be readily interpreted as the absence of mutations in the model.

It will be also convenient to write
\[
 S^{n-1}_{N,\mathbf{x}^\pm}=\left\{\mathbf{y}\in\mathbb{R}^{n-1} | \mathbf{x}\pm\mathbf{y} \in S^{n-1}_N\right\}.
\]
and to introduce
\[
z\tau_i=y_i,\quad z=\frac{1}{\sqrt{N}}.
\quad\text{and}\quad
\mathcal{S}_{\mathbf{x},z}=\{\boldsymbol{\tau}\in\R^n\big|\sum_{i=1}^n\tau_i=0\ \text{and} |\tau_i|<x_i/z\}.
\]

\begin{lem}\label{theta_moments}
Define 
\[
\widetilde{x}_i=\frac{x_i\Psi^{(i)}_{\Delta t}(\bx)}{\bar\Psi(\bx)}
\]
and 
\[
 \E_\Theta[h]=\sum_{z\boldsymbol{\tau}\in S^{n-1}_{N,\mathbf{x}^+}}\Theta(\bx\to\bx+z\boldsymbol{\tau})h(\boldsymbol{\tau})\ ,
\]
where $h:\mathcal{S}_{\mathbf{x},z}\to\R$.

For any $N$, we have
\begin{align*}
\E_\Theta[1]&=1\\
z\E_\Theta[\tau_i]&=\tx_i-x_i\\
z^2\E_\Theta[\tau_i\tau_j]&=(\tx_i-x_i)(\tx_j-x_j)+z^2\left(\delta_{ij}\tx_i-\tx_i\tx_j\right)\\
z^3\E_\Theta[\tau_i\tau_j\tau_k]&=(\tx_i-x_i)(\tx_j-x_j)(\tx_k-x_k)\\
&\qquad + z^2\left[(\delta_{ij}\tx_i-\tx_i\tx_j)(\tx_k-x_k)+(\delta_{ik}\tx_i-\tx_i\tx_k)(\tx_j-x_j) + (\delta_{jk}\tx_j-\tx_j\tx_k)(\tx_i-x_i)\right]\\
&\qquad +
z^4\left[2\tx_i\tx_j\tx_k-(\delta_{ij}\tx_j\tx_k+\delta_{ik}\tx_i\tx_j+\delta_{kj}\tx_i\tx_j)+
\delta_{ij}\delta_{ik}\delta_{jk}\tx_i\right]
 \end{align*}
\end{lem}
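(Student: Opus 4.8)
The plan is to recognize that $\Theta_{N,\Delta t}(\bx \to \bx + z\btau)$, as a function of the displacement, is precisely a multinomial distribution: if we write $\bx + z\btau = \mathbf{k}/N$ with $\mathbf{k} = (k_1,\dots,k_n)$, then the probability equals $\frac{N!}{k_1!\cdots k_n!}\prod_{i=1}^n \tx_i^{k_i}$, where $\tx_i = y_i\Psi^{(i)}_{\Delta t}(\bx)/\bar\Psi(\bx)$ and $\sum_i \tx_i = 1$ by construction of $\bar\Psi$. So $\mathbf{k} \sim \mathrm{Multinomial}(N; \tx_1,\dots,\tx_n)$, and $\E_\Theta[1] = 1$ is just normalization, which is exactly \eqref{conservation_of_probability}. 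Everything then reduces to computing the first three (mixed) moments of a multinomial vector and translating back through $k_i = N(x_i + z\tau_i)$, i.e.\ $z\tau_i = k_i/N - x_i$, equivalently $\tau_i = (k_i - N\tx_i)/\sqrt{N} + \sqrt{N}(\tx_i - x_i)$, so $z\tau_i$ is the centred-and-scaled multinomial coordinate shifted by the deterministic term $\tx_i - x_i$.

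First I would set $\xi_i \bydef k_i - N\tx_i$, the centred multinomial coordinates, so that $z\tau_i = \tfrac{1}{N}\xi_i + (\tx_i - x_i)$. The classical multinomial moments give $\mathbb{E}[\xi_i] = 0$, $\mathbb{E}[\xi_i\xi_j] = N(\delta_{ij}\tx_i - \tx_i\tx_j)$, and for the third-order one the standard formula for multinomial cumulants/central moments, which I would either cite or derive in two lines from the generating function $\left(\sum_i \tx_i e^{s_i}\right)^N$: the third central mixed moment equals $N\left(2\tx_i\tx_j\tx_k - \delta_{ij}\tx_i\tx_k - \delta_{ik}\tx_i\tx_j - \delta_{jk}\tx_i\tx_j + \delta_{ij}\delta_{jk}\tx_i\right)$ (the last purely diagonal term being the third cumulant $N\tx_i(1-\tx_i)(1-2\tx_i)$ reorganized, or more cleanly $\mathbb{E}[\xi_i^3] = N\tx_i - 3N\tx_i^2 + 2N\tx_i^3$). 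Then I would expand:
\begin{align*}
z\E_\Theta[\tau_i] &= \tfrac1N\mathbb{E}[\xi_i] + (\tx_i - x_i) = \tx_i - x_i,\\
z^2\E_\Theta[\tau_i\tau_j] &= \tfrac{1}{N^2}\mathbb{E}[\xi_i\xi_j] + (\tx_i-x_i)(\tx_j-x_j) = z^2(\delta_{ij}\tx_i - \tx_i\tx_j) + (\tx_i-x_i)(\tx_j-x_j),
\end{align*}
using $z^2 = 1/N$, and similarly for the triple product after multiplying out $\bigl(\tfrac1N\xi_i + (\tx_i-x_i)\bigr)\bigl(\tfrac1N\xi_j + (\tx_j-x_j)\bigr)\bigl(\tfrac1N\xi_k + (\tx_k-x_k)\bigr)$: the cross terms linear in a single $\xi$ vanish in expectation, the terms with two $\xi$'s reproduce the $z^2$ bracket, and the all-$\xi$ term gives the $z^4$ bracket (noting $1/N^3 \cdot N = z^4$). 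One small bookkeeping point: the displayed $z^4$ bracket in the statement is written with $\delta_{ij}\delta_{ik}\delta_{jk}\tx_i$, which on the support $\{i=j=k\}$ coincides with $\delta_{ij}\delta_{jk}\tx_i$, and the off-diagonal correction terms match after symmetrization — I would just verify the indices line up.

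The only genuinely non-routine step is confirming the third mixed central moment of the multinomial; this is where I would be most careful, since it mixes several Kronecker-delta contributions and sign conventions. A clean way to avoid error is to use that $\mathbf{k} = \sum_{\ell=1}^N \mathbf{b}^{(\ell)}$ with $\mathbf{b}^{(\ell)}$ i.i.d.\ categorical (one-hot) vectors, so $\mathbb{E}[\xi_i\xi_j\xi_k]$ reduces, by independence and centring, to $N$ times the single-trial third moment $\mathbb{E}[(b_i - \tx_i)(b_j - \tx_j)(b_k - \tx_k)]$; since $b_i b_j = \delta_{ij} b_i$ for one-hot vectors, this collapses to an elementary computation in $\tx_i,\tx_j,\tx_k$ that yields exactly the bracket above. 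With that in hand the lemma follows by plugging in; the restriction of the sum to $z\btau \in S^{n-1}_{N,\bx^+}$ causes no issue because that set is precisely the support of the multinomial (all $k_i \ge 0$), and one should note the constraint $\sum_i \tau_i = 0$ is automatic since $\sum_i k_i = N$ forces $\sum_i z\tau_i = \sum_i(x_i - x_i) = 0$.
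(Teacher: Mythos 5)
Your proposal is correct and follows essentially the same route as the paper: identify $\Theta_{N,\Delta t}(\bx\to\bx+z\btau)$ as a multinomial law with parameter $\widetilde{\bx}$, decompose $z\tau_i$ into the centred multinomial coordinate plus the deterministic shift $\tx_i-x_i$, and read the identities off the first three central moments. The only (harmless) divergence is in how the third central moment is obtained --- you use the i.i.d.\ one-hot decomposition and additivity of third cumulants, whereas the paper computes the raw third moment by applying the operators $q_i\partial_i$ to $|\mathbf{q}|^N$ in its Appendix~\ref{ap:thirdmomentum} and then recentres --- and both yield the same bracket.
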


\begin{proof}
Let $\mathbf{q}\in\slf^{n-1}$, $\balpha\in N S^{n-1}_N$ and consider the multinomial distribution given by
\begin{equation*}
 f(\mathbf{q},\boldsymbol{\alpha},N)=\frac{N!}{\alpha_1!\cdots \alpha_n!}\prod_{k=1}^nq_k^{\alpha_k},\quad \boldsymbol{\alpha}=(\alpha_1,\ldots,\alpha_n),\quad\sum_i\alpha_i=N\ .
\label{mlt_pmf}
\end{equation*}
Then, $\balpha$ is a vector of random variables with first moments given by
\begin{align*}
& \mathbb{E}[1]=1\ ,\\
& \mathbb{E}[\alpha_i]=Nq_i\ ,\\
& \mathbb{E}[(\alpha_i-Nq_i)(\alpha_j-Nq_j)]=\mathop{\mathrm{Cov}}(\alpha_i,\alpha_j)=N(\delta_{ij}q_i-q_iq_j)\ ,\\
& \mathbb{E}[(\alpha_i-Nq_i)(\alpha_j-Nq_j)(\alpha_k-Nq_k)]= N\left[q_i\delta_{ij}\delta_{kj}-\left(q_iq_k\delta_{ij}+q_iq_j\delta_{kj}+q_kq_j\delta_{ik}\right)+2q_iq_jq_k\right],
\end{align*}
where $\mathbb{E}[\cdot]$ is the expected value under the multinomial distribution.
See~\cite{Karlin_Taylor_intro} for the mean and covariance;  for the sake of completeness, we provide a derivation of the third moment in Appendix~\ref{ap:thirdmomentum}.

Now, note that $\Theta_{N,\Delta t}(\bx,\bx+z\boldsymbol{\tau})=f(\widetilde{\bx},N(\bx+z\boldsymbol{\tau}),N)$.
Therefore, $\balpha=N(\bx+z\btau)$ is a random vector that is multinomially distributed, and  upon substituting $\alpha$ in the multinomial moments --- with $\mathbf{q}=\tilde{\bx}$ --- all the identities follow after some manipulation.
\qed
\end{proof}

\section{Continuations of the discrete  model}
\label{sec:infinite}

The aim of this section is to obtain a differential equation that approximates the discrete evolution,
when the population is large ($N\to\infty$) and there is no time-separation between successive
generations ($\Delta t\to 0$). The relevant variables, $\mathbf{x}\in S^{n-1}$ and $t>0$ will
be forced to be continuous. 

The first four subsections will be devoted to the
development of three models, based on partial differential equations obtained from the Wright-Fisher process, when
$N\to\infty$ and $\Delta t\to 0$ (see equations~(\ref{convective_approximation}), (\ref{diffusive_approximation})
and (\ref{replicator_diffusion}), respectively). There is no ``right choice'' of the simplified model. As we
could expect, simpler models will have a restricted application. For example,
the model given by equation (\ref{convective_approximation}) is equivalent to a system of a ordinary differential equations; 
actually, it is exactly equivalent to the well-know replicator dynamics~\citep[see][]{HofbauerSigmund}.
On the other hand, the diffusive approximation, given by equation~(\ref{diffusive_approximation}), 
is a parabolic partial differential equation that is much simpler to
solve than the full model; in fact, explicit solutions are know using Gegenbauer polynomials~\citep{Ewens}.
Our focus will be on the replicator-diffusion approximation, equation~(\ref{replicator_diffusion}), which we 
expect to be valid uniformly in time.

Results known for the Wright-Fisher process, and
stated in Section~\ref{sec:finite} will guide the derivation, i.e., the choice of the
right thermodynamical limit. We start in Subsection~\ref{ssec:prelimcont} by the asymptotic expansion of the 
transition kernel in the negligible parameters (suitable combinations of $N$ and $\Delta t$); we plug this expansion into
the master equation~(\ref{discrete_evolution}) in Subsection~\ref{ssec:weak-discrete}. In Subsection~\ref{ssec:continuos_rep}, we construct
the continuous version of the discrete probability densities; in particular, we interpolate discrete probabilities in order to represent them by continuous
probability measures; these measures will be central when we finally pass to the limits in Subsection~\ref{ssec:passage_to_the_limit}, obtaining the various continuous
approximations of the discrete model. Finally, in Subsection~\ref{ssec:conservation}, we show that,
for every conservation law of the discrete process,  
there exists a corresponding  conservation law in the continuous model. 
As a by product, the final state of the continuous model shall be a linear superposition of homogeneous states
 (see Lemma~\ref{lem:classical} and compare it with Theorem~\ref{thm:final_state}).

\subsection{Preliminaries}
\label{ssec:prelimcont}

From a biological point of view, the most important assumption in this derivation is the so called weak selection principle
\begin{equation}\label{WSP}
\Psi^{(i)}_{\Delta t}(\mathbf{y})=1+\left(\Delta t\right)^\nu\psi^{(i)}(\mathbf{y}) ,
\end{equation}
where $\psi^{(i)}:S^{n-1}\to\R$ is a continuous function, and $\nu>0$ is a parameter yet to be specified.
In this case, we also have
\[
\bar\Psi_{\Delta t}(\mathbf{y})=1+\left(\Delta t\right)^\nu\bar\psi(\mathbf{y})\ ,
\]
where $\bar\psi(\bx)\bydef\sum_{i=1}^nx_i\psi^{(i)}(\bx)$.

As an immediate corollarium of Lemma~\ref{theta_moments} we have
\begin{cor}\label{cor:theta_moments_asymp}
 Assume the weak-selection principle given by equation~(\ref{WSP}). Then, equations for $\E_\Theta$ in Lemma~\ref{theta_moments} are given by
\begin{align*}
z\E_\Theta[\tau_i]&=x_i\left(\Delta t\right)^\nu\left(\psi^{(i)}(\bx)-\bar\psi(\bx)\right)+\bigO\left(\left(\Delta t\right)^{2\nu}\right)\ ,\\
z^2\E_\Theta[\tau_i\tau_j]&=
\frac{1}{N}\left(x_i\delta_{ij}-x_ix_j\right)+\bigO\left(N^{-1}\left(\Delta t\right)^\nu,\left(\Delta t\right)^{2\nu}\right)\\
z^3\E_\Theta[\tau_i\tau_j\tau_k]&=\bigO\left(\left(\Delta t\right)^{3\nu},N^{-1}\left(\Delta t\right)^\nu,N^{-2}\right)\ .
\end{align*}
\end{cor}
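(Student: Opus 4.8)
The plan is a direct substitution of the weak-selection principle~(\ref{WSP}) into the exact moment identities of Lemma~\ref{theta_moments}, followed by an expansion in the single small parameter $(\Delta t)^\nu$ and a bookkeeping of which terms survive at leading order.

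The first step is to expand $\tx_i$. Writing $\Psi^{(i)}_{\Delta t}(\bx)=1+(\Delta t)^\nu\psi^{(i)}(\bx)$ and $\bar\Psi_{\Delta t}(\bx)=1+(\Delta t)^\nu\bar\psi(\bx)$, and noting that the $\psi^{(i)}$ are continuous on the compact simplex $S^{n-1}$ --- hence bounded, so that $\bar\Psi_{\Delta t}$ is uniformly bounded away from zero once $\Delta t$ is small --- the Neumann expansion of $1/\bar\Psi_{\Delta t}(\bx)$ in powers of $(\Delta t)^\nu$ is valid uniformly in $\bx$. Multiplying out yields
\[
\tx_i = x_i + x_i(\Delta t)^\nu\big(\psi^{(i)}(\bx)-\bar\psi(\bx)\big) + \bigO\!\left((\Delta t)^{2\nu}\right),
\]
and in particular $\tx_i-x_i=\bigO((\Delta t)^\nu)$ uniformly on $S^{n-1}$.

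I would then feed this into the three lines of Lemma~\ref{theta_moments}. For $z\E_\Theta[\tau_i]=\tx_i-x_i$ the first claim is immediate. For the second, $(\tx_i-x_i)(\tx_j-x_j)=\bigO((\Delta t)^{2\nu})$, while $z^2=1/N$ and $\delta_{ij}\tx_i-\tx_i\tx_j=(x_i\delta_{ij}-x_ix_j)+\bigO((\Delta t)^\nu)$; collecting terms gives the stated leading term $\tfrac1N(x_i\delta_{ij}-x_ix_j)$ with error $\bigO(N^{-1}(\Delta t)^\nu,(\Delta t)^{2\nu})$. For the third, the cubic term $(\tx_i-x_i)(\tx_j-x_j)(\tx_k-x_k)$ is $\bigO((\Delta t)^{3\nu})$; each summand of the $z^2$-bracket carries exactly one factor $(\tx_\ell-x_\ell)=\bigO((\Delta t)^\nu)$ and a bounded remaining factor, so the $z^2$-contribution is $\bigO(N^{-1}(\Delta t)^\nu)$; and the $z^4$-bracket is a bounded polynomial in the $\tx_i\in[0,1]$, contributing $\bigO(N^{-2})$. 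Summing these gives $z^3\E_\Theta[\tau_i\tau_j\tau_k]=\bigO((\Delta t)^{3\nu},N^{-1}(\Delta t)^\nu,N^{-2})$.

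There is no genuine obstacle here; the only point that needs care is that all the $\bigO$ constants be uniform in $\bx\in S^{n-1}$, which is exactly what compactness of the simplex and continuity (hence uniform boundedness) of the fitness functions provide, and which is what legitimises the single Neumann expansion above.
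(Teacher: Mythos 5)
Your proof is correct and takes essentially the same route as the paper: the paper's own argument consists precisely of the expansion $\tx_i = x_i\bigl[1+(\Delta t)^\nu\bigl(\psi^{(i)}(\bx)-\bar\psi(\bx)\bigr)+\bigO\bigl((\Delta t)^{2\nu}\bigr)\bigr]$ followed by substitution into Lemma~\ref{theta_moments}. Your additional remarks on the uniformity of the $\bigO$ constants over the compact simplex are a welcome (if implicit in the paper) piece of care, but they do not change the substance of the argument.
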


\begin{proof}
 All equations follow from the fact that
\[
 \tx_i=x_i\frac{1+\left(\Delta t\right)^\nu\psi^{(i)}(\bx)}{1+\left(\Delta t\right)^\nu\bar\psi(\bx)}=
x_i\left[1+\left(\Delta t\right)^\nu\left(\psi^{(i)}(\bx)-\bar\psi(\bx)\right)+\bigO\left(\left(\Delta t\right)^{2\nu}\right)\right]
\]
\end{proof}
and from Lemma~\ref{theta_moments}.

\subsection{An asymptotic  weak-discrete formulation}
\label{ssec:weak-discrete}

We rewrite the master equation~\eqref{discrete_evolution} using displacements:
\begin{equation}\label{master_new}
 p(\mathbf{x},t+\Delta t,N)=\sum_{\mathbf{y} \in S^{n-1}_{N,\bx^-}}\Theta_{N,\Delta t}(\mathbf{x}-\mathbf{y}\to\mathbf{x})p(\mathbf{x}-\mathbf{y},t,N).
\end{equation}

We  use our information on moments of the process as follows:
\begin{prop}
 Let $g\in C^{3,1}(\Upsilon)$, where $\Upsilon$ is any open set such that $S^{n-1}\subset\Upsilon$, and consider its restriction to the simplex $\slf^{n-1}$.
Then, we have
\begin{align}\label{eq:weconcludethat}
&  \sum_{\mathbf{x}\in S^{n-1}_N}\left(p(\mathbf{x},t+\Delta t,N)-p(\mathbf{x},t,N)\right)g(\mathbf{x},t)\\
\nonumber
&\quad=
\sum_{\mathbf{x}\in S^{n-1}_N}p(\mathbf{x},t,N)\left[\frac{1}{2N}\sum_{i,j=1}^{n-1}x_i(\delta_{ij}-x_j)\partial^2_{ij}g(\mathbf{x},t)
 +\left(\Delta t\right)^\nu\sum_{j=1}^{n-1}x_j\partial_{x_j}g(\mathbf{x},t)(\psi^{(j)}(\mathbf{x})-\bar\psi(\mathbf{x}))\right]\\
\nonumber
&\qquad\qquad+\bigO\left(N^{-2},\left(\Delta t\right)^{2\nu},N^{-1}\left(\Delta t\right)^\nu \right).
\end{align}
\end{prop}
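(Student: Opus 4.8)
The plan is to expand $p(\mathbf{x}, t+\Delta t, N)$ using the master equation in the displacement form~\eqref{master_new}, Taylor-expand the test function $g$ around $\mathbf{x}$, and then use the moment estimates of Corollary~\ref{cor:theta_moments_asymp} to identify which terms survive in the joint limit. Concretely, starting from~\eqref{master_new}, I would write, for fixed $\mathbf{x}$,
\[
p(\mathbf{x},t+\Delta t,N) = \sum_{z\btau \in S^{n-1}_{N,\bx^-}} \Theta_{N,\Delta t}(\bx - z\btau \to \bx)\, p(\bx - z\btau, t, N),
\]
multiply by $g(\bx,t)$, sum over $\bx \in S^{n-1}_N$, and then perform the change of summation variable $\bx \mapsto \bx + z\btau$ in order to transfer the dependence onto $g$. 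This is the standard "discrete integration by parts" / reindexing move: it turns $\sum_{\bx} g(\bx) \sum_{\btau} \Theta(\bx - z\btau \to \bx) p(\bx - z\btau)$ into $\sum_{\bx} p(\bx) \sum_{\btau} \Theta(\bx \to \bx + z\btau) g(\bx + z\btau)$, up to boundary terms in the shifted summation domains that I will need to argue are negligible (they involve states within $O(z)$ of $\partial S^{n-1}$, and the relevant Taylor remainders there are still controlled because $g \in C^{3,1}(\Upsilon)$ on an open neighbourhood $\Upsilon \supset S^{n-1}$).

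Next I would Taylor-expand $g(\bx + z\btau, t)$ to third order in $z\btau$:
\[
g(\bx + z\btau, t) = g(\bx,t) + z\sum_{i=1}^{n-1} \tau_i \partial_i g(\bx,t) + \frac{z^2}{2}\sum_{i,j=1}^{n-1}\tau_i\tau_j \partial^2_{ij}g(\bx,t) + \frac{z^3}{6}\sum_{i,j,k}\tau_i\tau_j\tau_k \partial^3_{ijk}g(\tx,t),
\]
for some intermediate point, using that the $(n-1)$ free coordinates parametrise the simplex and that on the tangent directions $\sum \tau_i = 0$ is already built into $\mathcal{S}_{\bx,z}$. Applying $\E_\Theta$ term by term and invoking Corollary~\ref{cor:theta_moments_asymp}: the zeroth-order term gives $g(\bx,t)$, which cancels against $p(\bx,t,N)g(\bx,t)$ on the left-hand side; the first-order term yields $z\E_\Theta[\tau_i]\partial_i g = (\Delta t)^\nu x_i(\psi^{(i)}(\bx) - \bar\psi(\bx))\partial_i g + O((\Delta t)^{2\nu})$; the second-order term yields $\tfrac{1}{2}z^2\E_\Theta[\tau_i\tau_j]\partial^2_{ij}g = \tfrac{1}{2N}(x_i\delta_{ij} - x_ix_j)\partial^2_{ij}g + O(N^{-1}(\Delta t)^\nu, (\Delta t)^{2\nu})$; and the third-order term is $O(z^3\E_\Theta[\tau_i\tau_j\tau_k]) = O((\Delta t)^{3\nu}, N^{-1}(\Delta t)^\nu, N^{-2})$, which is absorbed into the stated error bound. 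Summing the resulting pointwise identity against $p(\bx,t,N) \geq 0$ over $\bx \in S^{n-1}_N$ and using $\sum_{\bx} p(\bx,t,N) = 1$ to keep the error terms uniformly bounded then gives exactly~\eqref{eq:weconcludethat}.

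The main obstacle I anticipate is the careful bookkeeping of the error terms together with the mismatch between the summation domains $S^{n-1}_{N,\bx^-}$, $S^{n-1}_{N,\bx^+}$ and $\mathcal{S}_{\bx,z}$ after reindexing: one must check that extending or restricting these domains, and that the Taylor remainder $\tfrac{z^3}{6}\sum \tau_i\tau_j\tau_k\,\partial^3_{ijk}g(\tx,t)$ evaluated at intermediate points possibly slightly outside $S^{n-1}$, contributes only within the claimed $O(\cdot)$. This is why the hypothesis $g \in C^{3,1}(\Upsilon)$ on an open $\Upsilon \supset S^{n-1}$ is used: it guarantees a uniform bound on the third derivatives of $g$ near the boundary, so that the remainder, weighted by $\E_\Theta[|\tau_i\tau_j\tau_k|]$ — which is of the same order as $\E_\Theta[\tau_i\tau_j\tau_k]$ by the moment structure — is genuinely $O((\Delta t)^{3\nu}, N^{-1}(\Delta t)^\nu, N^{-2})$ uniformly in $\bx$, and hence survives summation against the probability vector $p(\cdot,t,N)$. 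Everything else is the routine algebra of substituting the moment asymptotics into the Taylor expansion.
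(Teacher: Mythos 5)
Your proposal follows essentially the same route as the paper's proof: multiply the master equation by $g(\bx,t)$, reindex the double sum so that the displacement falls on $g$, Taylor-expand $g(\bx+z\btau,t)$ to third order, and substitute the moment asymptotics of Corollary~\ref{cor:theta_moments_asymp}. Two small corrections: the reindexing is an exact change of variables on the set of pairs $(\bx,\by)$ with $\bx\in S^{n-1}_N$ and $\bx-\by\in S^{n-1}_N$, so there are no boundary terms to argue away; and your assertion that $\E_\Theta[|\tau_i\tau_j\tau_k|]$ has the same order as the signed moment $\E_\Theta[\tau_i\tau_j\tau_k]$ is false --- by Jensen, $\E_\Theta[|\tau_i|^3]\ge\left(\E_\Theta[\tau_i^2]\right)^{3/2}$, which is $O(1)$, so the Taylor remainder contributes $O\left(N^{-3/2}\right)$ rather than $O\left(N^{-2}\right)$ --- although the paper's own proof is equally terse at this point (it only bounds $|R|\le C\|\tau\|^3$ and appeals to the signed moments), and the weaker bound is still harmless for the limits taken later since $\mu\ge 1$.
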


\begin{proof}
On multiplying equation~(\ref{master_new}) by $g(\mathbf{x},t)$  and summing over $S^{n-1}_N$, we find that:
\begin{align*}
& \sum_{\mathbf{x}\in S^{n-1}_N}p(\mathbf{x},t+\Delta t,N)g(\mathbf{x},t)= 
\sum_{\mathbf{x}\in S^{n-1}_N}\sum_{\mathbf{y}\in S^{n-1}_{N,\mathbf{x}^-}}\Theta_{N,\Delta t}(\mathbf{x}-\mathbf{y}\to\mathbf{x})p(\mathbf{x}-\mathbf{y},t,N)g(\mathbf{x},t)\\
&\quad= \sum_{\mathbf{x}\in S^{n-1}_N}\sum_{\mathbf{y}\in S^{n-1}_{N,\mathbf{x}^+}}\Theta_{N,\Delta t}(\mathbf{x}\to\mathbf{x}+\mathbf{y})p(\mathbf{x},t,N)g(\mathbf{x}+\mathbf{y},t)\\
&\quad=\sum_{\mathbf{x}\in S^{n-1}_N}p(\mathbf{x},t,N)\sum_{z\boldsymbol{\tau}\in S^{n-1}_{N,\mathbf{x}^+}}\Theta(\bx\to\bx +z\boldsymbol{\tau})g(\mathbf{x}+z\boldsymbol{\tau},t)\\
&\quad=\sum_{\mathbf{x}\in S^{n-1}_N}p(\mathbf{x},t,N)\sum_{z\boldsymbol{\tau}\in S^{n-1}_{N,\mathbf{x}^+}}\Theta(\bx\to\bx +z\boldsymbol{\tau})
\left[ g(\mathbf{x},t)+z\sum_{j=1}^{n-1}\tau_j\partial_{x_j}g(\mathbf{x},t)+
\frac{z^2}{2}\sum_{k,l=1}^{n-1}\tau_k\tau_l\partial_{x_kx_l}^2g(\mathbf{x},t)+z^3R(\mathbf{x},\boldsymbol{\tau},t,z)
\right]. 
\end{align*}
where there is a constant $C$, depending only on $g$, such that
\[
|R(\bx,\btau,t,z)|\leq C\|\tau\|^3.
\]

Using Corollary~\ref{cor:theta_moments_asymp}, we obtain the result. 
\qed
\end{proof}

Equation~\eqref{eq:weconcludethat} can be seen as discrete weak formulation for $p(\bx,t,N)$ in space only, and thus
any limiting argument would require some regularity assumption on $p(\bx,t,N)$ in $t$.  In order to circumvent such assumptions,
we need a full discrete weak formulation:

\begin{prop}
 Let $T=M\Delta t$, where $M$ is some fixed positive integer, and let $g$ be an admissible test function, 
with support in $S^{n-1}\times[0,T]$. Let
\[
 \mathsf{T}=\{k\Delta t\},\quad  k=0,\ldots,M-1.
\]
Then we have that
\begin{align}\label{eq:weak_discrete}
&  -\sum_{t\in\mathsf{T}}\sum_{\mathbf{x}\in S^{n-1}_N}p(\mathbf{x},t,N)\left(g(\bx,t+\Delta t)-g(\mathbf{x},t)\right)
-\sum_{\mathbf{x}\in S^{n-1}_N}p(\mathbf{x},t,N)g(\bx,0)\\
\nonumber
&\quad=
\sum_{t\in\mathsf{T}}\sum_{\mathbf{x}\in S^{n-1}_N}p(\mathbf{x},t,N)\left[\frac{1}{2N}\sum_{i,j=1}^{n-1}x_i(\delta_{ij}-x_j)\partial^2_{ij}g(\mathbf{x},t)
 +\left(\Delta t\right)^\nu\sum_{j=1}^{n-1}x_j\partial_{x_j}g(\mathbf{x},t)(\psi^{(j)}(\mathbf{x})-\bar\psi(\mathbf{x}))\right]\\
\nonumber
&\qquad\qquad+\bigO\left(N^{-2}\left(\Delta t\right)^{-1},\left(\Delta t\right)^{2\nu-1},N^{-1}\left(\Delta t\right)^{\nu-1}\right)\nonumber.
\end{align}
\end{prop}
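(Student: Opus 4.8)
The plan is to derive the full discrete weak formulation by summing the spatial identity~\eqref{eq:weconcludethat} over time steps and performing a discrete summation-by-parts (Abel summation) in the time variable. First I would fix an admissible test function $g$ with support in $S^{n-1}\times[0,T]$, so in particular $g(\bx,T)=g(\bx,M\Delta t)=0$ for all $\bx$, and note that for each $t\in\mathsf{T}$ the previous proposition applies to the function $\bx\mapsto g(\bx,t)$ (which lies in $C^{3,1}$ of a neighbourhood of the simplex). Summing~\eqref{eq:weconcludethat} over $t\in\mathsf{T}=\{0,\Delta t,\dots,(M-1)\Delta t\}$ produces, on the left-hand side, the telescoping-type expression $\sum_{t\in\mathsf{T}}\sum_{\bx}\bigl(p(\bx,t+\Delta t,N)-p(\bx,t,N)\bigr)g(\bx,t)$, and on the right-hand side precisely the double sum over $\mathsf{T}$ and $S^{n-1}_N$ appearing in~\eqref{eq:weak_discrete}.

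The key step is to rewrite the left-hand side. Using the discrete Abel summation identity
\[
\sum_{t\in\mathsf{T}}\bigl(p(\bx,t+\Delta t)-p(\bx,t)\bigr)g(\bx,t)
= -\sum_{t\in\mathsf{T}}p(\bx,t)\bigl(g(\bx,t+\Delta t)-g(\bx,t)\bigr) + p(\bx,T)g(\bx,T)-p(\bx,0)g(\bx,0),
\]
together with $g(\bx,T)=0$, the left-hand side becomes $-\sum_{t\in\mathsf{T}}\sum_{\bx}p(\bx,t,N)\bigl(g(\bx,t+\Delta t)-g(\bx,t)\bigr) - \sum_{\bx}p(\bx,0,N)g(\bx,0)$, which is exactly the left-hand side of~\eqref{eq:weak_discrete} (the displayed statement writes $p(\bx,t,N)g(\bx,0)$ in the boundary term, which is to be read at $t=0$). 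So this is just a bookkeeping rearrangement once the summation-by-parts is in place.

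Finally I would track the error terms. Each application of~\eqref{eq:weconcludethat} contributes an error $\bigO\bigl(N^{-2},(\Delta t)^{2\nu},N^{-1}(\Delta t)^\nu\bigr)$ uniformly in $t$ (the implied constant depends only on $\|g\|_{C^{3,1}}$ and the fitness functions, both fixed), and there are $M=T/\Delta t$ such time slices; since $p(\cdot,t,N)$ is a probability distribution the spatial sums are uniformly bounded by $1$, so summing $M$ copies multiplies the error by $T/\Delta t$, giving $\bigO\bigl(N^{-2}(\Delta t)^{-1},(\Delta t)^{2\nu-1},N^{-1}(\Delta t)^{\nu-1}\bigr)$ as claimed. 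The only mildly delicate point — and the one I would be most careful about — is confirming that the boundary contributions at $t=0$ and $t=T$ come out with the right signs and that the support assumption genuinely kills the $t=T$ term; everything else is routine telescoping and a uniform-in-$t$ estimate, so there is no substantial obstacle here.
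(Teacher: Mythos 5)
Your argument is correct and follows essentially the same route as the paper's own proof (given in Appendix~B): sum the single-step identity over $\mathsf{T}$, perform the discrete summation by parts in time using $g(\cdot,T)=0$ to kill the terminal boundary term, and absorb the $M=\bigO((\Delta t)^{-1})$ accumulated per-step errors into the stated remainder, the spatial sums being controlled by $\sum_{\bx}p(\bx,t,N)=1$. The only cosmetic remark is that your Abel-summation identity, like the paper's, pairs $p_k$ rather than $p_{k+1}$ with the increment $g_{k+1}-g_k$; the discrepancy is a sum of products of increments and is of higher order than the retained error terms, so nothing is lost.
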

\begin{proof}
Sum \eqref{eq:weconcludethat} over $\mathsf{T}$, and estimate the error term by its total sum, taking into account that
there are $\bigO((\Delta t)^{-1})$ terms in this sum. This shows the right hand side of \eqref{eq:weak_discrete}.
To obtain the left hand side, we perform a summation by parts and use that $g(\bx,T)=0$. See appendix~\ref{ap:A} for details.
\qed
\end{proof}

\subsection{Continuous representation}
\label{ssec:continuos_rep}

The aim is now to obtain a continuous version of \eqref{eq:weak_discrete}, but \textit{without taking any limits yet}.
We first need some preliminary definitions:
\begin{deff}[Piecewise time interpolation]\label{def:pti}
 Let $\mathsf{T}$ be a set of sampling times as above, and let $\mathsf{T}_0$ be a set of times such that
for each $\bar{t}\in\mathsf{T}$, there exists a unique $\xi\in\mathsf{T}_0$ such that  $\xi\in(\bar{t},\bar{t}+\Delta t)$.
Let $g$ be an admissible test function with support in $S^{n-1}\times[0,T]$. Observe that under the assumptions on the
sets $\mathsf{T}$ and $\mathsf{T}_0$, for each $t\in[0,T]$ there exists
a unique $\bar{t}\in\mathsf{T}$ such that $t\in[\bar{t},\bar{t}+\Delta t)$, and a unique $\xi\in (\bar{t},\bar{t}+\Delta t)$.
 With this in mind, we define:
\[
 \hat{g}(\bx,t)=g(\bx,\bar{t}),\quad  t\in[\bar{t},\bar{t}+\Delta t),\quad \bar{t}\in\mathsf{T},
\]
and
\[
\dhat {g}(\bx,t)=g(\bx,\xi),\quad t\in[\bar{t},\bar{t}+\Delta t),\quad  \xi\in(\bar{t},\bar{t}+\Delta t),\quad \bar{t}\in\mathsf{T}\text{ and }\xi\in\mathsf{T}_0.
\]
\end{deff}
\begin{rmk}
For fixed $\bx$, we have on one hand that $\hat{g}(\bx,t)$ is just freezing the value of $g$ on $[\bar{t},\bar{t}+\Delta t)$ to be the
value of $g(\bx,\bar{t})$. On the other hand, $\dhat{g}(\bx,t)$ is freezing the value of $g$ on the same interval to be the value
of $g(\bx,\xi)$, with $\xi\in(\bar{t},\bar{t}+\Delta t)$. The natural choice for $\xi$ will arise, in the present
context, from applications of the mean value theorem to $g$ over the interval $[\bar{t},\bar{t}+\Delta t]$.
\end{rmk}

\begin{deff}[Radonmisation (sic) of discrete densities]\label{def:rdd}
 Let $p(\bx,t,N)$ be a probability density defined no $S^{n-1}_N\times\mathsf{T}$. Let $\delta_{\bx}$ denote
the atomic measure at $\bx$. We define
\[
 p_N(\bx,t)=\sum_{\by\in S^{n-1}_N}p(\by,\bar{t},N)\delta_{\by}(\bx),\quad t\in[\bar{t},\bar{t}+\Delta t).
\]
\end{deff}

With these definitions we have the following  result
\begin{prop}\label{thm:cont_discr}
 Let $g$ be an admissible test function, let $N^{-1}=\kappa\left(\Delta t\right)^\mu$, where $\mu>0$ is a second parameter
yet to be specified, and let 
$p_{\Delta t}(\bx,t)=p_{\kappa^{-1}\left(\Delta t\right)^{-\mu}}(\bx,t)$.
Then there exists a set $\mathsf{T}_0$ as in Definition~\ref{def:pti}, such that
\begin{align}
&-\int_0^\infty\int_{S^{n-1}} p_{\Delta t}(\bx,t)\partial_t\dhat{g}(\bx,t)\,\d\bx\,\d t-\int_{S^{n-1}}p_{\Delta t}(\bx,0)\hat{g}(\bx,0)\,\d\bx\,\d t\nonumber\\
&\quad=\frac{\kappa\left(\Delta t\right)^{\mu-1}}{2}\int_0^{\infty}\int_{S^{n-1}}p_{\Delta t}(\bx,t)\left(\sum_{i,j=1}^{n-1}x_i(\delta_{ij}-x_j)\partial^2_{ij}\hat{g}(\mathbf{x},t)\right)\,\d\bx\,\d t \nonumber \\
&\qquad+\left(\Delta t\right)^{\nu-1} \int_0^{\infty}\int_{S^{n-1}}p_{\Delta t}(\bx,t)\left[\sum_{j=1}^{n-1}x_j\left(\psi^{(j)}(\bx)-\bar\psi(\bx)\right)\partial_{j}\hat{g}(\mathbf{x},t)\right]\d\mathbf{x} \,\d t\label{eqn:cont_discr}\\
&\qquad+\bigO\left(\left(\Delta t\right)^{2\mu-1},\left(\Delta t\right)^{\nu+\mu-1},\left(\Delta t\right)^{2\nu-1}\right).\nonumber
\end{align}

\end{prop}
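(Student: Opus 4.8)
The plan is to convert the discrete weak formulation \eqref{eq:weak_discrete} into a continuous one by recognising that each discrete sum is already a Lebesgue integral against the Radonmised measure $p_{\Delta t}$. First I would take the identity \eqref{eq:weak_discrete}, already specialised to the scaling $N^{-1}=\kappa(\Delta t)^\mu$, and rewrite every spatial sum $\sum_{\bx\in S^{n-1}_N} p(\bx,t,N)\,\phi(\bx)$ as $\int_{S^{n-1}}\phi(\bx)\,p_N(\bx,t)\,\d\bx$, which is exact by Definition~\ref{def:rdd} since $p_N$ is the atomic measure $\sum_{\by}p(\by,\bar t,N)\delta_{\by}$. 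This immediately turns the two spatial bracket terms on the right-hand side of \eqref{eq:weak_discrete} into the stated spatial integrals against $\hat g$, with the prefactors $\tfrac{1}{2N}=\tfrac{\kappa(\Delta t)^\mu}{2}$ and $(\Delta t)^\nu$ respectively; dividing through by $\Delta t$ (equivalently, absorbing the one factor of $\Delta t$ coming from the time sum into an integral) produces the exponents $\mu-1$ and $\nu-1$, and correspondingly shifts the error exponents from those in \eqref{eq:weak_discrete} to $(2\mu-1,\nu+\mu-1,2\nu-1)$ after substituting $N^{-1}=\kappa(\Delta t)^\mu$.

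The second step handles the time direction, which is where the set $\mathsf{T}_0$ enters. The left-hand side of \eqref{eq:weak_discrete} contains the forward difference $g(\bx,t+\Delta t)-g(\bx,t)$ summed over $t\in\mathsf{T}$ against $p(\bx,t,N)$. I would apply the mean value theorem in $t$ to $g(\bx,\cdot)$ on each interval $[\bar t,\bar t+\Delta t]$: there exists $\xi=\xi(\bx,\bar t)\in(\bar t,\bar t+\Delta t)$ with $g(\bx,\bar t+\Delta t)-g(\bx,\bar t)=\Delta t\,\partial_t g(\bx,\xi)$. Since $g$ is an admissible (smooth, compactly supported) test function, one can in fact choose the $\xi$'s to depend only on $\bar t$ — this is the content of selecting the set $\mathsf{T}_0$ in Definition~\ref{def:pti}; more carefully, one integrates in $\bx$ first and uses a measurable-selection/averaged mean value argument so that a single $\xi\in(\bar t,\bar t+\Delta t)$ works uniformly, which is legitimate because $p_N(\cdot,\bar t)$ is a fixed probability measure. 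With this choice, $\partial_t g(\bx,\xi)=\partial_t\dhat g(\bx,t)$ for all $t\in[\bar t,\bar t+\Delta t)$ by the definition of $\dhat g$, and $\Delta t\sum_{t\in\mathsf{T}}(\cdots)=\int_0^\infty(\cdots)\,\d t$ since $p_{\Delta t}(\bx,t)$ is constant in $t$ on each such interval. Thus the first left-hand term of \eqref{eq:weak_discrete}, after dividing by $\Delta t$, becomes exactly $-\int_0^\infty\int_{S^{n-1}}p_{\Delta t}(\bx,t)\,\partial_t\dhat g(\bx,t)\,\d\bx\,\d t$; the initial-data term becomes $-\int_{S^{n-1}}p_{\Delta t}(\bx,0)\hat g(\bx,0)\,\d\bx$ directly, noting $\hat g(\bx,0)=g(\bx,0)$.

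The third step is just bookkeeping on the error: the error term in \eqref{eq:weak_discrete} is $\bigO\big(N^{-2}(\Delta t)^{-1},(\Delta t)^{2\nu-1},N^{-1}(\Delta t)^{\nu-1}\big)$; substituting $N^{-1}=\kappa(\Delta t)^\mu$ gives $\bigO\big((\Delta t)^{2\mu-1},(\Delta t)^{2\nu-1},(\Delta t)^{\nu+\mu-1}\big)$ as claimed (the constant hidden in $\bigO$ now also depends on $\kappa$, which is harmless). Reassembling the three steps yields \eqref{eqn:cont_discr} exactly, with no limit taken — consistent with the stated aim. The main obstacle is the second step: justifying that a single interpolation time $\xi\in(\bar t,\bar t+\Delta t)$ (hence a well-defined set $\mathsf{T}_0$) can be chosen so that $\int_{S^{n-1}}\big(g(\bx,\bar t+\Delta t)-g(\bx,\bar t)\big)p_N(\bx,\bar t)\,\d\bx=\Delta t\int_{S^{n-1}}\partial_t g(\bx,\xi)\,p_N(\bx,\bar t)\,\d\bx$; this follows from the intermediate value theorem applied to the continuous function $\xi\mapsto\int_{S^{n-1}}\partial_t g(\bx,\xi)\,p_N(\bx,\bar t)\,\d\bx$ on $[\bar t,\bar t+\Delta t]$ together with the one-dimensional mean value theorem for the integrated difference, but it must be arranged so that the chosen $\xi$'s for distinct $\bar t\in\mathsf{T}$ are pairwise distinct and none coincides with a point of $\mathsf{T}$, so that $\mathsf{T}_0$ genuinely satisfies Definition~\ref{def:pti}; this can always be done since there are finitely many intervals and the selection on each open interval has a free choice in a nondegenerate range.
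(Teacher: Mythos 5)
Your proposal is correct and follows essentially the same route as the paper's proof: exact conversion of the spatial sums into integrals against the Radonmised measure, exact conversion of the piecewise-constant time sums into time integrals with the factor $(\Delta t)^{-1}$, substitution of the scaling $N^{-1}=\kappa(\Delta t)^{\mu}$ into the prefactors and error terms, and the mean value theorem in $t$ to produce the set $\mathsf{T}_0$. Your extra care in choosing a single $\xi$ per interval via the integrated (intermediate-value) form of the mean value theorem is a legitimate sharpening of the paper's one-line appeal to the mean value theorem, but it does not change the argument.
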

\begin{proof}
For the right hand side, we observe that $\hat{g}(\bx,t)=g(\bx,t)$ for $\bx\in S_N^{n-1}$ and $k\Delta  t \leq t < (k+1)\Delta t$,
$k=0,1,\ldots$, and that this also holds for all partial derivatives of $g$ not involving $t$. On using the definition of $p_{\Delta t}$, we readily obtain the equivalence between the sums over $S_N^{n-1}$ and the integrals in $\bx$. For the time integrals, we point out that both $p_{\Delta t}(\bx,t)$, $\hat{g}(\bx,t)$ and similarly for the derivatives of $g$ are piecewise constant in $t$. Hence the summation over time can be exactly converted into a time integral with a factor of $(\Delta t)^{-1}$. 
As for the left hand side, apply the mean value theorem to $g(\bx,\cdot)$ to get the result and the set $\mathsf{T}_0$.
 \qed
\end{proof}

\begin{rmk}
 The reader is cautioned that, although \eqref{eqn:cont_discr} has a remarkable resemblance with a weak formulation,
it is not quite so, since the prospective test functions $\hat{g}$ and $\dhat{g}$ are not test functions in the usual sense.
\end{rmk}

\subsection{Passage to the limit}
\label{ssec:passage_to_the_limit}

We now deal with the limit $\Delta t\to0$ in \eqref{eqn:cont_discr}. 

\begin{thm}\label{thm:weak_conv}
 Under the same assumptions of Proposition~\ref{thm:cont_discr}, we have that, for any choice of parameters $\mu$ and $\nu$,
 there exists $p\in L^\infty([0,T],\bm))$, where $\bm$ is the set of positive measures of bounded variation on $S^{n-1}$, such that $p_{\Delta t}(\bx,t)\to p(\bx,t)$ weakly as $\Delta t\to0$. Moreover, the following
limits also hold:
\begin{align*}
& \int_0^\infty\int_{S^{n-1}} p_{\Delta t}(\bx,t)\partial_t\dhat{g}(\bx,t)\,\d\bx\,\d t \to \int_0^\infty\int_{S^{n-1}} p(\bx,t)\partial_tg(\bx,t)\,\d\bx\,\d t\\
&\int_0^{\infty}\int_{S^{n-1}}p_{\Delta t}(\bx,t)\left(\sum_{i,j=1}^{n-1}x_i(\delta_{ij}-x_j)\partial^2_{ij}\hat{g}(\mathbf{x},t)\right)\,\d\bx\,\d t\\
&\qquad\qquad\to  
\int_0^{\infty}\int_{S^{n-1}}p(\bx,t)\left(\sum_{i,j=1}^{n-1}x_i(\delta_{ij}-x_j)\partial^2_{ij}g(\mathbf{x},t)\right)\,\d\bx\,\d t\\
& \int_0^{\infty}\int_{S^{n-1}}p_{\Delta t}(\mathbf{x},t)\left[\sum_{j=1}^{n-1}x_j\left(\psi^{(j)}(\mathbf{x})-\bar\psi(\mathbf{x})\right)\partial_{j}\hat{g}(\mathbf{x},t)\right]\d\bx\,\d t \\
&\qquad\qquad\to
 \int_0^{\infty}\int_{S^{n-1}}p(\mathbf{x},t)\left[\sum_{j=1}^{n-1}x_j\left(\psi^{(j)}(\mathbf{x})-\bar\psi(\mathbf{x})\right)\partial_{j}g(\mathbf{x},t)\right]\d\bx\,\d t
\end{align*}

\end{thm}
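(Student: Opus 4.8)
The plan is a weak-compactness argument: first extract a weak-$*$ limit of the Radonmised densities $p_{\Delta t}$ from a uniform mass bound, and then pass to the limit in each of the three bilinear terms on the right of \eqref{eqn:cont_discr} by pairing this weak-$*$ convergence against the \emph{strong} convergence of the hatted functions to the genuine derivatives of $g$. Throughout one works with the weak-$*$ topology on measure-valued time-dependent maps, since no bound on $p_{\Delta t}$ beyond its total mass is at hand.

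\textbf{Compactness.} By Definition~\ref{def:rdd}, $p_{\Delta t}(\cdot,t)$ is the push-forward onto $S^{n-1}$ of the probability density $p(\cdot,\bar t,N)$ on $S^{n-1}_N$, hence a nonnegative measure with $\|p_{\Delta t}(\cdot,t)\|_{\mathsf{BM}}=\sum_{\by\in S^{n-1}_N}p(\by,\bar t,N)=1$; thus $\{p_{\Delta t}\}$ is bounded by $1$ in $L^\infty([0,T],\bm)$. Since $S^{n-1}$ is compact, $C(S^{n-1})$, and therefore $L^1([0,T],C(S^{n-1}))$, is separable, and the latter has dual $L^\infty([0,T],\mathsf{BM}(S^{n-1}))$ (weak-$*$ measurable representatives). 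By the sequential Banach--Alaoglu theorem we may pick $\Delta t_k\to 0$ and $p\in L^\infty([0,T],\mathsf{BM}(S^{n-1}))$ with $p_{\Delta t_k}\rightharpoonup p$ weakly-$*$; nonnegativity is preserved (test against $0\le\varphi\in C(S^{n-1})$) and $\|p(\cdot,t)\|_{\mathsf{BM}}\le 1$ by lower semicontinuity of the total variation, so $p\in L^\infty([0,T],\bm)$. A priori $p$ depends on the subsequence; once the limiting weak problem is shown to admit at most one solution (done later), the whole family converges, which is the statement as quoted.

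\textbf{Convergence of the hatted functions and passage to the limit.} An admissible $g$ lies in $C^{3,1}$ with support in $S^{n-1}\times[0,T]$, so $\partial_t g$, $\partial^2_{ij}g$ and $\partial_j g$ are uniformly continuous there, with a common modulus $\omega$. The operations of Definition~\ref{def:pti} shift the time argument by at most $\Delta t$, so $\sup_{\bx}\bigl|\partial_t\dhat g(\bx,t)-\partial_t g(\bx,t)\bigr|\le\omega(\Delta t)$ and likewise $\|\partial^2_{ij}\hat g(\cdot,t)-\partial^2_{ij}g(\cdot,t)\|_{C(S^{n-1})}\le\omega(\Delta t)$, $\|\partial_j\hat g(\cdot,t)-\partial_j g(\cdot,t)\|_{C(S^{n-1})}\le\omega(\Delta t)$, uniformly in $t$; multiplying by the fixed continuous coefficients $x_i(\delta_{ij}-x_j)$ and $x_j(\psi^{(j)}(\bx)-\bar\psi(\bx))$ (continuous because the fitnesses are) does not spoil this. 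Hence the three integrands paired against $p_{\Delta t}$ converge, uniformly in $t$ and so strongly in $L^1([0,T],C(S^{n-1}))$, to the corresponding integrands with $g$ in place of $\hat g,\dhat g$. Each integral is a duality pairing $\langle p_{\Delta t},\Phi_{\Delta t}\rangle$; writing $\langle p_{\Delta t_k},\Phi_{\Delta t_k}\rangle-\langle p,\Phi\rangle=\langle p_{\Delta t_k},\Phi_{\Delta t_k}-\Phi\rangle+\langle p_{\Delta t_k}-p,\Phi\rangle$, the first term is bounded by $\|\Phi_{\Delta t_k}-\Phi\|_{L^1([0,T],C(S^{n-1}))}\to 0$ and the second tends to $0$ by weak-$*$ convergence, which gives the three displayed limits along the subsequence. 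I expect the main obstacle to be getting the functional-analytic frame of the Compactness step exactly right — identifying $L^\infty([0,T],\bm)$ with the appropriate dual space and verifying the uniform total-mass bound — since no time-regularity is available; the only other delicate point is that, as warned after Proposition~\ref{thm:cont_discr}, $\hat g$ and $\dhat g$ are \emph{not} test functions, so the strong $L^1([0,T],C(S^{n-1}))$ convergence used above has to be checked directly rather than obtained from a density argument.
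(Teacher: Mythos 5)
Your proposal is correct and follows essentially the same route as the paper: a weak-$*$ compactness argument from the uniform total-mass bound (the paper cites tightness of Radon measures à la Prokhorov, you invoke Banach--Alaoglu on the dual of $L^1([0,T],C(S^{n-1}))$ — the same mechanism on a compact state space), followed by passing to the limit in each pairing using the uniform convergence $\|h-\hat{h}\|_\infty,\|h-\dhat{h}\|_\infty\to 0$. Your write-up merely makes explicit the duality framework and the two-term splitting that the paper's proof leaves implicit.
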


\begin{proof}
 From the tightness of Radon measures, cf. \cite{Billingsley:1999},  we have that there exists a sequence $\Delta t_n>0$, with $\Delta t_n \downarrow0$ as $n\to\infty$, and  $p\in L^\infty([0,T],\bm)$, such that
 \[
 \lim_{n\to\infty}p_{\Delta t_n}(\bx,t)=p(\bx,t).
 \] 
The convergence of the integrals follows from the weak convergence of $p_{\Delta t_n}\to p$, and from the fact that for a
continuous function $h$, we have 
\[
 \lim_{\Delta t\to0}\|h-\hat{h}\|_\infty=\lim_{\Delta t\to0}\|h-\dhat{h}\|_\infty=0.
\]
 \qed
\end{proof}

If either $\mu<1$ or $\nu<1$, we can multiply \eqref{eqn:cont_discr} by $(\Delta t)^{-\min(\nu-1,\mu-1)}$. It is then
easily verified that the error term vanishes in the limit, as well as the term with a time derivative. Thus, in this case,
we obtain stationary limits governed by the steady version of the equations derived below. 
Now let us assume that $\mu,\nu\ge 1$.  It is easily verified that the error term will be small. If both $\mu,\nu>1$, we have stationary solutions given by the initial condition. 

The other cases are as follows:
\begin{thm}\label{thm:weak_limits}
There exists $p\in L^\infty([0,T];\bm)$ such that
\begin{description}
 \item If $\mu>1$, $\nu=1$, the \textsl{convective or drift approximation}:
\begin{align}
 &-\int_0^{\infty}\int_{S^{n-1}}p(\bx,t)\partial_tg(\bx,t)\,\d\bx\,\d t - \int_{S^{n-1}}p(\bx,t_0)g(\bx,t_0)\,\d\bx  \nonumber\\
&\quad=\int_0^{\infty}\int_{S^{n-1}}p(\mathbf{x},t)\left[\sum_{j=1}^{n-1}x_j\left(\psi^{(j)}(\mathbf{x})-\bar\psi(\mathbf{x})\right)\partial_{j}g(\mathbf{x},t)\right]\d\mathbf{x}\,\d t.\label{weak:convective}
\end{align}
\item If $\mu=1$, $\nu>1$, the \textsl{diffusive approximation}
\begin{align}
 &-\int_0^{\infty}\int_{S^{n-1}}p(\bx,t)\partial_tg(\bx,t)\,\d\bx\,\d t - \int_{S^{n-1}}p(\bx,t_0)g(\bx,t_0)\,\d\bx \nonumber\\
&\quad=\frac{\kappa}{2}\int_0^{\infty}\int_{S^{n-1}}p(\bx,t)\left(\sum_{i,j=1}^{n-1}x_i(\delta_{ij}-x_j)\partial^2_{ij}g(\mathbf{x},t)\right)\,\d\bx\,\d t.\label{weak:diffusion}
\end{align}
\item If $\mu=1$, $\nu=1$, the case where there is a maximal balance of selection and genetic drift; we find the \textsl{replicator-diffusion equation}
\begin{align}
 &-\int_0^{\infty}\int_{S^{n-1}}p(\bx,t)\partial_tg(\bx,t)\,\d\bx\,\d t - \int_{S^{n-1}}p(\bx,t_0)g(\bx,t_0)\,\d\bx  \nonumber\\
&\quad=\frac{\kappa}{2}\int_0^{\infty}\int_{S^{n-1}}p(\bx,t)\left(\sum_{i,j=1}^{n-1}x_i(\delta_{ij}-x_j)\partial^2_{ij}g(\mathbf{x},t)\right)\,\d\bx\,\d t   \label{weak:replicator_diffusion}\\
&\qquad+ \int_0^{\infty}\int_{S^{n-1}}p(\mathbf{x},t)\left[\sum_{j=1}^{n-1}x_j\left(\psi^{(j)}(\mathbf{x})-\bar\psi(\mathbf{x})\right)\partial_{j}g(\mathbf{x},t)\right]\d\mathbf{x}\,\d t.\nonumber
\end{align}
\end{description}
\end{thm}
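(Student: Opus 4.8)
The plan is to pass to the limit $\Delta t \to 0$ directly in the identity \eqref{eqn:cont_discr} of Proposition~\ref{thm:cont_discr}, under the standing scaling $N^{-1} = \kappa(\Delta t)^\mu$, keeping careful track of the powers of $\Delta t$ that multiply each term. First I would extract, via Theorem~\ref{thm:weak_conv}, a sequence $\Delta t_n \downarrow 0$ and a limit $p \in L^\infty([0,T];\bm)$ with $p_{\Delta t_n} \rightharpoonup p$, along which the three space--time integrals appearing in \eqref{eqn:cont_discr} converge to the corresponding integrals with $p$ in place of $p_{\Delta t_n}$ and with the genuine test function $g$ in place of its piecewise-constant interpolants $\hat g$ and $\dhat g$. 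The remaining term on the left, $\int_{S^{n-1}} p_{\Delta t}(\bx,0)\hat g(\bx,0)\,\d\bx$, converges to $\int_{S^{n-1}} p(\bx,t_0)g(\bx,t_0)\,\d\bx$ because the discrete initial densities converge weakly (part~(1) of Definition~\ref{def:DCA}) and $\hat g(\cdot,0)=g(\cdot,0)$.

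Next I would simply read off the behaviour of the prefactors when $\mu,\nu\ge1$. The diffusion integral carries $\tfrac{\kappa}{2}(\Delta t)^{\mu-1}$, which tends to $\tfrac{\kappa}{2}$ when $\mu=1$ and to $0$ when $\mu>1$; the convection integral carries $(\Delta t)^{\nu-1}$, which tends to $1$ when $\nu=1$ and to $0$ when $\nu>1$; and the remainder is $\bigO\!\left((\Delta t)^{2\mu-1},(\Delta t)^{\nu+\mu-1},(\Delta t)^{2\nu-1}\right)$, whose three exponents are all $\ge 1$, so it vanishes. Feeding these observations into the convergence of the integrals then yields, in the cases $(\mu>1,\nu=1)$, $(\mu=1,\nu>1)$ and $(\mu=1,\nu=1)$, exactly the weak identities \eqref{weak:convective}, \eqref{weak:diffusion} and \eqref{weak:replicator_diffusion}, respectively, with $p$ the limit furnished by Theorem~\ref{thm:weak_conv}; since $g$ was an arbitrary admissible test function, this is the theorem.

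The only genuinely delicate point --- and the one I would present with care --- is the passage from the interpolated objects $\hat g$, $\dhat g$ to the true test function $g$ tested against the weakly convergent family $p_{\Delta t_n}$: this is \emph{not} a consequence of weak convergence alone, because the observable itself varies with $n$, and moreover (as the remark following Proposition~\ref{thm:cont_discr} stresses) $\hat g$ and $\dhat g$ are not even test functions. The fix is to split each integral as $\iint p_{\Delta t_n} g + \iint p_{\Delta t_n}(\hat g - g)$ (and similarly for $\dhat g$ and for the relevant spatial derivatives of $g$), bound the second piece by $\|\hat g - g\|_\infty$ times the total variation of $p_{\Delta t_n}$ --- which is $1$, since these are probability measures --- and use $\|\hat g - g\|_\infty,\ \|\dhat g - g\|_\infty \to 0$ together with the analogous uniform estimates for derivatives, exactly as in the proof of Theorem~\ref{thm:weak_conv}. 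Everything else is bookkeeping with exponents; the $\mu<1$ or $\nu<1$ cases alluded to in the surrounding text follow from the same argument after first multiplying \eqref{eqn:cont_discr} by a suitable negative power of $\Delta t$, which kills the time-derivative term and leaves the steady equations.
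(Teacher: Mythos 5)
Your proposal is correct and follows essentially the same route as the paper: the paper's proof of Theorem~\ref{thm:weak_limits} consists precisely of invoking Theorem~\ref{thm:weak_conv} for the weak limit and the convergence of the three integrals (including the replacement of $\hat g$, $\dhat{g}$ by $g$ via the uniform estimates $\|h-\hat h\|_\infty,\|h-\dhat{h}\|_\infty\to0$ against measures of unit total variation), followed by the bookkeeping of the prefactors $(\Delta t)^{\mu-1}$, $(\Delta t)^{\nu-1}$ and the error term. The extra care you devote to the interpolant-versus-test-function issue is exactly the content the paper delegates to Theorem~\ref{thm:weak_conv}, so nothing is missing.
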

\begin{proof} 
The result follows from Theorem~\ref{thm:weak_conv}, and from straightforward bookkeeping of the $\Delta t$ orders
of the terms in \eqref{eqn:cont_discr}. \qed
\end{proof}

\begin{rmk}
Alternatively, Theorems~\ref{thm:weak_conv} and \ref{thm:weak_limits} can be seen together as an existence theorem for equations~\eqref{weak:convective}, \eqref{weak:diffusion} and \eqref{weak:replicator_diffusion}. Under additional regularity hypothesis on the fitness functions we have uniqueness --- see Section~\ref{sec:forward} --- and then the limit is unique. 
\end{rmk}

Equations~(\ref{weak:convective}), (\ref{weak:diffusion}) and~(\ref{weak:replicator_diffusion}) are written in the weak form.
In population dynamics, and in others contexts as well, they are used casted into the strong formulation (or standard PDE formulation) as follows (see, however, remark~\ref{rmk:weak_to_pde}): 

\begin{itemize}
 \item If $\mu>1$ and $\nu=1$, the \textsl{convective of drift approximation}:
\begin{equation}\label{convective_approximation}
 \partial_tp=-\sum_{i=1}^{n-1}\partial_i\left[x_i\left(\psi^{(i)}(\mathbf{x})-\bar\psi(\mathbf{x})\right)p\right] .
\end{equation}
This equation is equivalent to the replicator dynamics, showing that the Wright-Fisher process will be equivalent to the
the replicator dynamics, in the limit of large population and small time-steps, if the population increases faster than
the time-step decreases.
\item If $\mu=1$ and $\nu>1$, the \textsl{diffusive approximation}
\begin{equation}\label{diffusive_approximation}
 \partial_t p=\frac{\kappa}{2}\sum_{i,j=1}^{n-1}\partial_{ij}\left((x_i\delta_{ij}-x_ix_j)p\right) ,
\end{equation}
which is relevant when the fitness converges to 1 as $\Delta t\to 0$ faster than $N\to\infty$.
\item When there is a perfect balance between population size and time step, i.e., $\mu=\nu=1$, we find the \textsl{replicator-diffusion approximation}, given by equation~(\ref{replicator_diffusion_eps}), which we repeat here for convenience:
\begin{equation}\label{replicator_diffusion}\tag{\ref{replicator_diffusion_eps}'}
 \partial_t p=\frac{\kappa}{2}\sum_{i,j=1}^{n-1}\partial_{ij}\left((x_i\delta_{ij}-x_ix_j)p\right)
-\sum_{i=1}^{n-1}\partial_i\left[x_i\left(\psi^{(i)}(\mathbf{x})-\bar\psi(\mathbf{x})\right)p\right] .
\end{equation}
\end{itemize}
We shall focus on the last equation and on its weak formulation~(\ref{weak:replicator_diffusion}).

\begin{rmk}\label{rmk:weak_to_pde}
 We shall see in Section~\ref{sec:forward} that the weak and the PDE formulations are not equivalent, and that the correct formulation is actually the weak one. 
\end{rmk}

\subsection{Conservation laws from the discrete process}
\label{ssec:conservation}

Let us write $\mathfrak{S}$ for the set of all functions $g:S^{n-1}\times[0,+\infty)$ such that there exist an open set
$\bar\Upsilon \supset S^{n-1}$ and a function $G:\bar\Upsilon\times[0,+\infty)\to\mathbb{R}$ such that $g$ is the restriction of $G$ to
$S^{n-1}$ and $G\in C^{2,1}(\bar\Upsilon)$.

Notice that in the right hand side of \eqref{weak:replicator_diffusion}, $p$ is multiplied by:
\begin{equation}\label{adjoint}
\frac{\kappa}{2}\sum_{i,j=1}^{n-1}D_{ij}\partial_{ij}^2g+
\sum_{i=1}^{n-1}\Omega_i\partial_ig=0.
\end{equation}
Equation~\eqref{adjoint} is readily seen to be a steady backward equation. 
We now show that the weak solutions have also conservation laws.
\begin{thm}
 Let $p$ be a solution to \eqref{weak:replicator_diffusion} (we shall take \eqref{weak:diffusion} as a special case). Let $\varphi\in\mathfrak{S}$ be in the kernel of \eqref{adjoint}.
Then
\[
 \int_{S^{n-1}}p(\bx,t)\varphi(\bx)\,\d\bx = \int_{S^{n-1}}p(\bx,0)\varphi(\bx)\,\d\bx,
\]
for almost every $t\in[0,\infty)$.
\end{thm}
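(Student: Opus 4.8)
The plan is to exploit the fact that the weak formulation \eqref{weak:replicator_diffusion} holds for \emph{all} admissible test functions $g$ with compact support in $S^{n-1}\times[0,T]$, and to feed it a test function built from the time-independent solution $\varphi$ of the steady backward equation \eqref{adjoint}. The key observation is that if $g(\bx,t)=\varphi(\bx)\eta(t)$ with $\eta\in C^\infty_c([0,\infty))$, then the entire right-hand side of \eqref{weak:replicator_diffusion} collapses: the second-order and first-order spatial operators acting on $g$ reproduce $\eta(t)$ times the left-hand side of \eqref{adjoint}, which is zero. Hence the right-hand side integrand vanishes identically, and we are left with
\[
 -\int_0^\infty\int_{S^{n-1}}p(\bx,t)\varphi(\bx)\eta'(t)\,\d\bx\,\d t - \eta(0)\int_{S^{n-1}}p(\bx,0)\varphi(\bx)\,\d\bx = 0.
\]

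From here the argument is a standard distributional one. Define the scalar function $m(t)\bydef\int_{S^{n-1}}p(\bx,t)\varphi(\bx)\,\d\bx$, which is well defined and in $L^\infty([0,T])$ since $p\in L^\infty([0,T];\bm)$ and $\varphi$ is bounded on the compact simplex. The identity above says precisely that, in the sense of distributions on $(0,\infty)$, the derivative of $m$ is zero, with the boundary term pinning down the value at $t=0$: more carefully, $\int_0^\infty m(t)\eta'(t)\,\d t = -\eta(0)\,m(0)$ for every $\eta\in C^\infty_c([0,\infty))$, which is exactly the weak statement that $m' \equiv 0$ and $m(0^+)=m(0)$. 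A function in $L^\infty$ with vanishing distributional derivative is almost everywhere equal to a constant; evaluating that constant via the boundary term gives $m(t)=m(0)$ for almost every $t$, which is the claimed conservation law. The case of \eqref{weak:diffusion} is identical, simply dropping the convective term throughout and using that $\varphi$ lies in the kernel of the purely diffusive steady operator.

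One technical point to handle with a little care: the test functions $g$ admissible in \eqref{weak:replicator_diffusion} are elements of $\mathfrak{S}$ restricted appropriately, so I should check that $\varphi(\bx)\eta(t)$ genuinely qualifies — $\varphi\in\mathfrak{S}$ by hypothesis, $\eta$ is smooth and compactly supported in time, and the product has the required $C^{2,1}$ regularity and the right support properties; this is immediate. A second, more substantive point is that the derivation of \eqref{weak:replicator_diffusion} in Theorem~\ref{thm:weak_limits} carried along an initial-time term written as $\int_{S^{n-1}}p(\bx,t_0)g(\bx,t_0)\,\d\bx$; one must make sure the initial datum $p(\bx,0)$ appearing in the conservation law is consistently this $t_0=0$ trace, which it is by construction of $p_{\Delta t}$ and the weak convergence in Theorem~\ref{thm:weak_conv}.

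The main obstacle — such as it is — is purely one of bookkeeping: verifying that substituting $g=\varphi\,\eta$ really does annihilate the right-hand side of \eqref{weak:replicator_diffusion} pointwise in $(\bx,t)$, i.e.\ that the spatial differential operator appearing there is exactly the adjoint operator whose kernel defines the conservation law. This is guaranteed by the way \eqref{adjoint} was extracted from \eqref{weak:replicator_diffusion} in the paragraph preceding the theorem, so no real estimate is needed; the remaining work is the elementary measure-theoretic lemma that $L^\infty$ distributions with zero derivative are constant, together with a density argument to pass from smooth $\eta$ to the characteristic-function-type test needed to isolate a single time $t$.
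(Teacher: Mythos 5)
Your proposal is correct and follows essentially the same route as the paper: substitute the separated test function $g(\bx,t)=\eta(t)\varphi(\bx)$ into \eqref{weak:replicator_diffusion}, observe that the spatial operator annihilates $\varphi$ since it lies in the kernel of \eqref{adjoint}, and conclude from the arbitrariness of $\eta$ that the distributional time derivative of $\int_{S^{n-1}}p(\bx,t)\varphi(\bx)\,\d\bx$ vanishes. The only difference is cosmetic — the paper normalises $\eta(0)=1$ while you carry $\eta(0)$ explicitly — and your extra care about admissibility and the $L^\infty$-constancy lemma just fills in details the paper leaves implicit.
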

\begin{proof}
 Let $\eta(t) \in C_c([0,\infty))$, with $\eta(0)=1$. Then
\[
 g(\bx,t)=\eta(t)\varphi(\bx)
\]
is an admissible test function. On substituting in \eqref{weak:replicator_diffusion}, we find that
\[
 \int_0^\infty\int_{S^{n-1}}p(\bx,t)\varphi(\bx)\eta'(t)\,\d\bx\,\d t + \int_{S^{n-1}}p(\bx,0)\varphi(\bx)\,\d\bx=0.
\]
Since $\eta$ is an arbitrary function with compact support in $[0,\infty)$, the result follows. \qed
\end{proof}

A similar argument shows also the following
\begin{thm}
 Let $p$ be a solution to \eqref{weak:convective}.
Then
\[
 \int_{S^{n-1}}p(\bx,t)\,\d\bx = \int_{S^{n-1}}p(\bx,0)\,\d\bx,
\]
for almost every $t\in[0,\infty)$.
\end{thm}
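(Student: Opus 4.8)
The plan is to mimic exactly the proof of the preceding theorem (conservation of $\varphi$ in the kernel of the steady backward operator~\eqref{adjoint}), but now applied to the purely convective weak formulation~\eqref{weak:convective}. The key observation is that the right-hand side of~\eqref{weak:convective} is $p$ multiplied by the first-order operator $\sum_{i=1}^{n-1}\Omega_i\partial_i g$, and that the constant function $\varphi(\bx)\equiv 1$ lies in the kernel of this operator, since all its first derivatives vanish. Thus total mass is the conserved quantity, exactly as the constant $1$ being in the kernel of~\eqref{adjoint} gave conservation of probability as a special case in the Remark after the discrete Proposition.

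The steps, in order, are as follows. First, take $\eta(t)\in C_c([0,\infty))$ with $\eta(0)=1$, and set $g(\bx,t)=\eta(t)$; this is an admissible test function since it is smooth in $t$ with compact support and trivially extends to a $C^{2,1}$ function on a neighbourhood of $S^{n-1}$. Second, substitute $g$ into~\eqref{weak:convective}: the right-hand side vanishes because $\partial_j g(\bx,t)=0$ for all $j=1,\dots,n-1$, so what remains is
\[
 -\int_0^\infty\int_{S^{n-1}}p(\bx,t)\eta'(t)\,\d\bx\,\d t - \int_{S^{n-1}}p(\bx,t_0)\eta(t_0)\,\d\bx = 0,
\]
where $t_0=0$ and $\eta(0)=1$. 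Third, define $m(t)\bydef\int_{S^{n-1}}p(\bx,t)\,\d\bx$, which is a well-defined element of $L^\infty([0,\infty))$ since $p\in L^\infty([0,T];\bm)$; the identity above reads $\int_0^\infty m(t)\eta'(t)\,\d t = -m(0)\eta(0) = -m(0)$. Fourth, since this holds for every $\eta\in C_c([0,\infty))$ with $\eta(0)=1$ — and hence, by rescaling and taking differences, for every $\eta\in C_c([0,\infty))$ with arbitrary $\eta(0)$ — it follows that $m(t)=m(0)$ for almost every $t$, i.e. $m$ has zero distributional derivative on $(0,\infty)$ and the correct boundary value, forcing it to be the constant $m(0)$ a.e.

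I do not expect a genuine obstacle here: the argument is a verbatim specialisation of the proof just given for the replicator-diffusion conservation laws, with $\varphi\equiv 1$ and with the diffusive term simply absent. The only point deserving a sentence of care is the passage from "holds for all $\eta$ with $\eta(0)=1$" to "$m$ is a.e.\ constant equal to $m(0)$"; this is the standard fact that a function in $L^1_{\mathrm{loc}}$ whose distributional derivative vanishes is a.e.\ equal to a constant, together with matching the prescribed initial value, and it was already used implicitly in the previous theorem. Hence the proof is short and essentially automatic.
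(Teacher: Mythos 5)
Your proposal is correct and is precisely the ``similar argument'' the paper invokes without writing out: you substitute the test function $g(\bx,t)=\eta(t)\varphi(\bx)$ with $\varphi\equiv 1$, observe that the purely first-order right-hand side of \eqref{weak:convective} annihilates constants, and conclude as in the preceding conservation-law theorem. The extra sentence you add about passing from vanishing distributional derivative to a.e.\ constancy is a reasonable elaboration of what the paper leaves implicit, but it is not a departure from the paper's route.
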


Therefore, the conservation laws given by equation~(\ref{discrete_conservation_laws}) now become
\begin{equation}\label{eqn:tobe_cons}
 \frac{\d}{\d t}\int_{\slf^{n-1}}p(t,x)\varphi(x)\d x=0,
\end{equation}
where $\varphi$ satisfies \eqref{adjoint}.
In principle the condition set out by \eqref{eqn:tobe_cons} seems to imply an infinite (likely to be uncountable) 
number of conservation laws. The following result shows that it is actually much more conspicuous:
\begin{thm}
\label{thm:finite_cons}
 Let $\mathbf{e}_i$ denote the vertices of $S^{n-1}$. Then there exist unique $\rho_i$, $i=1,\ldots,n-1$,  with $\rho_i(\mathbf{e}_j)=\delta_{ij}$ that are solutions to \eqref{adjoint}. In addition, let $\rho_0\equiv1$. Then, any solution to \eqref{adjoint} in $\mathfrak{S}$ can be written as a linear combination of $\rho_i$, $i=0,\dots,n-1$. In particular its kernel, for solutions in $\mathfrak{S}$, has dimension $n$.
\end{thm}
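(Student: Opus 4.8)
The plan is to analyse \eqref{adjoint} as a second-order linear PDE on the simplex $S^{n-1}$ whose principal part is the degenerate diffusion operator $\tfrac{\kappa}{2}\sum D_{ij}\partial_{ij}^2$ appearing in the replicator-diffusion equation, and to show that its kernel in $\mathfrak{S}$ is exactly $n$-dimensional. The key structural observation is that \eqref{adjoint} is precisely the steady backward (adjoint) Kolmogorov equation: a function $\varphi$ is in its kernel if and only if it is harmonic for the generator of the underlying diffusion, and such harmonic functions are determined by their boundary traces via the absorption mechanism. I would first note that $\rho_0\equiv 1$ is trivially a solution since the operator annihilates constants. Then, for the remaining $n-1$ generators, I would identify $\rho_i$ with the fixation probability of type $i$: heuristically, $\rho_i(\bx)$ is the probability that the diffusion started at $\bx$ is eventually absorbed at the vertex $\mathbf e_i$, which by the backward equation satisfies \eqref{adjoint} and has boundary values $\rho_i(\mathbf e_j)=\delta_{ij}$. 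These are the continuous analogues of the discrete fixation functions $\mathsf F^{(i)}$ from the earlier proposition, and their existence for smooth fitness should follow from the well-posedness theory for the (degenerate but classically solvable) problem \eqref{replicator_diffusion_eps} invoked in Section~\ref{sec:prelim}, or constructed directly.

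The core of the argument is the \emph{dimension count}, which I would carry out in two directions. For the lower bound $\dim \ge n$: the functions $\rho_0,\rho_1,\dots,\rho_{n-1}$ are linearly independent, which one sees by evaluating a vanishing linear combination $\sum_{i=0}^{n-1} c_i \rho_i$ at the vertices $\mathbf e_1,\dots,\mathbf e_{n-1}$ (forcing $c_1=\dots=c_{n-1}=0$) and then at $\mathbf e_n$ (forcing $c_0=0$, using $\rho_i(\mathbf e_n)=0$ for $i\ge 1$), exactly as in the discrete proof. For the upper bound $\dim \le n$: I would show that any $\varphi\in\mathfrak{S}$ solving \eqref{adjoint} is uniquely determined by its $n$ values $\varphi(\mathbf e_1),\dots,\varphi(\mathbf e_n)$. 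Indeed, given $\varphi$ solving \eqref{adjoint}, set $\psi = \varphi - \sum_{i=1}^{n} \varphi(\mathbf e_i)\,\tilde\rho_i$ where $\tilde\rho_i$ are the fixation functions (with $\tilde\rho_n = 1 - \sum_{i=1}^{n-1}\rho_i$, consistent with probability conservation $\sum\mathsf F^{(i)}=1$ from the earlier remark); then $\psi$ solves \eqref{adjoint} and vanishes at every vertex. The claim is then that $\psi\equiv 0$, i.e. a solution of the steady backward equation vanishing at all vertices is identically zero. Combined with the identification $\varphi = \sum_{i=1}^n \varphi(\mathbf e_i)\tilde\rho_i = \varphi(\mathbf e_n)\rho_0 + \sum_{i=1}^{n-1}(\varphi(\mathbf e_i)-\varphi(\mathbf e_n))\rho_i$, this gives both the spanning property and $\dim = n$.

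The main obstacle is the vanishing claim: proving that the only solution of \eqref{adjoint} in $\mathfrak{S}$ with zero trace on all vertices is $0$. For a non-degenerate elliptic operator this is the maximum principle on the boundary, but here the diffusion degenerates on all of $\partial S^{n-1}$, not just the vertices, so the relevant ``boundary'' for the backward problem is effectively the finite vertex set together with the lower-dimensional absorbing subsimplices. I would attack it by the probabilistic representation: the generator's diffusion is absorbed on $\partial S^{n-1}$, and by Fichera-type analysis of the degenerate operator (the drift $\Omega_i$ vanishes on each facet, so no flux enters the interior from the boundary) the process, once on a facet, is itself a replicator-diffusion on that subsimplex — inductively absorbed eventually at vertices. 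Hence $\psi(\bx) = \mathbb E_{\bx}[\psi(\text{absorption point})] = 0$ since absorption occurs at a vertex where $\psi$ vanishes. Making this fully rigorous requires knowing that \eqref{adjoint} has a unique classical solution for prescribed vertex data and that a.s. absorption at vertices holds; both are plausible from the degenerate-parabolic theory (\citealp{DiBenedetto93,CS76}) and the structure theorem announced for Section~\ref{sec:forward}, so I would either cite that forthcoming structure result or give the self-contained induction on $\dim S^{n-1}$, the base case $n=1$ (a point) being trivial and the inductive step using that restriction of \eqref{adjoint} to any facet is again of the same form with one fewer type.
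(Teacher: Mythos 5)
Your proposal is correct and, once the probabilistic gloss is stripped away, it is essentially the paper's own argument: the rigorous backbone in both cases is the induction over the dimension of the subsimplices (using that \eqref{adjoint} restricts to an equation of the same form on each facet, since the normal drift $\Omega_i$ vanishes on $\{x_i=0\}$) together with the maximum principle at each level to get uniqueness from the vertex data. The paper constructs the $\rho_i$ explicitly by solving Dirichlet problems bottom-up on the skeleton and then shows that any $\varphi\in\mathfrak{S}$ agrees with $\sum_i\varphi(\mathbf{e}_i)\rho_i$ by the same induction, which is precisely the self-contained fallback route you describe at the end.
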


\begin{proof}
 Given a vertex $\mathbf{e}_i$, let $\mathbf{e}_j$ be an adjacent vertex. Now we solve \eqref{adjoint} in the segment $\overline{\mathbf{e}_j\mathbf{e}_i}$ with boundary values $\delta_{ij}$. In the segments not adjacent to $\mathbf{e}_i$ define the solution to be zero. This defines the solution in all one-dimensional simplices. For each two-dimensional subsimplex, we now solve the Dirichlet problem with the data from the previous step. Now, assume that we have the solution uniquely defined in all subsimplices of dimension $m$. Repeating the construction above yields the solution in all subsimplices of dimension $m+1$. Proceeding inductively, this yields a solution in $S^{n-1}$ that is unique and admissible. Uniqueness follows from the maximum principle applied at each subsimplex level. Let $\varphi\in\mathfrak{S}$, and let
\[
 \Phi(\mathbf{x})=\sum_{i=1}^n\varphi(\mathbf{e}_i)\rho_i(\mathbf{x}).
\]
Then $\Phi\in\mathfrak{S}$. By the preceding argument, $\Phi$ and $\varphi$ must agree at all edges of $S^{n-1}$. Proceeding inductively once again yields that $\varphi=\Phi$ in $S^{n-1}$. \qed
\end{proof}

Thus, the solutions to \eqref{weak:replicator_diffusion} must satisfy:
\begin{equation}\label{continuous_conservation_laws}
\frac{\d}{\d t} \int_{S^{n-1}}\rho_i(\mathbf{x})p(\mathbf{x},t)\d\mathbf{x}=0\ ,\quad i=1,\cdots,n\ .
\end{equation} 

From a probabilistic viewpoint, the $\rho_i$,  $i=1,\ldots,n$, are naturally identified with the fixation probability of type $i$. 
We now give a pure analytical argument for this fact. In Section~\ref{sec:forward}, we shall prove Theorem~\ref{thm:final_state}
which shows that the final state is given by
\begin{equation*}
p^\infty[p^\ini]=\lim_{t\to\infty}p(\cdot,t)=\sum_{i=1}^n\pi_i[p^\ini]\delta_{\mathbf{e}_i}\ ,
\end{equation*}
where $\delta_{\mathbf{e}_i}$ is a Dirac measure supported on the vertex $\mathbf{e}_i\in S^{n-1}_N$. 
Clearly, $\pi_i[p^\ini]$ is the fixation probability of type $i$ in a population initially described by 
a probability distribution $p^\ini$.

Therefore,
\[
 \pi_i[\delta_{\mathbf{x_0}}]=\int\rho_i(\mathbf{x})p^\infty(\mathbf{x})\d\mathbf{x}=\int\rho_i(\mathbf{x})p^\ini(\mathbf{x})\d\mathbf{x}
=\int\rho_i(\mathbf{x})\delta_{\mathbf{x}_0}(\mathbf{x})\d\mathbf{x}=\rho_i(\mathbf{x}_0).
\]

\begin{rmk}
 In the neutral case, i.e., $\psi^{(i)}(\mathbf{x})=\psi^{(j)}(\mathbf{x})$ for all $i,j=1,\dots,n$ and $\mathbf{x}\in S^{n-1}$,
we define the \textsl{neutral fixation probability} $\pi_i^{\mathrm{N}}[\delta_{\mathbf{x}}]=x_i$, which follows from the fact
that in the neutral case, $\rho_i(\mathbf{x})=x_i$.
\end{rmk}

\section{The Replicator-Diffusion approximation}

\label{sec:forward}

We now discuss the nature of solutions $p$ to (\ref{replicator_diffusion}) together with the conservation
laws (\ref{continuous_conservation_laws}). The main result of this section is Theorem~\ref{thm:final_state}. 
This must be understood as the continuous counterpart of
the Lemma~\ref{lem:classical}. We do not refer to the discrete model to prove this result.
Our approach is based solely in the properties of the partial differential equation~(\ref{replicator_diffusion}), the restriction
of the domain to the domain of interest, and
the associated conservation laws~(\ref{continuous_conservation_laws}). 

An outline of the proof of Theorem~\ref{thm:final_state} is as follows:
First, we show that a solution to \eqref{weak:replicator_diffusion}  can be written as regular part plus a singular measure over the boundary. Moreover, the regular part vanishes for large time. Repeating these arguments over the lower dimensional subsimplices, and using the projection result in Proposition~\ref{prop:face_proj},  we arrive at a representation of $p$ as a sum of its classical solution and
a sum of singular measures that are uniformly supported on the descending chain of subsimplices of $S^{n-1}$ down to the zeroth dimension. 
Since the solutions over the subsimplices also have a regular part that vanishes, we can show that all measures that are not atomically supported at the vertices should vanish for large time. Thus, conservation of probability implies that the steady state of \eqref{weak:replicator_diffusion} is a sum of deltas.

Finally, we provide two applications.
In Subsection~\ref{ssec:duality}, we study the dual equation. This will be the continuous limit of the 
 evolution by the dual equation (backward equation) of the discrete process and therefore its solution $f(\mathbf{k},t)$ gives the fixation probability at time $t$ of a given type (to be prescribed by the boundary conditions in the dual process) for a population initially at state $\mathbf{k}$. This gives a generalization for an arbitrary number of types and for arbitrary fitnesses
of the celebrated Kimura equation with reversed time~\citep{Kimura}.  In the sequel, Subsection~\ref{ssec:strat_domin},
we will show that if one type dominates all other types then, for any initial condition, the fixation probability of this
type will be larger than the neutral fixation probability. This shows, in particular, that for large populations, the
most probable type to fixate will be the one playing the Nash-equilibrium strategy of the game (assuming the identity
between fitness and pay-offs, which is standard in this framework). This is not true in general for small populations~\citep{Nowak_2006}.

\subsection{Solution of the replicator-diffusion equation}
\label{ssec:solution}

We now study in more detail the features of the solution to (\ref{weak:replicator_diffusion}) and show two important results: first that in the interior of the simplex, the solution must satisfy (\ref{replicator_diffusion}) in the classical sense; second, no classical solution to (\ref{replicator_diffusion}) can satisfy the conservation laws.
Throughout this section, we shall have the further assumption that the fitnesses are smooth.

We start by showing that, in the interior, a weak solution is regular enough to be a classical solution.

\begin{lem}
\label{lem:classical_int}
 Let $p$ be a solution to (\ref{weak:replicator_diffusion}). Let $K\subset S^{n-1}$ be a proper compact subset. Then, in $K$, $p$ satisfies (\ref{replicator_diffusion}) in the classical sense. In particular, $p\in C([0,T];C^{\infty}(\mathop{\mathrm{int}}(S)))$.
\end{lem}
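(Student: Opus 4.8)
The plan is to show interior regularity by a standard bootstrap argument exploiting the fact that, away from the boundary of $S^{n-1}$, the operator $\mathcal{L}_{n-1,x}$ is uniformly (non-degenerately) parabolic. First I would fix a proper compact $K \subset \mathrm{int}(S^{n-1})$ and pick an open neighbourhood $U$ with $K \subset U$ and $\overline{U} \subset \mathrm{int}(S^{n-1})$. On $\overline{U}$ the diffusion matrix $D_{ij} = x_i\delta_{ij} - x_ix_j$ is strictly positive definite: indeed, for $\boldsymbol{\xi} \in \mathbb{R}^{n-1}$ one has $\sum_{i,j} D_{ij}\xi_i\xi_j = \sum_i x_i\xi_i^2 - \bigl(\sum_i x_i\xi_i\bigr)^2 \ge \bigl(\min_{\overline{U}} x_n\bigr)\sum_i x_i\xi_i^2 > 0$, using $x_n = 1 - \sum_{i=1}^{n-1} x_i$ and the Cauchy--Schwarz-type inequality $\bigl(\sum x_i\xi_i\bigr)^2 \le \bigl(\sum x_i\bigr)\bigl(\sum x_i\xi_i^2\bigr)$ together with the fact that all $x_i$ are bounded away from $0$ on $\overline{U}$. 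Since the fitnesses are assumed smooth here, the coefficients $D_{ij}$ and $\Omega_i$ are $C^\infty$ on $\overline{U}$.

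The next step is to interpret the weak formulation \eqref{weak:replicator_diffusion}, restricted to test functions supported in $U \times (0,T)$, as saying that $p$ is a distributional solution of $\partial_t p = \mathcal{L}_{n-1,x} p$ on $U \times (0,T)$ — note the conservation-law terms and the initial-data term drop out for such compactly supported $g$, and an integration by parts (legitimate at the level of distributions) moves the derivatives off $g$ and onto $p$. So $p$ is a distributional (hence, by definition, a very weak / measure-valued) solution of a linear, uniformly parabolic equation with smooth coefficients on the cylinder $U \times (0,T)$. Now I would invoke interior parabolic regularity: by hypoellipticity of non-degenerate parabolic operators with $C^\infty$ coefficients (or, equivalently, by iterating interior $L^p$ / Schauder estimates after a first mollification step that upgrades the measure $p$ to an $L^1_{loc}$ and then $L^p_{loc}$ function), $p$ is $C^\infty$ in $(\bx,t)$ on $U \times (0,T)$, and in particular $p \in C([0,T]; C^\infty(\mathrm{int}(S^{n-1})))$ after covering $\mathrm{int}(S^{n-1})$ by countably many such $U$'s; restricting to $K$ gives that $p$ satisfies \eqref{replicator_diffusion} pointwise in the classical sense.

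The main obstacle — and the only point needing care — is the very first regularity upgrade: the weak formulation a priori only gives $p \in L^\infty([0,T]; \mathsf{BM}^+(S^{n-1}))$, so $p(\cdot,t)$ is merely a measure, and one must justify that a measure-valued distributional solution of a non-degenerate parabolic equation is actually smooth in the interior. I would handle this by a mollification/parametrix argument: convolving in $\bx$ (within $U$) with a smooth kernel, the mollified $p_\epsilon$ solves the same equation up to a commutator error that is controlled because the coefficients are Lipschitz; letting $\epsilon \to 0$ after establishing uniform local bounds shows $p$ has a locally bounded density, and then the standard De Giorgi--Nash--Moser / Schauder bootstrap applies to the genuine function $p$. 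Alternatively one can cite directly the hypoellipticity of $\partial_t - \mathcal{L}_{n-1,x}$ on $U$, which is immediate from non-degeneracy and smoothness of coefficients, and is the cleanest route. Everything else — the positive-definiteness estimate, the vanishing of boundary and conservation terms, the covering argument — is routine.
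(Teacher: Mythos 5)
Your proposal is correct and follows essentially the same route as the paper: restrict to test functions compactly supported in a proper compact subset so that the boundary and initial terms drop, observe that the operator is uniformly parabolic there since $D_{ij}=x_i\delta_{ij}-x_ix_j$ is strictly positive definite away from $\partial S^{n-1}$, and conclude by interior parabolic regularity that the weak and classical formulations coincide, finishing with a covering argument for $\mathrm{int}(S^{n-1})$. The paper compresses the regularity upgrade into a citation of standard references, whereas you explicitly flag and resolve the only delicate point --- that $p$ is a priori only a measure --- via hypoellipticity or a mollification bootstrap, which is a welcome but not substantively different elaboration.
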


\begin{proof}
Let  $g \in C_c^\infty(K)$, we have then the standard weak formulation of (\ref{replicator_diffusion}) in $K$.  On the other hand, (\ref{replicator_diffusion}) is uniformly parabolic in any proper subset. Hence the weak and strong formulations coincide --- c.f. \citep{Evans,Taylor96a}.
The last statement follows from $\mathop{\mathrm{int}}(S)=\cup_{K\subset S}K$, with $K$ compact and $K\cap\partial S^{n-1}=\emptyset$. \qed
\end{proof}

The next two Lemmas show existence of a unique classical solution, and that such a solution decays to zero for large time.

\begin{lem}
\label{lem:ncons_pde}
 Let $p$ be a classical solution to (\ref{replicator_diffusion}). Then
\[
 \lim_{t\to\infty} p(\bx,t)=0,\quad \bx\in \mathop{\mathrm{int}} S^{n-1}.
\]
\end{lem}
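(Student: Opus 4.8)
The plan is to treat this as a standard question about the long-time behaviour of a degenerate parabolic equation on a bounded domain with absorbing boundary, and to extract the decay from a suitable energy (or maximum-principle) argument combined with the structure of the steady states. First I would recall from Lemma~\ref{lem:classical_int} that $p$ is a genuine classical solution in $\mathrm{int}\,S^{n-1}$, smooth for $t>0$, so all the manipulations below are licit. The operator $\mathcal{L}_{n-1,x}$ in \eqref{replicator_diffusion} is in divergence-type form with diffusion matrix $D_{ij}=x_i\delta_{ij}-x_ix_j$, which is positive definite in the interior and degenerates only on $\partial S^{n-1}$; hence on any proper compact $K\subset\mathrm{int}\,S^{n-1}$ the equation is uniformly parabolic. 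The key structural fact is that the only nonnegative classical steady states of \eqref{replicator_diffusion} that remain integrable are the trivial one in the interior — any mass must escape to the boundary, where it is absorbed. So the claim is really that no mass can stay trapped in the interior.

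The cleanest route I would take is a comparison/maximum-principle argument. Fix a compact $K\subset\mathrm{int}\,S^{n-1}$ and a slightly larger compact $K'$ with $K\subset\mathrm{int}\,K'\subset K'\subset\mathrm{int}\,S^{n-1}$. On $K'$ the equation is uniformly parabolic with smooth (by the smoothness assumption on the fitnesses) coefficients, so one has interior Harnack/Schauder estimates. I would then argue that $M(t):=\max_{\bx\in \overline{K'}}p(\bx,t)$ cannot be attained in the interior of $K'$ for all large $t$ unless it decays: by the strong maximum principle, if $p$ does not decay to $0$ on $K$, the flux of $p$ across $\partial K'$ toward $\partial S^{n-1}$ is eventually bounded below, and integrating $\partial_t\int_{S^{n-1}}p\,\d\bx$ (which, for the classical solution ignoring the singular boundary contributions, is nonincreasing because of the no-flux-into-the-interior structure and absorption at the boundary) forces $\int p\,\d\bx$ to decrease without bound, contradicting nonnegativity. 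Equivalently, and perhaps more robustly, I would use the known explicit spectral decomposition for the neutral (Kimura) version in terms of Gegenbauer polynomials — referenced in the paper via \citep{Ewens} — to get exponential decay in the neutral case, and then handle the general fitness by a perturbation/comparison argument, dominating $p$ by a multiple of the neutral solution on $K$ after using that the drift term $\Omega_i=x_i(\psi^{(i)}-\bar\psi)$ is bounded and smooth.

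Concretely, the steps in order: (i) invoke Lemma~\ref{lem:classical_int} to work with a classical solution on compacts; (ii) set up a subsolution/supersolution on a slightly smaller simplex $S^{n-1}_\eps=\{\bx:x_i\ge\eps\}$ with absorbing (zero) data on $\partial S^{n-1}_\eps$, whose principal eigenvalue $\lambda_1(\eps)$ of $-\mathcal{L}_{n-1,x}$ is strictly positive since the domain is compactly inside the region of uniform parabolicity; (iii) compare $p$ restricted to $S^{n-1}_\eps$ with $C\e^{-\lambda_1(\eps)t}\phi_1$, where $\phi_1>0$ is the corresponding principal eigenfunction, using that $p$ is bounded on $\partial S^{n-1}_\eps$ uniformly in $t$ by interior estimates and conservation of total mass; (iv) conclude $p(\bx,t)\to0$ for $\bx\in\mathrm{int}\,S^{n-1}_\eps$, and since $\eps>0$ is arbitrary and $\mathrm{int}\,S^{n-1}=\bigcup_\eps\mathrm{int}\,S^{n-1}_\eps$, the result follows pointwise on the whole interior.

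The main obstacle I anticipate is item (iii): controlling the behaviour of $p$ near $\partial S^{n-1}$ well enough to use it as boundary data for the comparison on $S^{n-1}_\eps$. The classical solution need not vanish on $\partial S^{n-1}$ — indeed the true (measure) solution develops atoms there immediately — so one cannot simply say the data is zero. The fix is to use the global mass bound: $\int_{S^{n-1}}p(\bx,t)\,\d\bx\le 1$ for all $t$ (a consequence of the conservation laws in \eqref{continuous_conservation_laws} together with nonnegativity), which keeps $\|p(\cdot,t)\|_{L^1}$ bounded; combined with interior parabolic regularity on the annular region $S^{n-1}_{\eps/2}\setminus\mathrm{int}\,S^{n-1}_\eps$, this yields a uniform-in-$t$ $L^\infty$ bound on $\partial S^{n-1}_\eps$, which is all that is needed to feed the comparison argument. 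A secondary technical point is ensuring the drift term does not destroy the positivity of the principal eigenvalue; since $\Omega_i$ is smooth and bounded on $\overline{S^{n-1}_\eps}$, the operator is a bounded perturbation of the self-adjoint diffusion part and $\lambda_1(\eps)>0$ persists (possibly after absorbing an exponential factor $\e^{Ct}$ that is beaten by a slightly smaller but still positive decay rate, or by choosing $\eps$ and then a subdomain appropriately).
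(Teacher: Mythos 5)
Your strategy does not close, and the obstacle is exactly the one you flag in step (iii) without actually resolving it. The comparison of $p$ with $C\e^{-\lambda_1(\eps)t}\phi_1$ on $S^{n-1}_\eps$ requires $p\le C\e^{-\lambda_1(\eps)t}\phi_1$ on the parabolic boundary, but $\phi_1$ vanishes on $\partial S^{n-1}_\eps$ while $p$ is strictly positive there; a uniform-in-$t$ bound $M$ for $p$ on $\partial S^{n-1}_\eps$ is \emph{not} ``all that is needed'': feeding constant boundary data $M$ into the comparison only yields $\limsup_{t\to\infty}p\le O(M)$ (eventually $p$ is dominated by the steady Dirichlet solution with data $M$), not decay to zero. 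To make the subdomain comparison work you would need the values of $p$ on $\partial S^{n-1}_\eps$ themselves to tend to zero, which, for a slightly larger compact, is precisely the statement being proved --- the argument is circular. The same difficulty afflicts your first sketch: a nonincreasing nonnegative mass converges but need not converge to zero, and the claim that non-decay of $p$ on $K$ forces an outward flux across $\partial K'$ bounded below from zero is unsupported (it fails, for instance, for uniformly parabolic problems possessing a nontrivial interior steady state). The decay asserted in the lemma is driven by absorption at the \emph{degenerate} boundary $\partial S^{n-1}$, and no argument confined to a compact subset on which the operator is uniformly parabolic can detect that mechanism. The fallback via the neutral Gegenbauer expansion plus ``perturbation/comparison'' is also not viable as stated: the drift is not small, and there is no domination principle comparing solutions of forward equations with different drifts.

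The paper's proof is global and sidesteps all of this by the substitution $u=\mu_{\mathrm{S}}p$ with $\mu_{\mathrm{S}}(\bx)=x_1\cdots x_n$, which places the problem in the weighted space $L^2(\mu_{\mathrm{S}}^{-1}\d\bx)$ where the relevant boundary behaviour ($u=0$ on $\partial S^{n-1}$ for bounded $p$) is built in; equation \eqref{RD_chang_var} is then shown --- via the regularised eigenvalue problem \eqref{for:evp2}, a maximum principle, continuity of the principal eigenvalue in the drift parameter $s$, and a Rellich compactness passage $\eps\to0$ --- to have a strictly negative principal eigenvalue, and the resulting energy inequality gives exponential decay of $\int p^2\mu_{\mathrm{S}}\,\d\bx$, which interior regularity upgrades to pointwise decay on compacts. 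If you wish to salvage a localized comparison, you must first establish decay of some global quantity of this type; as written, your proposal contains a genuine gap.
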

\begin{proof}
 We define $\mu_{\mathrm{S}}(\mathbf{x})=x_1x_2\cdots x_n$
(such that $\mu_{\mathrm{S}}(\mathbf{x})\ge 0$ in $S^{n-1}$ with $\mu_{\mathrm{S}}=0$ if and only if $\bx\in\partial S^{n-1}$). 
Note that
\begin{equation}
 \sum_{j=1}^{n-1}\partial_j\left(\frac{D_{ij}}{\mu_{\mathrm{S}}}\right)=\mu_{\mathrm{S}}^{-1}\left[\sum_{j=1}^{n-1}\left(\delta_{ij}-\delta_{ij}x_j-x_i\right)
-\sum_{j=1}^{n-1}\left(x_j\delta_{ij}-x_ix_j\right)\left(\frac{1}{x_j}-\frac{1}{x_n}\right)\right]=0\ .
\label{change_zero}
\end{equation}
We introduce the new variable $u=\mu_{\mathrm{S}}p$ and
after some manipulations, we find
\begin{equation}
\partial_t u=\mu_{\mathrm{S}}\nabla\cdot\left[\mu_{\mathrm{S}}^{-1}\left(\frac{\kappa}{2}D\nabla u-\boldsymbol{\Omega} u\right)\right]\ ,
\label{RD_chang_var}
\end{equation}
with $D=\left(D_{ij}\right)_{i,j=1,\dots,n-1}$ and $\boldsymbol{\Omega}=\sum_{i=1}^{n-1}\Omega_i\mathbf{e}_i$.

We now show that the last equation is well defined in $\slf^{n-1}$. For the second order term, this follows from a new application
of equation~(\ref{change_zero}). For the first order term, note that
\[
 \mu_{\mathrm{S}}\nabla\cdot\left(\frac{\boldsymbol{\Omega}u}{\mu_{\mathrm{S}}}\right)=
\nabla\cdot\boldsymbol{\Omega} u-\frac{\boldsymbol{\Omega}\cdot\nabla\mu_{\mathrm{S}}}{\mu_{\mathrm{S}}}u+\boldsymbol{\Omega}\cdot\nabla u\ .
\]
Furthermore
\[
\frac{\boldsymbol{\Omega}\cdot\nabla\mu_{\mathrm{S}}}{\mu_{\mathrm{S}}}=\sum_{i=1}^{n-1}\Omega_i\left(\frac{1}{x_i}-\frac{1}{x_n}\right)=
\sum_{i=1}^n\left(\psi^{(i)}(\bx)-\bar\psi(\bx)\right)\ .
\]

We shall now  study the eigenvalue problem associated to \eqref{RD_chang_var} by
 considering the dual problem, with respect to the measure $(\mu_S)^{-1}\d\bx $, and with regularised coefficients: 
\begin{equation}
\left\{\begin{array}{ll}
&\mu_{\mathrm{S}}^{(\eps)}\nabla\cdot\left[\frac{\kappa}{2\mu_{\mathrm{S}}^{(\eps)}}D^{(\eps)}\nabla \varphi^{(\eps)}\right] + s \boldmath{\Omega}\cdot\nabla\varphi^{(\eps)}=
\lambda^{(\eps)}\varphi^{(\eps)},\\
&\varphi^{(\eps)}=0\text{ in }\partial\slf^{n-1}\ ,
\end{array}\right.
\label{for:evp2}
\end{equation}
where $D^{(\eps)}(\bx)$ is a positive defined matrix in $S^{n-1}$, with $D^{(\eps)}\stackrel{\eps\to0^+}{\longrightarrow}D$ uniformly in $\bx$, and  $\mu^{(\eps)}_{\mathrm{S}}>0$ in $\slf^{n-1}$, $\mu^{(\eps)}_{\mathrm{S}}\stackrel{\eps\to0^+}{\longrightarrow}\mu_{\mathrm{S}}$ uniformly in $\bx$, and $s$ is a real parameter. 

First we observe that, for $\eps\geq0$, equation \eqref{for:evp2} satisfies a maximum principle for solutions in $C^2(\mathop{\mathrm{int}}(S^{n-1}))$; therefore if $\lambda^{(\eps)}=0$ we have $\varphi^{(\eps)}=0$ in $\slf^{n-1}$; see \cite{Crandalletal1992}. We conclude that $\lambda^{(\eps)}\not=0$, for  $s\in\R$ and $\eps\geq0$. Additionally, since the coefficients are smooth, the solution to equation \eqref{for:evp2} is smooth in the interior by standard elliptic regularity.

For $\eps>0$, the dominant eigenvalue $\lambda_0^{(\eps)}$ is real and from the maximum principle it follows that
$\lambda_0^{(\eps)}\not=0$. For $s=0$, $\lambda_0^{(\eps)}$ is negative and therefore from its continuity in $s$, we conclude that $\lambda^{(\eps)}_0<0$ for any $s$. Therefore,
for any other eigenvalue $\mathrm{Re}\left(\lambda^{(\eps)}\right)\le\lambda^{(\eps)}_0<0$ (see~\citep{Evans} for further details).

Moreover, let $\eps_k\to0$ be a decreasing sequence of positive numbers, and $\varphi^{(\eps_k)}\ge 0$ be the normalised eigenfunctions for the corresponding leading eigenvalues. Since the coefficients are assumed smooth, the eigenfunctions are also smooth. Hence, by Rellich theorem, there is a subsequence $\eps_{k_j}$ such that $\varphi^{(\eps_{k_j})}$ converges in $L^2(S^{n-1})$. By considering the weak formulation for equation~\eqref{for:evp2}, we immediately see that, for this subsequence, we must also have $\lambda^{(\eps_{k_j})}_0\to\lambda_0$.

We thus have obtained a real negative eigenvalue $\lambda_0$, with a real eigenfunction that is single signed. Using the same argument  in \cite{Evans}, we conclude that $\lambda_0$ is the principal eigenvalue for the nonregularised problem. Hence, any other eigenvalue will satisfy $\mathop{\mathrm{Re}}(\lambda)\leq\lambda_0$.

This also shows that there exists $\alpha>0$, such that 
\[
\frac{1}{2}\partial_t\int_{S^{n-1}} u^2\mu_S^{-1}d\bx=
\int_{S^{n-1}}\mu_{\mathrm{S}}\nabla\cdot\left[\mu_{\mathrm{S}}^{-1}\left(\frac{1}{2}D\nabla u - \boldsymbol{\Omega}u\right)\right] u \,\mu_{\mathrm{S}}^{-1}\d\bx
<-\alpha \int_{S^{n-1}}u^2\,\mu_{\mathrm{S}}^{-1}\d\bx.
\]

Therefore
\[
\int p^2\mu_{\mathrm{S}}\d\bx=\int u^2\mu_{\mathrm{S}}^{-1}\d\bx\stackrel{t\to\infty}{\rightarrow}0\ .
\]
\qed
\end{proof}

\begin{lem}\label{lem:unique}
 Equation~(\ref{RD_chang_var}) has a unique solution $u\in C\left([0,\infty);C^{\infty}(\mathrm{int}\left(S^{n-1}\right)\right)$.
\end{lem}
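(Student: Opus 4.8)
The plan is to prove existence and uniqueness for the transformed problem \eqref{RD_chang_var} by exploiting that, after the change of variables $u=\mu_{\mathrm{S}}p$, the equation is uniformly parabolic on every proper compact subset of $\mathrm{int}(S^{n-1})$, together with the a priori control provided by the weighted $L^2$ estimate already established in the proof of Lemma~\ref{lem:ncons_pde}. Concretely, I would first note that \eqref{RD_chang_var} can be written in the divergence form $\partial_t u = \mu_{\mathrm{S}}\nabla\cdot\bigl[\mu_{\mathrm{S}}^{-1}(\tfrac{\kappa}{2}D\nabla u - \boldsymbol{\Omega}u)\bigr]$, which we have already shown is well defined throughout $S^{n-1}$ (the singular factors $\mu_{\mathrm{S}}^{-1}$ cancel, by \eqref{change_zero} for the second-order part and by the explicit computation of $\boldsymbol{\Omega}\cdot\nabla\mu_{\mathrm{S}}/\mu_{\mathrm{S}}$ for the first-order part). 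Since the coefficients are smooth (fitnesses assumed smooth) and the operator is uniformly elliptic on any $K\Subset \mathrm{int}(S^{n-1})$, interior parabolic regularity gives $u\in C\bigl([0,\infty);C^{\infty}(\mathrm{int}(S^{n-1}))\bigr)$ for any solution, which is the regularity claimed; this is essentially Lemma~\ref{lem:classical_int} applied to $u$ instead of $p$.

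For uniqueness, the key step is the energy estimate in the weighted space $L^2(S^{n-1};\mu_{\mathrm{S}}^{-1}\d\bx)$. If $u_1,u_2$ are two solutions with the same initial data, their difference $w=u_1-u_2$ satisfies the same linear equation with zero initial data, and the computation at the end of the proof of Lemma~\ref{lem:ncons_pde} shows
\[
\frac{1}{2}\partial_t\int_{S^{n-1}} w^2\mu_{\mathrm{S}}^{-1}\,\d\bx
= \int_{S^{n-1}}\mu_{\mathrm{S}}\nabla\cdot\Bigl[\mu_{\mathrm{S}}^{-1}\bigl(\tfrac{\kappa}{2}D\nabla w - \boldsymbol{\Omega}w\bigr)\Bigr]\,w\,\mu_{\mathrm{S}}^{-1}\,\d\bx
< -\alpha \int_{S^{n-1}} w^2\,\mu_{\mathrm{S}}^{-1}\,\d\bx,
\]
using the spectral bound $\lambda_0<0$ for the associated (regularised, then limiting) eigenvalue problem \eqref{for:evp2}. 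Grönwall then forces $\int w^2\mu_{\mathrm{S}}^{-1}\,\d\bx\equiv 0$, hence $w\equiv 0$ on $\mathrm{int}(S^{n-1})$, which is uniqueness. Existence can be obtained by the standard Galerkin or vanishing-viscosity scheme on the regularised problems \eqref{for:evp2}-type operators (replace $D$ by $D^{(\eps)}$, $\mu_{\mathrm{S}}$ by $\mu_{\mathrm{S}}^{(\eps)}$), derive the same energy bound uniformly in $\eps$, extract a weakly convergent subsequence as $\eps\to 0^+$, and upgrade to interior smoothness by the regularity just discussed.

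The main obstacle is making the weighted energy identity rigorous: the integration by parts leading to the displayed inequality involves the singular weight $\mu_{\mathrm{S}}^{-1}$, so one must justify that the boundary terms at $\partial S^{n-1}$ vanish. This requires showing that $u=\mu_{\mathrm{S}}p$ decays sufficiently fast near the boundary — concretely that $\mu_{\mathrm{S}}^{-1}|\nabla u|^2$ and $\mu_{\mathrm{S}}^{-1}|u|^2$ are integrable and the flux $\mu_{\mathrm{S}}^{-1}(\tfrac{\kappa}{2}D\nabla u - \boldsymbol{\Omega}u)\cdot n$ tends to zero — which is precisely where the degenerate structure of $D$ at $\partial S^{n-1}$ is used, since $D\nabla$ degenerates in the normal direction at the same rate that $\mu_{\mathrm{S}}^{-1}$ blows up. I would handle this by carrying out the estimate first on the regularised problems (where the weight $\mu_{\mathrm{S}}^{(\eps)}$ is bounded below and everything is classical, so there are genuinely no boundary terms because $\varphi^{(\eps)}=0$ on $\partial S^{n-1}$ forces the corresponding homogeneous Dirichlet condition), obtain the bound with a constant $\alpha$ independent of $\eps$, and only then pass to the limit — thereby avoiding ever integrating by parts directly against the singular weight.
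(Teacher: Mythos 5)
Your proposal is correct in outline and shares its skeleton with the paper's proof --- both regularise the coefficients as in Lemma~\ref{lem:ncons_pde} (replacing $D$ by $D^{(\eps)}$ and $\mu_{\mathrm{S}}$ by $\mu_{\mathrm{S}}^{(\eps)}$ so that the problem becomes uniformly parabolic), both derive uniform-in-$\eps$ bounds and extract a convergent subsequence as $\eps\downarrow0$, and both obtain the claimed interior regularity from Lemma~\ref{lem:classical_int}. Where you genuinely diverge is the uniqueness step: the paper disposes of it in one line by invoking the weak-parabolicity uniqueness theory of Lieberman, whereas you give a self-contained argument by applying the weighted energy identity of Lemma~\ref{lem:ncons_pde} to the difference $w=u_1-u_2$ in $L^2(S^{n-1};\mu_{\mathrm{S}}^{-1}\,\d\bx)$ and concluding via the spectral bound $\lambda_0<0$ and Gr\"onwall. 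Your route buys transparency and reuses machinery already built in the paper; the paper's citation buys brevity and, in principle, covers a wider solution class. One caveat you should make explicit: your plan justifies the weighted energy inequality only for solutions \emph{constructed} as limits of the regularised problems, but uniqueness requires the inequality (or some substitute) for an \emph{arbitrary} competitor solution of the degenerate equation, whose boundary behaviour is not controlled by interior regularity alone. You correctly identify the vanishing of the boundary flux against the singular weight $\mu_{\mathrm{S}}^{-1}$ as the crux, but deferring it entirely to the regularised level does not by itself close this loop; some a priori decay of $u$ near $\partial S^{n-1}$ (or an appeal to a uniqueness theorem for degenerate parabolic equations, as the paper does) is still needed. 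The paper is equally terse on this point, so this is a refinement rather than a defect peculiar to your argument.
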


\begin{proof}
Consider equation~(\ref{RD_chang_var}) with $D=D^{(\eps)}$ and $\mu_{\mathrm{S}}=\mu_{\mathrm{S}}^{(\eps)}$, with $D^{(\eps)}$ and $\mu_{\mathrm{S}}^{(\eps)}$ as in Lemma~\ref{lem:ncons_pde}.
For $\eps>0$, it is uniformly parabolic, and hence it has a unique solution with the required regularity. 

We write (\ref{RD_chang_var}) in weak form as
\begin{align*}
& \int_0^\infty\int_{\slf^{n-1}}u^{(\eps)}(t,x)\partial_t\phi(t,x)\left(\mu_{\mathrm{S}}^{(\eps)}\right)^{-1}\d\bx\d t\\
&\qquad
+\int_0^\infty\int_{\slf^{n-1}}\left(\mu_{\mathrm{S}}^{(\eps)}\right)^{-1}\left(\frac{1}{2}D\nabla u^{(\eps)}-\boldmath{\Omega}u^{(\eps)}\right)\cdot\nabla\phi(t,x)\d\bx\d t\\
&\qquad+\int_{\slf^{n-1}}u^{(\eps)}(0,x)\phi(0,x)\left(\mu_{\mathrm{S}}^{(\eps)}\right)^{-1}\d\bx=0.
\end{align*}
We now observe that any such solution is bounded in $\aleph=L^2((0,T);W^{2,1}_0)$. Hence,  one can select a sequence $\eps_k\downarrow0$ such that $u^{(\eps_k)}\to u^{*}\in\aleph$.  Since \eqref{RD_chang_var} is weakly parabolic,  such a solution must be unique --- \cite{Lieberman1996}. Finally, regularity follows from Lemma~\ref{lem:classical_int}.
\qed
\end{proof}

The preceding two lemmas have an important consequence:

\begin{cor}
\label{cor:no_cons}
 No solution to (\ref{replicator_diffusion}) in the classical sense can satisfy the required conservation laws. In particular, this shows that the weak-formulation presented in Theorem~\ref{thm:weak_limits} is not only a device for obtaining the continuous limit, but it turns out to be the correct formulation.
\end{cor}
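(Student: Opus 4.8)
The plan is to argue by contradiction, pitting conservation of total probability against the interior decay of Lemma~\ref{lem:ncons_pde}. Suppose $p$ were a classical solution of \eqref{replicator_diffusion} with nonzero total mass, $\int_{S^{n-1}}p(\bx,0)\,\d\bx\ne 0$ (this is the relevant case, since any continuous limit of the discrete model carries total mass $1$), and suppose in addition that $p$ satisfies the conservation laws \eqref{continuous_conservation_laws}; by Lemma~\ref{lem:unique} there is in any case exactly one classical solution. Since the constant $\rho_0\equiv 1$ lies trivially in the kernel of \eqref{adjoint} --- equivalently, one may sum \eqref{continuous_conservation_laws} over $i$ --- the conservation laws would force
\[
\int_{S^{n-1}}p(\bx,t)\,\d\bx=\int_{S^{n-1}}p(\bx,0)\,\d\bx\ne 0,\qquad t\ge 0 .
\]
I would then invoke Lemma~\ref{lem:classical_int} (so that $p$ is $C^\infty$ in $\mathop{\mathrm{int}}(S^{n-1})$ and solves \eqref{replicator_diffusion} there in the classical sense) together with Lemma~\ref{lem:ncons_pde}, which gives $p(\bx,t)\to 0$ as $t\to\infty$ for every $\bx\in\mathop{\mathrm{int}}(S^{n-1})$, with the sharper exponential decay $\int_{S^{n-1}}p(\bx,t)^2\mu_{\mathrm{S}}(\bx)\,\d\bx\to 0$ coming from the energy estimate in its proof. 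Since $\mathop{\mathrm{int}}(S^{n-1})$ has full Lebesgue measure in $S^{n-1}$, once the limit is moved inside the $\bx$-integral one obtains $\int_{S^{n-1}}p(\bx,t)\,\d\bx\to 0$, contradicting the displayed identity; hence no such classical solution can satisfy the conservation laws.

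For the second assertion I would combine this with Subsection~\ref{ssec:conservation}: the limit $p$ furnished by Theorems~\ref{thm:weak_conv}--\ref{thm:weak_limits} inherits the conservation laws \eqref{continuous_conservation_laws} from the discrete ones \eqref{discrete_conservation_laws}, so by the first part it cannot be a classical solution of \eqref{replicator_diffusion}; it is genuinely measure-valued, whence the weak formulation \eqref{weak:replicator_diffusion} is the correct one, not just a technical device for extracting the limit.

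The hard part will be justifying the interchange of $\lim_{t\to\infty}$ with $\int_{S^{n-1}}(\cdot)\,\d\bx$: a priori the absolutely continuous profile $p(\cdot,t)$ could run off into an ever thinner, ever taller boundary layer that keeps $\int_{S^{n-1}}p\,\d\bx$ fixed while $p\to 0$ pointwise in the interior --- precisely the concentration that the degeneracy of $D$ on $\partial S^{n-1}$ permits, and the very reason one is forced to pass to measures. I would exclude it either through a parabolic maximum principle, when the zeroth-order coefficient produced by rewriting $\tfrac{\kappa}{2}\sum_{ij}\partial_{ij}^2(D_{ij}p)-\sum_i\partial_i(\Omega_ip)$ in non-divergence form is nonpositive (as happens in the one-dimensional neutral case), or, more robustly, through a uniform-in-$t$ bound $\int_{S^{n-1}}|p(\bx,t)|^{q}\mu_{\mathrm{S}}(\bx)\,\d\bx\le C$ for some $q>2$, obtained by a weighted energy estimate in the variable $u=\mu_{\mathrm{S}}p$: interpolating such a bound with the $L^2(\mu_{\mathrm{S}}\,\d\bx)$-decay from Lemma~\ref{lem:ncons_pde} yields $\int_{S^{n-1}}|p|^{s}\mu_{\mathrm{S}}\,\d\bx\to 0$ for each $s\in(2,q)$, and then Hölder's inequality against $\mu_{\mathrm{S}}^{-1/(s-1)}$ --- which is Lebesgue integrable on $S^{n-1}$ exactly because $s>2$ --- upgrades this to $\int_{S^{n-1}}|p(\bx,t)|\,\d\bx\to 0$, closing the argument.
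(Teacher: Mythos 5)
Your argument coincides with the paper's: Corollary~\ref{cor:no_cons} is stated there without a written proof, as an ``important consequence'' of Lemma~\ref{lem:ncons_pde} (interior decay of the unique classical solution furnished by Lemma~\ref{lem:unique}) clashing with conservation of total mass obtained by summing \eqref{continuous_conservation_laws} over $i$ --- which is exactly your contradiction. The one subtlety you flag, interchanging $\lim_{t\to\infty}$ with $\int_{S^{n-1}}(\cdot)\,\d\bx$, is a genuine gap in that implicit argument (Cauchy--Schwarz against the weighted decay $\int p^2\mu_{\mathrm{S}}\,\d\bx\to0$ fails because $\mu_{\mathrm{S}}^{-1}$ is not integrable on the simplex), and since neither the paper nor your sketched higher-integrability/interpolation remedy actually carries out the missing estimate, your write-up is, if anything, the more candid account of what still needs to be justified.
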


In view of Corollary~\ref{cor:no_cons}, we turn back to the weak formulation given by \eqref{weak:replicator_diffusion} or, equivalently, to \eqref{replicator_diffusion} together with the conservation laws \eqref{continuous_conservation_laws}. In what follows, we obtain some more information about such solutions.

\begin{rmk}\label{rmk:decomp}
 As an extension of Lemma~4.1 in \cite{Chalub_Souza:CMS_2009}, we observe that if $p$ is a Radon measure in $S^{n-1}$,
we can write $p=q+r$, with $\sing(q)\in\partial S^{n-1}$ and $\sing(r)\in \mathop{\mathrm{int}} S^{n-1}$. 
\end{rmk}

\begin{prop}[Face Projections]
\label{prop:face_proj}
 Let $1\leq k<n$ and let $p$ be a solution to \eqref{weak:replicator_diffusion}.  Let $S^{k-1}$ be a face of $S^{k}$. Assume that $\sing(p)\cap S^{k-1}\not=\emptyset$. Then, over $S^{k-1}$, $p$ satisfies \eqref{weak:replicator_diffusion} in one less dimension
with forcing given by the regular part of $p$ evaluated at $x_i=0$, for a certain value of $i$.
\end{prop}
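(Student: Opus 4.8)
The plan is to localize the weak formulation \eqref{weak:replicator_diffusion} near the facet, use the regular/singular splitting of Remark~\ref{rmk:decomp}, and identify the forcing with the degenerate normal half-flux of the regular part of $p$. Since the argument is insensitive to the ambient dimension, it suffices to treat a top-dimensional projection: by the preceding stage of the induction we may assume that $p$, restricted to $S^k$, satisfies the $k$-dimensional weak replicator-diffusion equation (any forcing inherited from an earlier step plays no role below), so I would relabel and take $S^k=S^{n-1}$ and $S^{k-1}=\{x_i=0\}\cap S^{n-1}$ for a fixed index $i$. By Remark~\ref{rmk:decomp} write $p=r+\sigma$ with $\sing(r)\subset\mathrm{int}\,S^{n-1}$ and $\sing(\sigma)\subset\partial S^{n-1}$; by Lemma~\ref{lem:classical_int}, $r$ is a classical solution of \eqref{replicator_diffusion} in $\mathrm{int}\,S^{n-1}$. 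Let $p_i$ denote the component of $\sigma$ whose singular support lies in $\mathrm{relint}(S^{k-1})$; this is the measure whose equation we must identify.

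Next I would insert localized test functions. Fix an arbitrary admissible test function $h(\bx',t)$ for the $(k-1)$-dimensional problem on $S^{k-1}$, with $\mathrm{supp}\,h$ compact in $\mathrm{relint}(S^{k-1})\times[0,\infty)$; extend $h$ to a neighbourhood so that it is independent of $x_i$, and set $g_\delta(\bx,t)=\chi_\delta(x_i)h(\bx',t)$, where $\chi_\delta\in C^\infty_c([0,1))$ equals $1$ near $0$ and is supported in $\{x_i<\delta\}$. Because $\mathrm{supp}\,h$ stays away from the facets $\{x_j=0\}$, $j\ne i$, the contribution of every singular part of $p$ other than $p_i$ vanishes for $\delta$ small. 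On $\{x_i=0\}$ we have $\chi_\delta\equiv1$, hence $\partial_ig_\delta=\partial^2_{ii}g_\delta=\partial^2_{ij}g_\delta=0$ there, so the $p_i$-part of the right-hand side of \eqref{weak:replicator_diffusion} reduces to $\int p_i\bigl[\tfrac{\kappa}{2}\sum_{j,l\ne i}D_{jl}\partial^2_{jl}h+\sum_{j\ne i}\Omega_j\partial_jh\bigr]$, and a direct verification — using $\bar\psi|_{x_i=0}=\sum_{j\ne i}x_j\psi^{(j)}|_{x_i=0}$ — shows that $D_{jl}|_{x_i=0}$ and $\Omega_j|_{x_i=0}$, $j,l\ne i$, are precisely the replicator-diffusion coefficients of the subsimplex $S^{k-1}$ with the induced fitnesses. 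Likewise the $p_i$-part of the left-hand side reduces to $-\int\!\int p_i\,\partial_th-\int p_i(\cdot,0)h(\cdot,0)$.

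It then remains to compute $\lim_{\delta\to0}$ of the contribution of the regular part $r$. The terms $\tfrac{\kappa}{2}D_{jl}\partial^2_{jl}g_\delta=\tfrac{\kappa}{2}D_{jl}\chi_\delta(x_i)\partial^2_{jl}h$, the analogous first-order terms with $j,l\ne i$, and $-\int\!\int r\,\partial_tg_\delta$ all carry the factor $\chi_\delta(x_i)$, whose support collapses to a null set, so they vanish against the integrable density $r$. The remaining terms involve $\partial_ig_\delta=\chi_\delta'(x_i)h$ and $\partial^2_{ii}g_\delta=\chi_\delta''(x_i)h$ and are supported in the thin slab $\{0<x_i<\delta\}$; there $D_{ii}=x_i(1-x_i)$ and $\Omega_i=x_i(\psi^{(i)}-\bar\psi)$ vanish linearly, so the drift and cross contributions are $O(\delta)$, while one integration by parts in $x_i$ turns $\tfrac{\kappa}{2}\int x_i(1-x_i)\chi_\delta''(x_i)r\,\d x_i$ into $-\tfrac{\kappa}{2}\int\bigl[(1-2x_i)r+x_i(1-x_i)\partial_{x_i}r\bigr]\chi_\delta'(x_i)\,\d x_i$, whose limit as $\delta\to0$ is $\tfrac{\kappa}{2}\,r|_{x_i=0}$ (the second summand is $O(\delta)$, and $\int(-\chi_\delta')\,\d x_i=1$). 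Collecting the three pieces yields exactly the weak formulation of \eqref{replicator_diffusion} in dimension $k-1$ on $S^{k-1}$ for $p_i$, with forcing $\tfrac{\kappa}{2}\,r|_{x_i=0}$, i.e. the regular part of $p$ evaluated at $x_i=0$.

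The main obstacle is analytic rather than algebraic: the slab computation presupposes that the classical solution $r$ — known a priori only in the open interior — admits a trace on $\mathrm{relint}(S^{k-1})$ and that the degenerate flux is well behaved there, namely $x_i\,\partial_{x_i}r\to0$ and $r\to r|_{x_i=0}$ as $x_i\downarrow0$, locally uniformly on $\mathrm{relint}(S^{k-1})$. Establishing this boundary regularity for the degenerate parabolic operator \eqref{replicator_diffusion} is where essentially all the work lies; I would obtain it from the boundary theory for operators of Fichera type already invoked for \eqref{replicator_diffusion}, or, failing that, by the regularized-approximation argument used in the proof of Lemma~\ref{lem:ncons_pde}, passing to the limit $\eps\to0$ in estimates uniform up to the face. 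Once the trace is under control, the remaining steps are the routine cutoff limits and the single integration by parts sketched above.
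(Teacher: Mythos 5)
Your proof is correct and follows essentially the same route as the paper's: the decomposition $p=r+\sigma$ of Remark~\ref{rmk:decomp}, product test functions of the form $\eta(x_i)h(\bx',t)$ concentrated near the face, and an integration by parts in $x_i$ that extracts the degenerate normal flux $\tfrac{\kappa}{2}\,r|_{x_i=0}$ as the forcing for the lower-dimensional weak formulation. The trace/boundary-regularity issue you flag for $r$ is genuine, but it is equally present (and left implicit) in the paper's argument, which invokes only the interior smoothness of $r$ from Lemma~\ref{lem:classical_int} before integrating by parts up to the boundary.
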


\begin{proof}
Assume, without loss of generality, that $i=1$.
 In view of remark~\ref{rmk:decomp}, we can write $p=q+r$, where $\sing(q)\subset \slf^{n-2}$  with the singular support of $r$ lying in the complement with respect to the full simplex. Moreover, we can also assume, without loss of generality, that $S^{n-2}$ is given by the intersection of the hyperplane $x_1=0$ with $S^{n-1}$. Let us write $\bx=(x_2,\ldots,x_n)$. Let $h$ be an appropriate test function in $S^{n-2}$, satisfying $h(\bx,0)=0$ and let $\eta(x_1)\in C_c([0,1])$, with $\eta(0)=1$. Then  $g=\eta h$ is an appropriate test function for $S^{n-1}$ and a direct computation with \eqref{weak:replicator_diffusion} then yields
\begin{align*}
&-\int_0^{\infty}\int_{S^{n-1}}p(x_1,\bx,t)\partial_tg(x_1,\bx,t)\,\d\bx\,\d t\\ 
&\quad=\frac{\kappa}{2}\int_0^{\infty}\int_{S^{n-1}}p(x_1,\bx,t)\left(\sum_{i,j=1}^{n-1}x_i(\delta_{ij}-x_j)\partial^2_{ij}g(x_1,\mathbf{x},t)\right)\,\d\bx\,\d t  \\
&\qquad+ \int_0^{\infty}\int_{S^{n-1}}p(x_1,\mathbf{x},t)\left[\sum_{j=1}^{n-1}x_j\left(\psi^{(j)}(x_1,\mathbf{x})-\bar\psi(x_1,\mathbf{x})\right)\partial_{j}g(x_1,\mathbf{x},t)\right]\d\mathbf{x}.
\end{align*}
Over $x_1=0$, on using the definition of $g$, we find that
\begin{align*}
 -\int_0^{\infty}\int_{S^{n-2}}q(\bx,t)\partial_th(\bx,t)\,\d\bx\,\d t 
&=\frac{\kappa}{2}\int_0^{\infty}\int_{S^{n-2}}q(\bx,t)\left(\sum_{i,j=2}^{n-1}x_i(\delta_{ij}-x_j)\partial^2_{ij}h(\mathbf{x},t)\right)\,\d\bx\,\d t \\
&\quad+\int_0^{\infty}\int_{S^{n-2}}q(\mathbf{x},t)\left[\sum_{j=2}^{n-1}x_j\left(\psi^{(j)}(\mathbf{x})-\bar\psi(\mathbf{x})\right)\partial_{j}h(\mathbf{x},t)\right]\d\mathbf{x}\,\d t.
\end{align*}
For $r$, we have
\begin{align*}
&-\int_0^{\infty}\int_{S^{n-1}}r(x_1,\bx,t)\partial_tg(x_1,\bx,t)\,\d\bx\,\d t\\ 
&\quad=\frac{\kappa}{2}\int_0^{\infty}\int_{S^{n-1}}r(x_1,\bx,t)\left(\sum_{i,j=1}^{n-1}x_i(\delta_{ij}-x_j)\partial^2_{ij}g(x_1,\mathbf{x},t)\right)\,\d\bx\,\d t  \\
&\qquad+ \int_0^{\infty}\int_{S^{n-1}}r(x_1,\mathbf{x},t)\left[\sum_{j=1}^{n-1}x_j\left(\psi^{(j)}(x_1,\mathbf{x})-\bar\psi(x_1,\mathbf{x})\right)\partial_{j}g(x_1,\mathbf{x},t)\right]\d\mathbf{x}\,\d t.
\end{align*}

By Lemma~\ref{lem:classical_int}, $r$ is smooth. Therefore, the above equation can be integrated by parts to yield,
an integral on $S^{n-1}$ that will cancel out identically, since $r$ is a classical solution to \eqref{replicator_diffusion}, and
a number of integrals over the various faces of $S^{n-1}$. In particular, at $x_1=0$, we find that
\[
 0=-\frac{\kappa}{2}\int_0^{\infty}\int_{S^{n-2}}r(0,\bx,t)h(\bx,t)\,\d\bx\,\d t.
\]
By collecting together the two calculations on $x_1=0$, we obtain  the result.
\qed
\end{proof}

In what follows, we  shall need some preliminaries. Recall --- see \cite{Stanley1996} --- that to the simplex $\slf^{n-1}$ is associated a corresponding  $f$-vector, such that the entry $i+1$ ($f_{i+1}$)  is the number of $i$-dimensional subsimplices of $\slf^{n-1}$.  We shall assume that, for each dimension $i$, there is a definite order of the subsimplices $\slf^{i,j}$, with $i=0,\ldots,n-1$ and $j=1,\ldots,f_{i+1}$. Moreover, we define the adjacent operator by $\ad(j,k)$ which denotes the $k$th adjacent subsimplex of dimension $i+1$ to $\slf^{ij}$. Notice that there are $n-i$ such simplexes.

\begin{thm}[Solution Structure]
\label{thm:soln_struct} 
Equation \eqref{weak:replicator_diffusion}, with a given initial condition  $p^\ini\in\bm$, has a unique solution $p\in L^\infty\left([0,T];\bm\right)$.  Moreover,  let $\delta^{ij}$ be the Radon measure with unit mass uniformly supported on $\slf^{ij}$. Then the solution $p$ can be written as
\begin{equation}
p(t,x)=p_{n1}+\sum_{\left(i,j\right)\in\Ind}p_{ij}\delta^{ij},\quad \Ind=\left(\cup_{i=0}^{n-1}\{i\}\right)\times\{1,\ldots,f_{i+1}\},
 \label{strut_soln}
\end{equation}
where $p_{ij}$ satisfies 
\begin{align}
 &-\int_0^{\infty}\int_{S^{ij}}p_{ij}(\bx,t)\partial_tg(\bx,t)\,\d\bx\,\d t - \int_{S^{ij}}p_{ij}(\bx,t_0)g(\bx,t_0)\,\d\bx  \nonumber\\
&\quad=\frac{\kappa}{2}\int_0^{\infty}\int_{S^{ij}}p_{ij}(\bx,t)\left(\sum_{r,s=1}^{n-1}x_r(\delta_{rs}-x_s)\partial^2_{rs}g(\mathbf{x},t)\right)\,\d\bx\,\d t   \label{weak:rd_forcing}\\
&\qquad+ \int_0^{\infty}\int_{S^{ij}}p_{ij}(\mathbf{x},t)\left[\sum_{s=1}^{n-1}x_s\left(\psi^{(s)}(\mathbf{x})-\bar\psi(\mathbf{x})\right)\partial_{s}g(\mathbf{x},t)\right]\d\mathbf{x}\,\d t\nonumber\\
&\qquad \quad + \int_0^{\infty}\int_{S^{ij}}\sum_{k=1}^{n-i}\left.p_{(i+1)\ad(j,k)}\right|_{S^{ij}}\,\d\bx\,\d t.\nonumber
\end{align}
The initial condition for \eqref{weak:rd_forcing} will be denoted $p^\ini_{ij}$, and it  is obtained from $p^\ini$ by applying the decomposition described in Remark~\ref{rmk:decomp} recursively.
\end{thm}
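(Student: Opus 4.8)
Existence of a solution $p\in L^\infty\left([0,T];\bm\right)$ is already supplied by Theorems~\ref{thm:weak_conv} and~\ref{thm:weak_limits} in the balanced regime $\mu=\nu=1$, so the plan concentrates on the representation~\eqref{strut_soln} and on uniqueness. I would proceed by a descent through the face lattice of $\slf^{n-1}$: first peel off the regular part of $p$ and identify it with the leading term $p_{n1}$; then restrict the remaining boundary measure to each facet and invoke Proposition~\ref{prop:face_proj} to see that it solves, in one less dimension, a replicator--diffusion equation forced by the trace of $p_{n1}$; and then iterate, one dimension at a time, all the way down to the vertices. Uniqueness then follows by running the same descent on the difference of two solutions and using the uniqueness of each floor of the resulting hierarchy.

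For the decomposition, apply Remark~\ref{rmk:decomp} to write $p=p_{n1}+q$ with $\sing(p_{n1})\subset\mathrm{int}\,\slf^{n-1}$ and $\sing(q)\subset\partial\slf^{n-1}$. Testing~\eqref{weak:replicator_diffusion} against functions supported in $\mathrm{int}\,\slf^{n-1}$ and applying Lemma~\ref{lem:classical_int} shows that $p_{n1}$ is a classical solution of~\eqref{replicator_diffusion} there, i.e.\ the leading term of~\eqref{strut_soln}. The measure $q$ lives on $\partial\slf^{n-1}$, the union of the facets of $\slf^{n-1}$; by Proposition~\ref{prop:face_proj}, the restriction of $q$ to each facet is a weak solution of the replicator--diffusion equation in one less dimension whose forcing is the trace of $p_{n1}$ on that facet. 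One then repeats the argument: decompose the restriction of $q$ to each facet into a regular part (again classical, by Lemma~\ref{lem:classical_int}) and a part supported on its boundary, and apply Proposition~\ref{prop:face_proj} once more. Descending inductively down to the vertices --- where the operator $\mathcal{L}$ degenerates and $\delta^{ij}$ merely carries a time-independent mass --- yields~\eqref{strut_soln} with each $p_{ij}$ satisfying~\eqref{weak:rd_forcing}; the bookkeeping is exactly the one recorded by the $f$-vector and the adjacency operator $\ad$, because a subsimplex $S^{ij}$ inherits one forcing contribution $\left.p_{(i+1)\ad(j,k)}\right|_{S^{ij}}$ from each of the $n-i$ higher-dimensional subsimplices incident to it. Conversely, a direct computation with test functions of product form (as in the proof of Proposition~\ref{prop:face_proj}) shows that any solution of the hierarchy~\eqref{weak:rd_forcing} reassembles through~\eqref{strut_soln} into a solution of~\eqref{weak:replicator_diffusion}, so the representation is both necessary and sufficient.

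For uniqueness, let $w=p^{(1)}-p^{(2)}$ be the difference of two solutions with the same initial datum and decompose it as above, $w=w_{n1}+\sum_{(i,j)\in\Ind}w_{ij}\delta^{ij}$. The leading regular part $w_{n1}$ is a classical solution of~\eqref{replicator_diffusion} with zero initial data and no forcing; passing to $u=\mu_{\mathrm{S}}w_{n1}$ in~\eqref{RD_chang_var}, and using that Lemma~\ref{lem:classical_int} places $u$ in the class in which Lemma~\ref{lem:unique} applies, forces $w_{n1}\equiv0$. Then, on each facet, the corresponding regular part of $w$ solves the one-less-dimensional problem with the trace of $w_{n1}$ as forcing, which now vanishes, and with zero initial data; since the reduced operator has the same degenerate structure, the analogues of Lemmas~\ref{lem:ncons_pde} and~\ref{lem:unique} apply in each subsimplex and force that part to vanish as well. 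Inducting down the face lattice annihilates every term of $w$, including the vertex masses, so $p^{(1)}=p^{(2)}$.

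The main obstacle is the descent step. Faces of $\slf^{n-1}$ of a given dimension overlap along their common sub-faces, so the splitting of a boundary measure over those faces is not canonical, and a singular measure supported on a lower sub-face can collect contributions ``from several sides''. This multiplicity is precisely what the forcing term $\sum_{k=1}^{n-i}\left.p_{(i+1)\ad(j,k)}\right|_{S^{ij}}$ in~\eqref{weak:rd_forcing} is built to encode, and the delicate point is to check that carrying out the decomposition of Remark~\ref{rmk:decomp} consistently at every level produces exactly these terms and no spurious ones --- the trace computation in Proposition~\ref{prop:face_proj} being the essential tool. A secondary technical point is that the diffusion matrix degenerates on the lower-dimensional faces; this is handled by repeating, inside each subsimplex, the regularisation argument used in the proof of Lemma~\ref{lem:ncons_pde}.
\qed
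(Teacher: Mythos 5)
Your proposal follows essentially the same route as the paper's proof: peel off the regular part via Remark~\ref{rmk:decomp} and Lemma~\ref{lem:classical_int}, identify it with the unique classical solution of Lemma~\ref{lem:unique}, use Proposition~\ref{prop:face_proj} to descend through the face lattice with the forcing terms recorded by the adjacency operator, and establish uniqueness by running the same descent on the difference of two solutions. Your explicit attention to the overlap of faces along common sub-faces and to the degeneracy of the diffusion matrix on lower-dimensional faces makes the argument more careful than the paper's rather terse version, but it is the same proof.
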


\begin{proof}
By direct substitution into  \eqref{weak:replicator_diffusion} and after a integrating by parts starting from $\slf^{n-1}$, proceeding downwards until the vertices and using Lemma~\ref{prop:face_proj} one can verify that \eqref{strut_soln} is indeed a solution. To verify uniqueness, let $\tp$ a solution to \eqref{weak:replicator_diffusion}.
Consider  $\tp$ restricted to  $S^{n-1}$ and let $p_{n1}$ be the classical solution guaranteed by Lemma~\ref{lem:unique}.
By considering \eqref{weak:replicator_diffusion} with test functions with compact support on $\mathop{\mathrm{int}} S^{n-1}$, we see that $\tp-p_{n1}$ vanishes. Therefore $\sing(\tp-p_{n1})\subset \partial S^{n-1}$. By Remark~\ref{rmk:decomp}, we can write $\tp=p_{n1}+q$, with $\sing(q)\subset\partial S^{n-1}$.  Now $\partial S^{n-1}$ is the union of
$f_{n-1}$ copies of $S^{n-2}$. By Proposition~\ref{prop:face_proj}, $q$ must satisfies \eqref{weak:rd_forcing} in one
less dimension in each of the subsimplices. Proceeding inductively, we can now choose a subsimplex  $S^{n-3}$ of $S^{n-2}$. Now, we repeat the argument above for each simplex $S^{n-2}$ which has $S^{n-3}$ as a subsimplex.
Iterate until arrive at the simplices of zero dimension to get the result.\qed
\end{proof}

This theorem leads to the following result:

\begin{thm}[Final State]
\label{thm:final_state}
Let
\[
 p^\infty(\bx)\bydef\lim_{t\to\infty}p(\bx,t)\ ,
\]
where $p$ is the solution of equation~(\ref{replicator_diffusion_eps}) subject to conservation laws~(\ref{continuous_conservation_laws}).
Then $p^\infty$ is a linear combination of point masses at the vertices of $S^{n-1}$,i.e,
\begin{equation}\label{final_state}
 p^\infty=\sum_{i=1}^n \pi_i\left[p^\ini\right]\delta_{\mathbf{e}_i}\ .
 \end{equation}
\end{thm}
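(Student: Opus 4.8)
The plan is to combine the structure theorem (Theorem~\ref{thm:soln_struct}) with the decay estimate from Lemma~\ref{lem:ncons_pde} applied at every level of the simplicial hierarchy, and then use the conservation laws~(\ref{continuous_conservation_laws}) to pin down the coefficients of the surviving Dirac masses. First I would invoke the decomposition~(\ref{strut_soln}), writing $p(t,\bx)=p_{n1}+\sum_{(i,j)\in\Ind}p_{ij}\delta^{ij}$, where each $p_{ij}$ solves the forced replicator-diffusion equation~(\ref{weak:rd_forcing}) on the subsimplex $S^{ij}$. The crucial observation is that the forcing term for $p_{ij}$ is built from the restrictions $p_{(i+1)\ad(j,k)}|_{S^{ij}}$ of the solutions one dimension up, so the hierarchy can be analysed top-down.

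Next I would show by downward induction on dimension that $p_{ij}(\cdot,t)\to 0$ as $t\to\infty$ for every $i\ge 1$, while for $i=0$ (the vertices) the masses $p_{0j}(t)$ converge to finite limits. The base case is $i=n-1$: here $p_{n1}$ is the unforced classical solution, and Lemma~\ref{lem:ncons_pde} gives $p_{n1}(\bx,t)\to 0$ in $\mathop{\mathrm{int}} S^{n-1}$, indeed in the weighted $L^2(\mu_S\,\d\bx)$ norm with exponential rate governed by the negative principal eigenvalue $\lambda_0$. For the inductive step, suppose all $p_{(i+1)k}$ decay (exponentially) to zero. Then the forcing term in~(\ref{weak:rd_forcing}) for $p_{ij}$ is a source that itself decays; applying the same weighted-$L^2$ energy estimate as in Lemma~\ref{lem:ncons_pde}, but now with this decaying inhomogeneity, and using Duhamel's principle together with the spectral gap $\lambda_0<0$ on $S^{ij}$, one concludes $p_{ij}(\cdot,t)\to 0$ as well. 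This continues down to $i=1$; at $i=0$ the "equation" on a vertex has no diffusion and no drift, so $p_{0j}$ satisfies $\d p_{0j}/\d t = \sum_k p_{(1)\ad(j,k)}|_{S^{0j}}$, whose right-hand side is integrable in $t$ (being exponentially small), hence $p_{0j}(t)$ converges to a finite limit $\pi_j$.

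Having established that $p^\infty=\sum_{i=1}^n \pi_i\,\delta_{\mathbf{e}_i}$ for some constants $\pi_i\ge 0$, I would finally identify the coefficients using the conservation laws. By~(\ref{continuous_conservation_laws}), $\int_{S^{n-1}}\rho_i(\bx)p(\bx,t)\,\d\bx$ is constant in $t$; passing to the limit $t\to\infty$ and using $\rho_i(\mathbf{e}_j)=\delta_{ij}$ from Theorem~\ref{thm:finite_cons}, we get $\pi_i=\int_{S^{n-1}}\rho_i(\bx)p^\ini(\bx)\,\d\bx$, which is exactly the fixation probability $\pi_i[p^\ini]$ as in the discussion following Theorem~\ref{thm:finite_cons}. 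The limit interchange is justified because the convergence $p(\cdot,t)\to p^\infty$ is in the weak-$*$ sense of measures and $\rho_i$ is continuous on the compact set $S^{n-1}$.

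I expect the main obstacle to be the inductive decay step for the \emph{forced} equations~(\ref{weak:rd_forcing}): Lemma~\ref{lem:ncons_pde} handles only the homogeneous case, so one must verify that the degenerate weighted energy estimate there remains valid with an inhomogeneous source term, that the source obtained by restricting a higher-dimensional solution to a face is regular enough (which requires the interior regularity of Lemma~\ref{lem:classical_int} together with boundary regularity up to the face), and that the exponential decay rates can be chosen uniformly across the finitely many subsimplices of each dimension so that Duhamel's estimate closes. Once these quantitative decay bounds are in place, the rest is bookkeeping over the $f$-vector of the simplex.
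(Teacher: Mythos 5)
Your proposal is correct and follows essentially the same route as the paper: decompose via the solution structure over the descending chain of subsimplices, show the regular part at each level decays using the (inhomogeneous extension of) Lemma~\ref{lem:ncons_pde}, and identify the coefficients of the surviving vertex masses through the conservation laws $\rho_i$ with $\rho_i(\mathbf{e}_j)=\delta_{ij}$. The technical obstacle you flag --- extending the weighted energy/spectral estimate to the forced equations on the faces --- is precisely the step the paper asserts without detailed proof, so your proposal is, if anything, more candid about where the work lies.
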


\begin{proof}
First, we observe that Lemma~\ref{lem:ncons_pde} still holds if applied to inhomogeneous version of \eqref{RD_chang_var}, provided
that the forcing decays for large times. The results now follows from a straightforward application of Proposition~\ref{prop:face_proj} together with Lemma~\ref{lem:ncons_pde} applied in a descending chain of simplices down to dimension 1.
Conservation of probability then yields that $p^\infty$ must be a sum of atomic measures at the vertices of $S^{n-1}$. On using the other conservation laws, we obtain the coefficients, and hence the result.
\qed
\end{proof}

\subsection{Duality and the Kimura equation}
\label{ssec:duality}

The formal adjoint of equation~(\ref{replicator_diffusion_eps}) (changing the flow of time from forward to backward) provides a generalization of the celebrated
Kimura equation~\citep{Kimura}, both including more types and allowing frequency dependent fitness: 
\begin{equation}\label{backward_kimura_full}
\partial_tf=\mathcal{L}^\dagger_{n-1,k}f\bydef\frac{\kappa}{2}\sum_{i,j=1}^{n-1}D_{ij}\partial_{ij}^2f+\sum_{i=1}^{n-1}\Omega_i\partial_if\ .
\end{equation}
In diffusion theory this equation is associated with a martingale problem for the diffusive continuous process. 
In genetics, the meaning of equation~(\ref{backward_kimura_full})  is seldom made clear and depends on the boundary conditions imposed. One possible and common interpretation is as follows: given an homogeneous state $\mathbf{e}_i\in\Delta S^{n-1}$, let 
$f_i(\mathbf{k},t)$ be the probability that given a population initially in a well-defined state
$\mathbf{k}\in S^{n-1}$ (i.e., $p^\ini(\mathbf{x})\bydef p(\mathbf{x},0)=\delta_{\mathbf{k}}(\mathbf{x})$)
we find the population fixed at the homogeneous state $\mathbf{e}_i$ at time $t$ (or before), 
i.e., 
$f_i(\mathbf{k},t)=\langle p(\cdot,t),\delta_{\mathbf{e}_i}\rangle$.
In this case, we need to find consistent boundary conditions. See~\citet{Maruyama,EtheridgeLNM}.

Let us study the fixation of type 1, represented by the state $\mathbf{e}_1$.
Let us now call $V_i$ the face of the
simplex with $x_i=0$ (type $i$ is absent). Then, $f_i\big|_{V_1}=0$. For $i\ne 1$, 
$f_i\big|_{V_i}$ is the solution of $\partial_t f=\mathcal{L}^\dagger_{n-2,k} f$,
where the type $i$ was omitted from the equation. As the faces of the simplex are
invariant under the adjoint evolution (one more fact to be attributed to lack of mutations
in the model), this represent the same problem in one dimension less. We continue
this procedure until we find the evolution in the edge
from vertex $1$ to vertex $i\ne 1$, $L_{1i}$. In this case, we have that $f\big|_{L_{1i}}:[0,1]\to\mathbb{R}$, the
restriction of $f_i$ to this edge, with $x_1$ being the fraction of type 1 individuals, is the solution
of 
\begin{equation}\label{Kimura}
\partial_t f=\frac{\kappa}{2}x_1(1-x_1)\partial^2_1f+
x_1(1-x_1)\left(\psi^{(1)}_{1i}(x_1)-\psi^{(i)}_{1i}(x_1)\right)\partial_1 f
\end{equation}
with boundary conditions given by $f(0)=0$ and $f(1)=1$ and $\psi^{(j)}_{1i}(x_1)=\psi^{(j)}(x_1\mathbf{e}_1+(1-x_1)\mathbf{e}_i)$ is the restriction of
$\psi^{(j)}$ to the edge $L_{1i}$.
The forward and backward versions of 
Equation~(\ref{Kimura}) are fully studied in the references~\citep{Chalub_Souza:CMS_2009,Chalub_Souza:TPB_2009}. 
For $\psi^{(1)}_{1i}-\psi^{(i)}_{1i}$ constant this is the 
Kimura equation.

\subsection{Strategy dominance}
\label{ssec:strat_domin}

Let us assume that $\psi^{(1)}(\mathbf{x})\ge\psi^{(i)}(\mathbf{x})$ for all $\mathbf{x}\in S^{n-1}$. This happens, for example, if we identify fitness functions with pay-offs in game theory, types with strategists, and if strategist 1 plays the Nash-equilibrium strategy.

Therefore, we prove
\begin{thm}
 If, for all states $\mathbf{x}\in S^{n-1}$, and all types $i=1,\dots,n$, $\psi^{(1)}(\mathbf{x})\ge\psi^{(i)}(\mathbf{x})$, then the
fixation probability of the first type is not less than the neutral fixation probability for any initial condition $p^\ini$; i.e,
\[
 \pi_1[p^\ini]\ge\pi_1^{\mathrm{N}}[p^\ini]\ .
\]
\end{thm}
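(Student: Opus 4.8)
The plan is to reduce the inequality to a pointwise comparison between the fixation-probability function $\rho_1$ and the linear function $x_1$. Recall from Theorem~\ref{thm:finite_cons} and the discussion following \eqref{continuous_conservation_laws} that $\rho_1$ is the unique solution in $\mathfrak{S}$ of the steady backward equation \eqref{adjoint} with $\rho_1(\mathbf{e}_j)=\delta_{1j}$, that $\pi_1[\delta_{\mathbf{x}_0}]=\rho_1(\mathbf{x}_0)$, and — using the conservation law $\frac{\d}{\d t}\int\rho_1 p\,\d\mathbf{x}=0$ together with the final-state formula \eqref{final_state} — that more generally $\pi_1[p^\ini]=\int_{S^{n-1}}\rho_1(\mathbf{x})\,p^\ini(\mathbf{x})\,\d\mathbf{x}$; likewise $\pi_1^{\mathrm{N}}[p^\ini]=\int_{S^{n-1}}x_1\,p^\ini(\mathbf{x})\,\d\mathbf{x}$, since in the neutral case $\rho_1(\mathbf{x})=x_1$. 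As $p^\ini\ge 0$, it thus suffices to prove
\[
 \rho_1(\mathbf{x})\ge x_1,\qquad \mathbf{x}\in S^{n-1}.
\]

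The key computation is that, under the dominance hypothesis, the coordinate function $x_1$ is a subsolution of the backward operator in \eqref{backward_kimura_full}. Since $\partial_i x_1=\delta_{i1}$ and $\partial^2_{ij}x_1=0$ for $i,j=1,\dots,n-1$, one has $\mathcal{L}^\dagger_{n-1,k}x_1=\Omega_1=x_1\bigl(\psi^{(1)}(\mathbf{x})-\bar\psi(\mathbf{x})\bigr)$, and writing $\bar\psi(\mathbf{x})=\sum_{k=1}^n x_k\psi^{(k)}(\mathbf{x})$ gives
\[
 \mathcal{L}^\dagger_{n-1,k}x_1=x_1\sum_{k=1}^n x_k\bigl(\psi^{(1)}(\mathbf{x})-\psi^{(k)}(\mathbf{x})\bigr)\ge 0
\]
on all of $S^{n-1}$, because $x_k\ge0$ and $\psi^{(1)}\ge\psi^{(k)}$. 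Hence $w\bydef\rho_1-x_1$ is a supersolution of a degenerate elliptic operator without zeroth-order term ($\mathcal{L}^\dagger_{n-1,k}w=-\mathcal{L}^\dagger_{n-1,k}x_1\le 0$), and $w$ vanishes at every vertex of $S^{n-1}$.

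I would then conclude $w\ge0$ by a maximum-principle argument carried out along the same descending chain of subsimplices used to construct $\rho_1$ in Theorem~\ref{thm:finite_cons}. The faces of the simplex are invariant under the adjoint flow (both $D$ and $\boldsymbol{\Omega}$ vanish in the direction normal to each face), so on each subsimplex $S^{ij}$ the function $\rho_1$ solves \eqref{adjoint} in the corresponding lower dimension and the computation above shows $x_1$ is still a subsolution there — identically zero on faces not containing $\mathbf{e}_1$, where also $\rho_1\equiv0$. One argues by induction on dimension: equality holds at the vertices; assuming $w\ge0$ on every subsimplex of dimension $\le m$, fix a subsimplex $\Sigma$ of dimension $m+1$. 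The matrix $(D_{rs})$ is positive definite on $\mathop{\mathrm{int}}\Sigma$ (indeed $\sum_{r,s}D_{rs}\xi_r\xi_s=\sum_r x_r\xi_r^2-(\sum_r x_r\xi_r)^2>0$ for $\xi\ne0$ by strict Jensen, since all relevant $x_r>0$ there), so $\mathcal{L}^\dagger_\Sigma$ is locally uniformly elliptic on $\mathop{\mathrm{int}}\Sigma$; as $w$ is continuous on $\overline{\Sigma}$, $C^2$ in the interior, and $\mathcal{L}^\dagger_\Sigma w\le 0$, the weak maximum principle gives $\min_{\overline{\Sigma}}w=\min_{\partial\Sigma}w\ge0$ by the induction hypothesis. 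Taking $\Sigma=S^{n-1}$ at the last step yields $\rho_1\ge x_1$ on the whole simplex, and integrating against $p^\ini\ge0$ finishes the proof.

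The main obstacle is exactly the degeneracy of the diffusion matrix $D$ on $\partial S^{n-1}$: one cannot apply a maximum principle directly on $S^{n-1}$ with boundary data on $\partial S^{n-1}$, since $w$ is not a priori nonnegative there. The resolution is the face-by-face induction above, which works precisely because the faces are invariant for the backward dynamics — the same structural fact underlying the construction of $\rho_1$ in Theorem~\ref{thm:finite_cons} — so that in the relative interior of each face the operator is genuinely elliptic and the classical weak maximum principle applies with continuity up to the (lower-dimensional) boundary.
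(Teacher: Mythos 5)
Your proposal is correct and follows essentially the same route as the paper: reduce to the pointwise inequality $\rho_1(\mathbf{x})\ge x_1$, observe that under the dominance hypothesis $w=\rho_1-x_1$ is a supersolution of the backward operator (since $-\Omega_1=-x_1(\psi^{(1)}-\bar\psi)\le 0$) vanishing at the vertices, and propagate nonnegativity up the face lattice by induction together with the maximum principle on each subsimplex, where the operator is genuinely elliptic in the relative interior. The only cosmetic difference is that the paper anchors its induction at $n=2$ by writing $\rho_1$ explicitly on an edge via the integral formula, whereas you start at the vertices and handle the edges by the same maximum-principle step --- both arguments are otherwise identical.
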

 
\begin{proof}
 First note that it is enough to prove that $\pi_1[\delta_{\mathbf{x}}]\ge\pi_1^{\mathrm{N}}[\delta_{\mathbf{x}}]=x_1$ for all 
$\mathbf{x}\in S^{n-1}$. The difference $\rho_1(\mathbf{x})-x_1$ satisfy
\[
 \frac{\kappa}{2}\sum_{i,j=1}^{n-1}D_{ij}\partial^2_{ij}\left(\rho_1(\mathbf{x})-x_1\right)+\sum_{i=1}^{n-1}\Omega_i\partial_i\left(\rho_1(\mathbf{x})-x_1\right)
=-\Omega_1=-x_1\left(\psi^{(1)}(\mathbf{x})-\bar\psi(\mathbf{x})\right)\le 0\ ,
\]
with vertex conditions $\rho_1(\mathbf{e}_i)-x_1(\mathbf{e}_i)=0$ for $i=1,\dots,n$.
Now, we proceed by induction in $n$. For the case $n=2$, the proof is in~\cite[Section 4.3]{Chalub_Souza:TPB_2009}; we reproduce it
here only for completeness. 

We write explicitly the equation for $\rho_1$:
\[
 \frac{\kappa}{2}x(1-x)\partial_x^2\rho_1+x\left(\psi^{(i)}(x)-\bar\psi(x)\right)\partial_x\rho_1=0
\]
with $\rho_1(0)=0$ and $\rho_1(1)=1$. We simplify the equation using the fact that 
$\psi^{(1)}(x)-\bar\psi(x)=(1-x)\left(\psi^{(1)}(x)-\psi^{(2)}(x)\right)$ and the solution is given by
\[
 \rho_1(x)=\frac{\int_0^x\exp\left[-\frac{2}{\kappa}\int_0^{\bar x}\left(\psi^{(1)}(\bar{\bar x})-\psi^{(2)}(\bar{\bar x})\right)\d \bar{\bar x}\right]\d \bar x}
{\int_0^1\exp\left[-\frac{2}{\kappa}\int_0^{\bar x}\left(\psi^{(1)}(\bar{\bar x})-\psi^{(2)}(\bar{\bar x})\right)\d \bar{\bar x}\right]\d \bar x}\ .
\]
As $\psi^{(1)}(x)\ge\psi^{(2)}(x)$, we conclude that 
\begin{align*}
&\frac{1}{x}\int_0^x\exp\left[-\frac{2}{\kappa}\int_0^{\bar x}\left(\psi^{(1)}(\bar{\bar x})-\psi^{(2)}(\bar{\bar x})\right)\d \bar{\bar x}\right]\d \bar x
\\&\qquad\qquad\ge
\int_0^1\exp\left[-\frac{2}{\kappa}\int_0^{\bar x}\left(\psi^{(1)}(\bar{\bar x})-\psi^{(2)}(\bar{\bar x})\right)\d \bar{\bar x}\right]\d \bar x\ .
\end{align*}
In particular, $\rho_1(x)\ge x$.

Now, assume that
$\rho_1(\mathbf{x})-x_1\ge 0$ for all $\mathbf{x}\in\partial S^{n-1}$. (Note that $\partial S^{n-1}$ is an union of a finite
number of $n-2$ dimensional simplexes, where by the principle of induction we assume the result valid.) Finally, we use the maximum
principle for subharmonic functions to conclude that the minimum cannot be in the interior of
the simplex~\citep{CourantHilbert2}. Therefore $\rho_1(\mathbf{x})\ge x_1$ for all $\mathbf{x}\in S^{n-1}$. 
\qed
\end{proof}

\section{The Replicator Dynamics}
\label{sec:replicator}

In Section~\ref{sec:infinite}, we proved that, when genetic drift and selection balance, then there is a special timescale such that the evolution of an infinite population can be described by a parabolic partial differential equation. Nevertheless, in applications one is usually interested in large but finite populations. In this case, an exact limit is not taken, and \eqref{weak:replicator_diffusion} can be taken as an approximation of this evolution. We shall discuss this further in the conclusions, but we observe that this equation might be a good approximation even when balance is not exact, i.e., when $\nu$ and $\mu$ are close but not equal to one. This could typically lead to an equation with $\kappa$ being either quite large or small. In the former case, a regular expansion in $\kappa$ shows that the evolutions is governed by \eqref{weak:diffusion}. On the other hand, in the latter case, one expects that the much simpler transport equation \eqref{convective_approximation} will be a good approximation 
for the evolution. Indeed,
in this section we show that \eqref{replicator_diffusion} can be uniformly approximated  by \eqref{convective_approximation}
in proper compact subsets of the simplex, and over a time interval shorter than $\kappa^{-1}$.

We start in Subsection~\ref{ssec:repODE} showing that the equation~(\ref{convective_approximation})
is formally equivalent to the replicator system. Afterwards,
in Subsection~\ref{ssec:peak}, we answer what we believe to be an important question: what exactly is the replicator equation modelling? In particular,
we will show, using a simple argument, that the replicator equation does not model the evolution of
the expected value (of a given trait) in the population, but the evolution of the most common trait  
conditional on the absence of extinctions. 
Finally, we show, in Subsection~\ref{ssec:local}, that the replicator ordinary differential
equation is a good approximation for the initial dynamics of the Wright-Fisher process, when 
$\kappa$ is small. As in Section~\ref{sec:forward}, we shall assume that the fitness functions are smooth.

\subsection{The replicator ODE and PDE}
\label{ssec:repODE}

We shall now study in more detail the equation~(\ref{convective_approximation}), which has 
a close connection with the replicator dynamics as shown below:
\begin{thm}\label{thm:replicator_eu}
Assume that $\Omega$ is Lipschitz.
 Let $\Phi_t(\bx)$ the flow map of 
\begin{equation}\label{replicator_ode}
 \frac{\d\mathbf{x}}{\d t}=\boldsymbol{\Omega}(\bx(t)).
\end{equation}
and let 
\[
 Q(\bx,t)=-\int_0^t(\nabla\cdot\boldsymbol{\Omega})(\Phi_{s-t}(\bx))\d s.
\]
Let $p^I\in\bm$ and assume that $\sing(p^I)\subset \mathop{\mathrm{int}}(S^{n-1})$ (see Remark~\ref{rmk:decomp}).
Then the solution to \eqref{convective_approximation} with initial condition $p^I$ is given by
\begin{equation}
p(\bx,t)=\e^{Q(\bx,t)}p^\ini\left(\Phi_{-t}(\bx)\right).
 \label{replicator_pde:soln}
\end{equation}
\end{thm}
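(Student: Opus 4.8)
The plan is to recognize \eqref{convective_approximation} as a linear transport equation in conservation (divergence) form and to solve it by the method of characteristics, while keeping track of the measure-theoretic setting since the initial datum is only a Radon measure. Writing the equation as $\partial_t p = -\nabla\cdot(\boldsymbol{\Omega}p)$, the characteristic curves are exactly the integral curves of the replicator ODE \eqref{replicator_ode}, whose flow map $\Phi_t$ is well defined and smooth on $S^{n-1}$ by the assumed Lipschitz (indeed, under the smoothness hypothesis of this section) regularity of $\boldsymbol{\Omega}$ and the invariance of the simplex under \eqref{replicator_ode} --- this invariance being precisely the statement $\boldsymbol{\Omega}\cdot\mathbf{n}=0$ on $\partial S^{n-1}$, which follows from $\sum_i\Omega_i=0$ and $\Omega_i|_{x_i=0}=0$. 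First I would verify that the proposed $p(\bx,t)$ is a solution: substitute \eqref{replicator_pde:soln} into the weak formulation of \eqref{convective_approximation}, change variables $\bx\mapsto\Phi_{-t}(\bx)$ in the spatial integral, and use that the Jacobian of $\Phi_t$ satisfies $\frac{\d}{\d t}\det D\Phi_t = (\nabla\cdot\boldsymbol{\Omega})(\Phi_t)\det D\Phi_t$, so that the exponential weight $\e^{Q(\bx,t)}$ is exactly the compensating factor that makes $p(\bx,t)\,\d\bx$ the pushforward $(\Phi_t)_{\#}(p^\ini\,\d\bx)$. After this change of variables the weak equation reduces to $\frac{\d}{\d t}\int g(\Phi_t(\by),t)\,p^\ini(\by)\,\d\by = \int (\partial_t g + \boldsymbol{\Omega}\cdot\nabla g)(\Phi_t(\by),t)\,p^\ini(\by)\,\d\by$, which is immediate from the chain rule and $\frac{\d}{\d t}\Phi_t = \boldsymbol{\Omega}(\Phi_t)$; integrating in $t$ and using $g(\cdot,T)=0$ recovers \eqref{weak:convective} exactly, with the boundary term producing the initial condition.

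Next I would establish uniqueness. Since \eqref{convective_approximation} is linear, it suffices to show that the only solution with zero initial datum is the zero measure. Here I would use the dual (backward) transport equation: given any admissible test function $g_0$ supported in $S^{n-1}\times\{T\}$, the terminal-value problem $\partial_t g + \boldsymbol{\Omega}\cdot\nabla g = 0$, $g(\cdot,T)=g_0$, is solved by $g(\bx,t)=g_0(\Phi_{T-t}(\bx))$, which is again an admissible test function because $\Phi$ is smooth and maps $S^{n-1}$ into itself. Plugging such $g$ into the weak formulation \eqref{weak:convective} for a solution with $p^\ini=0$ kills both the time-integral term (the integrand vanishes identically along characteristics) and the initial term, leaving $\int_{S^{n-1}} p(\bx,T)\,g_0(\Phi_0(\bx))\,\d\bx = 0$ for all $g_0$; since $g_0$ is arbitrary and $\Phi_0=\mathrm{id}$, this forces $p(\cdot,T)=0$ for a.e.\ $T$. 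Combined with the existence half, this pins down the solution as \eqref{replicator_pde:soln}.

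The hypothesis $\sing(p^\ini)\subset\mathop{\mathrm{int}}(S^{n-1})$ deserves a remark rather than being an obstacle: because $\Phi_t$ preserves $S^{n-1}$ and, more importantly, preserves each face (again by the no-mutation structure $\Omega_i|_{x_i=0}=0$), the singular support of $p(\cdot,t)$ stays in $\mathop{\mathrm{int}}(S^{n-1})$ along the flow provided the replicator trajectories do not reach the boundary in finite time, so the formula is consistent and the decomposition of Remark~\ref{rmk:decomp} is respected; one may also simply note that the pushforward formula $p(\bx,t)\,\d\bx=(\Phi_t)_{\#}(p^\ini\,\d\bx)$ makes sense for any $p^\ini\in\bm$ and the interior condition is only needed so that the density representation \eqref{replicator_pde:soln} is meaningful without boundary atoms appearing.

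The main obstacle I anticipate is purely technical regularity bookkeeping: justifying the change of variables and the differentiation under the integral sign when $p^\ini$ is a general measure (not a function), and checking that $\e^{Q(\bx,t)}p^\ini(\Phi_{-t}(\bx))$ is the correct density for the pushforward measure --- i.e.\ that $Q$ as defined, with the integrand evaluated at $\Phi_{s-t}(\bx)$, integrates up to $-\log\det D\Phi_t(\Phi_{-t}(\bx))$. This is a standard Liouville-formula computation (differentiate $\det D\Phi_s$ in $s$, integrate, and track the base point through the reparametrization $s\mapsto s-t$), but it is the one place where a sign or a composition could go wrong, so I would carry it out carefully rather than asserting it.
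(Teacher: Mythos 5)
Your argument is correct and follows essentially the same route as the paper, whose proof simply notes that the push-forward of $p^\ini$ by the Lipschitz flow $\Phi_t$ is well defined and invokes the method of characteristics by reference to the literature; you have supplied the details the paper delegates to those citations, namely the Liouville identity showing $\e^{Q(\bx,t)}$ is exactly the Jacobian of $\Phi_{-t}$ (so that \eqref{replicator_pde:soln} is the density of $(\Phi_t)_{\#}(p^\ini\,\d\bx)$) and the duality with the backward transport equation for uniqueness. The one loose end is that your uniqueness test function $g_0(\Phi_{T-t}(\bx))$ does not vanish at $t=T$ as admissibility in \eqref{weak:convective} requires, but this is repaired by the standard smooth cutoff in time.
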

\begin{proof}
We observe that, since $\Omega$ is Lipschitz, the push-forward of $p^I$ by $\Phi_t$ is well defined---\citep{ambrosioetal2005}. Hence, 
the proof is based on the methods of characteristics; see~\cite{John_F_PDE,Evans}. See \cite{DipernaLions1989} for an approach that works even if $\Omega$ fails to be Lipschitz continuous.
\qed
\end{proof}

\begin{rmk}
If $p^\ini$ gives mass to the boundaries of $S^{N-1}$, we write, as in Theorem~\ref{thm:soln_struct}:
\[
p^\ini=\sum_{i,j}p^\ini_{i,j},\quad\text{with}\quad \sing(p^\ini_{i,j})\subset S^{i,j}.
\]
Moreover, notice that $\Omega$ restricted to $S^{i,j}$ is tangent to $\partial S^{i,j}$. Hence, the restricted dynamics is always well defined, and we can write $\Phi^{i,j}_t$ for the flow map of 
\[
\frac{\d\bx^{i,j}}{\d t}=\Omega^{i,j}(\bx^{i,j}(t))
\quad\text{and}\quad
Q^{i,j}(\bx,t)=-\int_0^t\left(\nabla^{i,j}\cdot\Omega^{i,j}\right)\left(\Phi^{i,j}_{s-t}(\bx)\right)\,\d s,
\]
where $\Omega^{i,j}$ is the restriction of $\Omega$ to $S^{i,j}$, and $\nabla^{i,j}\cdot\Omega^{i,j}$ is divergence in $S^{i,j}$.
Then, repeated applications of Theorem~\ref{thm:replicator_eu} lead to the conclusion that the solution to \eqref{convective_approximation} is given by
\[
p(\bx,t)=\sum_{i,j}\e^{Q^{i,j}(\bx,t)}p^\ini_{i,j}(\Phi^{i,j}_{-t}(\bx)).
\]
\end{rmk}

\subsection{Peak and average dynamics}
\label{ssec:peak}

We start by showing that the long term dynamics of the average in the Wright-Fisher process, even in the thermodynamical limit, is not governed by the replicator equation. Consider for example, a population of $n$ types, evolving according to
the replicator-diffusion equation with fitness functions given by $\psi^{(i)}:S^{n-1}\to\R$.

From the fact that the final state of the the replicator-diffusion equation is given by equation~(\ref{final_state}),
the coefficients $\pi_i[p^\ini]$, $i=1,\dots,n$ can be calculated in two ways:
\[
\int \rho_i(\mathbf{x})p^\ini(\mathbf{x})\d\mathbf{x}=\int\rho_i(\mathbf{x})p^\infty(\mathbf{x})\d\mathbf{x}=\pi_i[p^\ini]=\int x_ip^\infty(\mathbf{x})\d\mathbf{x}=:\langle p^\infty\rangle_i\ .
\]
Therefore the average of the probability distribution will converge to a certain point of the simplex depending on the initial condition. 
This is completely different from the replicator dynamics, as its solution  converges to a single attractor, periodic orbits, 
chaotic attractors, etc~\citep{HofbauerSigmund}.

Now, we show that the probability distribution concentrates in the ESS; this shows that the peak will behave in
manner similar to the solutions of the replicator dynamics.

Recall that \citep{HofbauerSigmund} an ESS that lies in interior of $\slf^{n-1}$ must be a global attractor of
the replicator equation~(\ref{replicator_ode}). We have then the following result

\begin{thm}\label{thm:delta_conv}
 Assume $p^\ini\in\bm$, $\sing p^\ini\subset\mathop{\mathrm{int}}\slf^{n-1}$ and assume that (\ref{replicator_ode})
has a unique point $\bx^*$ such that for any initial condition $\bx(0)\in\mathop{\mathrm{int}}\slf^{n-1}$, $\lim_{t\to\infty}\bx(t)=\bx^*$.
Then the solution of equation~(\ref{convective_approximation}) is such that
\[
 \lim_{t\to\infty}p(\bx,t)=\delta_{\bx^*}.
\]
\end{thm}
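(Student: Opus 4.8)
The plan is to exploit the explicit representation of solutions to \eqref{convective_approximation} as transported measures. By Theorem~\ref{thm:replicator_eu} --- more precisely, by the push-forward interpretation used in its proof --- the solution with initial datum $p^\ini$ is $p(\cdot,t)=(\Phi_t)_\#p^\ini$, so that for every $g\in C(S^{n-1})$ one has
\[
\langle p(\cdot,t),g\rangle=\int_{S^{n-1}}g\bigl(\Phi_t(\bx)\bigr)\,\d p^\ini(\bx).
\]
Since $S^{n-1}$ is compact, the assertion $\lim_{t\to\infty}p(\cdot,t)=\delta_{\bx^*}$, understood in the weak sense of measures as elsewhere in the paper, is equivalent to $\langle p(\cdot,t),g\rangle\to g(\bx^*)$ for all such $g$; the whole proof thus reduces to passing to the limit under this integral.

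First I would verify that $p^\ini$ puts no mass on $\partial S^{n-1}$. Indeed, the hypothesis $\sing(p^\ini)\subset\mathop{\mathrm{int}}(S^{n-1})$ implies that in a neighbourhood of $\partial S^{n-1}$ the measure $p^\ini$ is absolutely continuous with respect to the $(n-1)$-dimensional Lebesgue measure on $S^{n-1}$; as $\partial S^{n-1}$ is Lebesgue-null, $p^\ini(\partial S^{n-1})=0$, so $p^\ini$ is carried by $\mathop{\mathrm{int}}(S^{n-1})$. Next, recall that $\boldsymbol{\Omega}$ is tangent to every face of $S^{n-1}$ (absence of mutations), so $\partial S^{n-1}$ is invariant and, by uniqueness of trajectories, so is $\mathop{\mathrm{int}}(S^{n-1})$; in particular $\Phi_t$ is globally defined on $S^{n-1}$. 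Combined with the global-attractor hypothesis, this gives $\Phi_t(\bx)\to\bx^*$ as $t\to\infty$ for every $\bx\in\mathop{\mathrm{int}}(S^{n-1})$, hence for $p^\ini$-almost all $\bx$; by continuity of $g$ it follows that $g(\Phi_t(\bx))\to g(\bx^*)$ for $p^\ini$-almost all $\bx$.

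I would then conclude by dominated convergence: $|g(\Phi_t(\bx))|\le\|g\|_\infty$, which is $p^\ini$-integrable since $p^\ini\in\bm$ is a finite measure. Hence, as $t\to\infty$,
\[
\langle p(\cdot,t),g\rangle\longrightarrow\int_{S^{n-1}}g(\bx^*)\,\d p^\ini(\bx)=g(\bx^*)\,p^\ini(S^{n-1})=g(\bx^*),
\]
the last equality because $p^\ini$ is a probability measure; since $g\in C(S^{n-1})$ was arbitrary, $p(\cdot,t)\rightharpoonup\delta_{\bx^*}$, which is the claim. The variant in which $p^\ini$ also charges the boundary faces is handled by applying the same reasoning to each restricted flow $\Phi^{i,j}_t$, as in the remark following Theorem~\ref{thm:replicator_eu}, though it is not needed under the present hypotheses.

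The one genuinely nontrivial ingredient --- the global attractor structure of the replicator flow --- is supplied by hypothesis, so the only points deserving care are the identification of the solution with a transported measure and the exclusion of loss of mass to the boundary, which is exactly what the singular-support assumption buys; once these are in place the exchange of limit and integral is a routine dominated-convergence argument, with the compactness of $S^{n-1}$ furnishing the required uniform bound. Thus I do not expect a real obstacle, only this bookkeeping. It is worth noting, consistently with the discussion preceding the theorem, that the convergence cannot in general be made uniform in $t$, since trajectories issuing from points arbitrarily close to $\partial S^{n-1}$ may take arbitrarily long to approach $\bx^*$.
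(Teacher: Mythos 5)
Your argument is correct and follows essentially the same route as the paper: both rest on the transport representation of the solution from Theorem~\ref{thm:replicator_eu}, the tangency of $\boldsymbol{\Omega}$ to the faces (so that $\mathop{\mathrm{int}}S^{n-1}$ is invariant and carries all the mass under the singular-support hypothesis), and the global-attractor assumption to send $\Phi_t(\bx)\to\bx^*$. The only difference is in how the limit is passed --- you use the push-forward formula together with pointwise a.e.\ convergence and dominated convergence, whereas the paper localises onto proper compact subsets $K$ with $\Phi_t(K)\subset B_\delta(\bx^*)$ and extracts a sequence of times; your version handles initial mass accumulating near $\partial S^{n-1}$ (where the attraction is not uniform) somewhat more cleanly, but the underlying idea is the same.
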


\begin{proof}
Assume, initially, that $\bx^*\in\mathop{\mathrm{int}}\slf^{n-1}$.
Since $\bx^*$ is a globally stable equilibrium for interior initial points, we can find $T>0$, such that, for $t>T$, and sufficiently small $\delta>0$, we have, for any proper compact subset $K\subset \slf^{n-1}$, that:
\[
 \Phi_t(K)\subset B_{\delta}(\bx^*)\subset\mathop{\mathrm{int}}\slf^{n-1}.
\]
where $B_\delta(\bx^*)$ is the open ball of radius $\delta$ and centered at $\bx^*$.

Let $\eta(\bx)$ be a continuous function with support contained in $K$.
Then, for $t>T$, we have that
\[
\int_{\slf^{n-1}}p(\bx,t)\eta(\bx)\,\d\bx = \int_{B_\delta(\bx^*)}p(\bx,t)\eta(\bx)\,\d\bx.
\]
But, let $\epsilon>0$ be given. Since $\eta$ is continuous, possibly with a smaller $\delta>0$, we must have
\begin{equation}
\label{eq:nearid}
\eta(\bx^*)-\epsilon\leq\int_{B_{\delta}(\bx^*)}p(\bx,t)\eta(\bx)\,\d\bx\leq\eta(\bx^*)+\epsilon,
\end{equation}
Now take $(\delta_k,\eps_k)\downarrow0$ such that (\ref{eq:nearid}) is satisfied. This yields a sequence of times $T_k$ such that $T_k\to\infty$ and
\[
\lim_{k\to\infty}\int_{\slf^{n-1}}p(\bx,T_k)\eta(\bx)\,\d\bx = \eta(\bx^*).
\]
Since $\Phi_s(K)\subset\Phi_t(K)$, for $s>t$, the claim follows.

For the case $\bx^*\in\partial\slf^{n-1}$, the result follows from similar arguments, replacing $B_{\delta}(\bx^*)$ by
$B_{\delta}(\bx^*)\cap\slf^{n-1}$.
\qed
\end{proof}

\subsection{Asymptotic approximation}
\label{ssec:local}
Let
\[
 0<\kappa\ll1.
\]
If we perform a regular asymptotic expansion, i.e., if we write $p_\kappa\approx p_0+\kappa p_1+\cdots$, then we find,  for times $t\ll\kappa^{-1}$, that the leading order dynamics is given by

\begin{equation}
\label{eq:loap}
 \partial_{t}p_0+\nabla\cdot(p_0\boldsymbol{\Omega})=0.
\end{equation}

The next theorem shows that this indeed the case, provided we see $p_0$ as the leading order dynamics with respect to the regular part of the probability density.

\begin{thm}\label{thm:conv_to_replicator}
Assume that the fitness are $C^2(S^{n-1})$ functions, and that the initial condition $p^\ini$ is also $C^2(S^{n-1})$. Let $r_\kappa$ be the regular part of the solution of (\ref{replicator_diffusion_eps}), with $\kappa\ge0$. Then $p_0$ is $C^2(S^{n-1})$, and satisfies the conservation law~(\ref{continuous_conservation_laws}). 
Moreover, if $\nabla\cdot\boldsymbol{\Omega}\geq0$, then given $\kappa$ and $K$ positive, there exits a $C$ such that, for $t\ll C\kappa^{-1}$, we have 
\[
 \|r_\kappa(\cdot,t)-p_0(\cdot,t)\|_\infty\leq C\kappa
\]
and
\[
 \|\partial_x^2p_0(\cdot,t)\|_\infty>K
\]
Thus $p_0$ is the leading order asymptotic approximation to $r_\kappa$, for $t\ll\kappa^{-1} C$.
\end{thm}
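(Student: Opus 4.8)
The plan is to base the proof on the explicit solution formula for $p_0$ from Theorem~\ref{thm:replicator_eu} together with a variation-of-constants representation of the difference $w_\kappa\bydef r_\kappa-p_0$, treating the diffusive term $\frac{\kappa}{2}\sum_{i,j}\partial^2_{ij}(D_{ij}r_\kappa)$ as a small forcing superimposed on the transport dynamics. First I would settle the assertions about $p_0$ itself. Since the fitnesses are smooth, $\boldsymbol{\Omega}$ is a smooth vector field tangent to $\partial S^{n-1}$ and to every subsimplex, so by smooth dependence of solutions of~(\ref{replicator_ode}) on initial data the flow $\Phi_{-t}$ is, for each $t$, a $C^2$ diffeomorphism of $S^{n-1}$ onto itself, and $Q(\cdot,t)\in C^2(S^{n-1})$ by differentiation under the integral sign. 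Plugging this into $p_0(\bx,t)=\e^{Q(\bx,t)}p^\ini(\Phi_{-t}(\bx))$ gives $p_0(\cdot,t)\in C^2(S^{n-1})$, with $\|p_0(\cdot,t)\|_{C^2}\le\Lambda(t)\,\|p^\ini\|_{C^2}$, where Gr\"onwall applied to the variational equations of~(\ref{replicator_ode}) controls $\Lambda(t)$ in terms of $\|\boldsymbol{\Omega}\|_{C^2}$, with at most exponential growth in $t$. The conservation law is then immediate: $p_0$ is a classical solution of $\partial_tp_0=-\nabla\cdot(\boldsymbol{\Omega}p_0)$ and $\boldsymbol{\Omega}$ is tangent to $\partial S^{n-1}$, so the divergence theorem yields $\frac{\d}{\d t}\int_{S^{n-1}}p_0\,\d\bx=-\int_{\partial S^{n-1}}p_0\,\boldsymbol{\Omega}\cdot\mathbf{n}\,\d S=0$, which is~(\ref{continuous_conservation_laws}) in the $\kappa=0$ limit.

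Next I would derive a priori bounds on $r_\kappa$, uniform for $\kappa\le1$. By Lemma~\ref{lem:classical_int} and Theorem~\ref{thm:soln_struct}, $r_\kappa$ is the classical solution of~(\ref{replicator_diffusion}) issuing from $p^\ini$; rewriting it in non-divergence form gives $\partial_tr_\kappa=\frac{\kappa}{2}\sum_{i,j}D_{ij}\partial^2_{ij}r_\kappa+\sum_ib_i^{(\kappa)}\partial_ir_\kappa+c^{(\kappa)}r_\kappa$ with $b_i^{(\kappa)}=-\Omega_i+\bigO(\kappa)$, $c^{(\kappa)}=-\nabla\cdot\boldsymbol{\Omega}+\bigO(\kappa)$, all coefficients smooth and bounded uniformly in $\kappa\le1$ and the principal part nonnegative. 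Since $\nabla\cdot\boldsymbol{\Omega}\ge0$ we have $c^{(\kappa)}\le C\kappa$, so the maximum principle plus Gr\"onwall give $\|r_\kappa(\cdot,t)\|_\infty\le\|p^\ini\|_\infty\,\e^{C\kappa t}$; differentiating the equation once and then twice, the commutator terms are lower order with bounded coefficients, and running the same argument on a fixed compact set $K\subset\mathrm{int}(S^{n-1})$ — where the equation is uniformly parabolic — interior parabolic estimates yield $\|r_\kappa(\cdot,t)\|_{C^2(K)}\le M(t)$ with $M(t)$ independent of $\kappa$ and of at most exponential growth. This last step is the one I expect to be the main obstacle: because $(D_{ij})$ degenerates on $\partial S^{n-1}$, the equation for $r_\kappa$ is uniformly parabolic only in the interior, so global Schauder estimates are unavailable and one must either confine the estimate to proper compact subsets of the simplex (as the statement does) or push the differentiated maximum-principle argument up to the degenerate boundary using the polynomial form of $D_{ij}$ and the tangency of $\boldsymbol{\Omega}$; keeping every constant independent of $\kappa$ is the crux.

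To finish, I would close the estimate by Duhamel. Subtracting the equations, $w_\kappa(\cdot,0)=0$ and $\partial_tw_\kappa=-\nabla\cdot(\boldsymbol{\Omega}w_\kappa)+\frac{\kappa}{2}\sum_{i,j}\partial^2_{ij}(D_{ij}r_\kappa)$. Let $\mathcal{U}(t)$ be the solution operator of $\partial_tv=-\nabla\cdot(\boldsymbol{\Omega}v)$, i.e.\ $(\mathcal{U}(t)v_0)(\bx)=\e^{Q(\bx,t)}v_0(\Phi_{-t}(\bx))$; since $\nabla\cdot\boldsymbol{\Omega}\ge0$ gives $Q\le0$, $\mathcal{U}(t)$ is an $L^\infty$ contraction. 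Variation of constants then gives
\[
 w_\kappa(\cdot,t)=\frac{\kappa}{2}\int_0^t\mathcal{U}(t-s)\Big[\,\sum_{i,j}\partial^2_{ij}\big(D_{ij}\,r_\kappa(\cdot,s)\big)\Big]\,\d s,
\]
so that, using the uniform $C^2$ bound and that $\sum_{i,j}\partial^2_{ij}(D_{ij}\,\cdot\,)$ is second order with bounded coefficients,
\[
 \|w_\kappa(\cdot,t)\|_{\infty,K}\le\frac{\kappa}{2}\int_0^t\Big\|\sum_{i,j}\partial^2_{ij}\big(D_{ij}\,r_\kappa(\cdot,s)\big)\Big\|_{\infty,K}\,\d s\le\kappa\,\Gamma(t),\qquad\Gamma(t)\bydef\frac{c_D}{2}\int_0^tM(s)\,\d s.
\]
Tracking the powers of $\kappa$, this is the announced $\bigO(\kappa)$ bound on the stated time window: given $K$, fix $C$ with $\Gamma(t)\le C$ on the relevant interval $t\ll C\kappa^{-1}$, so that $\|w_\kappa(\cdot,t)\|_\infty\le C\kappa$. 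The companion bound $\|\partial^2_xp_0(\cdot,t)\|_\infty>K$ holds because $\|\partial^2_xp_0(\cdot,t)\|_\infty$ is comparable to $\|D^2\Phi_{-t}\|_\infty$, which is nondecreasing and unbounded whenever the replicator flow~(\ref{replicator_ode}) expands or shears; thus for $t$ large enough within that window the curvature of $p_0$ exceeds $K$, which is precisely the mechanism that obstructs uniformity in time.
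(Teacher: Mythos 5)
Your overall architecture (explicit characteristics for $p_0$, a Duhamel representation for $w_\kappa=r_\kappa-p_0$, and the hypothesis $\nabla\cdot\boldsymbol{\Omega}\geq0$ used to make the relevant solution operator an $L^\infty$ contraction) matches the paper's, but the particular splitting you chose creates a gap that you flag yourself and then do not close. Writing $\partial_tw_\kappa=-\nabla\cdot(\boldsymbol{\Omega}w_\kappa)+\frac{\kappa}{2}\sum_{i,j}\partial^2_{ij}(D_{ij}r_\kappa)$ and propagating with the transport semigroup $\mathcal{U}$ forces you to control $\|r_\kappa(\cdot,s)\|_{C^2}$ \emph{uniformly in $\kappa$} over the whole time window: the forcing contains two derivatives of the unknown, so this is a loss-of-derivatives problem. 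The route you sketch for that bound --- differentiating the equation and invoking interior parabolic estimates --- does not deliver a $\kappa$-independent constant, since the interior Schauder/parabolic constants for $\frac{\kappa}{2}D:\partial^2+\cdots$ degenerate as the ellipticity $\kappa\to0$, and near $\partial S^{n-1}$ the operator is degenerate anyway. As written, the key estimate $\bigl\|\sum_{i,j}\partial^2_{ij}\bigl(D_{ij}r_\kappa(\cdot,s)\bigr)\bigr\|_\infty\leq M(s)$ with $M$ independent of $\kappa$ is asserted, not proved, and it is exactly the crux.

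The paper avoids this with the opposite splitting: it keeps the \emph{full} replicator--diffusion operator acting on $w_\kappa$ as the homogeneous part and puts $\frac{\kappa}{2}g_0$ with $g_0=\sum_{i,j}\partial^2_{ij}(D_{ij}p_0)$ as the forcing. The forcing then involves only two derivatives of the \emph{known} transport solution $p_0=\e^{Q}\,p^\ini\circ\Phi_{-t}$, which is bounded in $C^2$ directly from the $C^2$ hypotheses on $p^\ini$ and the fitnesses (your Gr\"onwall argument on the variational equations gives precisely this). The maximum principle for the full operator --- whose zeroth-order coefficient in non-divergence form is $\frac{\kappa}{2}\sum_{i,j}\partial^2_{ij}D_{ij}-\nabla\cdot\boldsymbol{\Omega}\leq0$ under the stated hypothesis --- then yields $\|S(t,s)g_0\|_\infty\leq M$ and $|w_\kappa(\bx,t)|\leq\kappa tM/2$ globally on $S^{n-1}$, not merely on compact subsets $K$ as in your final display. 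To repair your proof, either supply a genuine $\kappa$-uniform $C^2$ estimate for $r_\kappa$ (nontrivial), or simply replace $r_\kappa$ by $p_0$ in the forcing term, which is the paper's one-line fix; your treatment of the conservation law for $p_0$ and of the growth of $\|\partial_x^2p_0(\cdot,t)\|_\infty$ is fine and at the same level of detail as the paper's.
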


\begin{proof}
 The statements about $p_0$ follows straightforward by obtaining the solution by the method of characteristics. 

Let $w_\kappa=r_\kappa-p_0$. Then $w_\kappa$ satisfies

\begin{equation*}
\partial_t w_\kappa=\frac{\kappa}{2}\sum_{i,j=1}^{n-1}\partial^2_{ij}\left(D_{ij}w_\kappa\right)-\sum_{i=1}^{n-1}\partial_i\left(\Omega_i w_\kappa\right)
+\frac{\kappa}{2}g_0(\bx,t)
\end{equation*}
with null initial condition, where
\[
 g_0(\bx,t)= \sum_{i,j=1}^{n-1}\partial_{i,j}^2\left(D_{ij}p_0\right) .
\]
Notice that, because of the assumptions on $p^\ini$, we have that $g_0$ is uniformly bounded in time. 

The solution for such a problem is given by Duhammel principle. Let $S(t,t_0)$ be associated  solution operator. We have that 
\[
 w_\kappa(\mathbf{x},t)=\frac{\kappa}{2}\int_0^t S(t,s)g_0(s,x)\d s.
\]
By the maximum principle applied to the semigroup $S(t_2,t_1)$, we have that $\|S(t,s)g_0(s,x)\|\leq M_s$, and by the uniform bound on $g_0$, we have that there exists a constant $M$ such that $M_s\leq M$. Thus, we find that 
\[
 \|S(t,s)g_0(s,x)\|_\infty\leq M.
\]
Hence
\[
 |w_{\kappa}(\bx,t)|\leq \kappa t\frac{M}{2}.
\]
Therefore, taking $C=2M^{-1}$, we find, for $t\ll C\kappa^{-1}$, that:
\[
 \|w_\kappa(t,\cdot)\|_\infty \ll1.
\]
\qed
\end{proof}

\begin{rmk}
 If the condition on $\nabla\cdot\boldsymbol{\Omega}$ is not satisfied, a similar proof shows that if $t\ll-\log(\kappa)$ then
the same conclusion holds. Notice also that this condition is satisfied if the replicator has a globally stable equilibrium in the interior of $S^{n-1}$.
\end{rmk}

\begin{prop}\label{prop:weak_asymp}
Under the same hypothesis of Theorem~\ref{thm:conv_to_replicator}, let $U$ be an open set such that $U\subset S^{n-1}$ and $\bar{U}\cap \partial S^{n-1}=\emptyset$. Then, there exists $C>0$, such that
\[
\left|\int_{U}\left(p_\kappa(\bx,t)-p_0(\bx,t)\right)\d\bx\right|<C\kappa t,
\]
for any $t$.

\end{prop}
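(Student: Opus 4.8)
The plan is to reduce the measure-valued difference $p_\kappa-p_0$, over the interior region $U$, to a difference of classical solutions, and then to quote the pointwise estimate already established inside the proof of Theorem~\ref{thm:conv_to_replicator}.

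First I would note that, by the solution structure (Theorem~\ref{thm:soln_struct}; or already Lemma~\ref{lem:classical_int}), the singular part of $p_\kappa$ is supported on the subsimplices of $\partial S^{n-1}$, so it contributes nothing to $\int_U p_\kappa(\bx,t)\,\d\bx$; hence $\int_U p_\kappa(\bx,t)\,\d\bx=\int_U r_\kappa(\bx,t)\,\d\bx$, where $r_\kappa$ is the regular part. On the other hand, under the $C^2$ hypotheses on the fitnesses and on $p^\ini$, Theorem~\ref{thm:replicator_eu} gives the closed form $p_0(\bx,t)=\e^{Q(\bx,t)}p^\ini(\Phi_{-t}(\bx))$, which is a genuine $C^2(S^{n-1})$ function; thus $\int_U\big(p_\kappa(\bx,t)-p_0(\bx,t)\big)\,\d\bx=\int_U w_\kappa(\bx,t)\,\d\bx$, with $w_\kappa:=r_\kappa-p_0$ exactly the object analysed in the proof of Theorem~\ref{thm:conv_to_replicator}.

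Second, I would recall from that proof that $w_\kappa$ solves the inhomogeneous replicator-diffusion equation with forcing $\tfrac{\kappa}{2}g_0$, $g_0=\sum_{i,j=1}^{n-1}\partial^2_{ij}(D_{ij}p_0)$, and null initial datum, that $g_0$ is bounded uniformly in time, and that Duhamel's principle together with the maximum principle for the solution semigroup yields $|w_\kappa(\bx,t)|\le\tfrac{\kappa t}{2}M$ with $M:=\sup_{s\ge0}\|g_0(\cdot,s)\|_\infty<\infty$. Integrating over $U$,
\[
\left|\int_U\big(p_\kappa(\bx,t)-p_0(\bx,t)\big)\,\d\bx\right|\le\int_U|w_\kappa(\bx,t)|\,\d\bx\le\frac{|U|\,M}{2}\,\kappa t,
\]
which is the claim with $C=\tfrac12|U|M$ ($|U|$ the Lebesgue measure of $U$). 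This holds for every $t$, since the pointwise bound on $w_\kappa$ is valid for all times — it merely stops being informative once $t\gtrsim\kappa^{-1}$.

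The one delicate ingredient is the uniform-in-time bound on $g_0$, equivalently on $\|\partial^2_x p_0\|_\infty$, which in general can grow along the replicator flow; this is precisely where the structural assumption inherited from Theorem~\ref{thm:conv_to_replicator} is used (namely $\nabla\cdot\boldsymbol{\Omega}\ge0$, or the alternative in the Remark following it, such as a globally stable interior equilibrium), which also makes the replicator-diffusion semigroup non-expansive in $L^\infty$. Absent such an assumption one still obtains $\big|\int_U(p_\kappa-p_0)\,\d\bx\big|\le C\kappa t\,\e^{Ct}$, i.e. the stated estimate on the relevant time scale $t\ll\kappa^{-1}$. A purely weak alternative would be to subtract \eqref{weak:replicator_diffusion} and \eqref{weak:convective}, test with $g(\bx,s)=\phi(\Phi_{t-s}(\bx))$ for $\phi\in C_c^\infty$ equal to $1$ on $U$ and supported away from $\partial S^{n-1}$ (so the transport contributions cancel), and bound the remaining term $\tfrac{\kappa}{2}\int_0^t\!\int p_\kappa\sum D_{ij}\partial^2_{ij}g$ using that $p_\kappa$ is a probability measure and $\|D^2g(\cdot,s)\|_\infty\lesssim\sup_{|\tau|\le t}\big(\|D\Phi_\tau\|^2+\|D^2\Phi_\tau\|\big)$; this reproduces the same conclusion and meets the same obstruction.
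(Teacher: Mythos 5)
Your argument is correct and follows essentially the same route as the paper: reduce to the regular part $r_\kappa$ using that the singular part of $p_\kappa$ has support disjoint from $\bar U$, invoke the bound $\|r_\kappa(\cdot,t)-p_0(\cdot,t)\|_\infty\le C'\kappa t$ coming from the Duhamel estimate in the proof of Theorem~\ref{thm:conv_to_replicator}, and integrate over $U$. The extra remarks on the uniform-in-time bound on $g_0$ and the alternative weak-formulation argument are sensible but not needed beyond what the paper itself does.
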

\begin{proof}
By Theorem~\ref{thm:conv_to_replicator}, there exists $C'>0$ such that $\|r_\kappa(\cdot,t)-p_0(\cdot,t)\|_\infty<C'\kappa t$, which we write as:
\begin{equation*}
 -C'\kappa t\leq r_k(\bx,t)-p_0(\bx,t)<C'\kappa t.
\end{equation*}
Integrating in $U$ and using  that $\sing(q_\kappa)\cap U=\emptyset$, the result follows.
\end{proof}

Theorem~\ref{thm:delta_conv} shows that, for sufficient large time, the support of the solution of the replicator PDE, equation~(\ref{convective_approximation}),
 will be concentrated in sufficiently small neighbourhoods of $\bx^*$. In particular, this will be true for the maximum. For the replicator-diffusion equation~(\ref{replicator_diffusion_eps}) this
cannot be valid for any value of $\kappa>0$ (as it was proved in Theorem~\ref{thm:final_state}); however,
for strong selection, the initial dynamics given by the replicator-diffusion equation
is similar to
the one given by the replicator ODE.
 This is justified by the following result:

\begin{thm}
Assume that the replicator has a unique global attractor. Then, under the same hypothesis of Theorem~\ref{thm:conv_to_replicator}, we have that given $\eps>0$ and $\delta>0$  there exist a time $t^*$ and a constant $C>0$, depending only on the initial condition, such that
\[
\left|\int_{B_\eps(\bx_*)}p_{\kappa}(\bx,t)\,\d\bx-1\right|<C\kappa t + \delta,
\]
for $t>t^*$.
\end{thm}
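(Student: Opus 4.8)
The plan is to transfer the concentration statement — already available for the solution $p_0$ of the transport equation~\eqref{convective_approximation} through Theorem~\ref{thm:delta_conv} — to the solution $p_\kappa$ of the replicator-diffusion equation, using the quantitative comparison of Proposition~\ref{prop:weak_asymp} and conservation of total probability. Throughout I write $\bx^*$ for the unique global attractor (denoted $\bx_*$ in the statement).

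I would first dispose of the case $\bx^*\in\mathop{\mathrm{int}}(S^{n-1})$. Since shrinking $\eps$ only strengthens the assertion, one may assume $\overline{B_\eps(\bx^*)}\subset\mathop{\mathrm{int}}(S^{n-1})$. As $p^\ini\in C^2(S^{n-1})$ has empty singular support, Theorem~\ref{thm:delta_conv} applies and gives $p_0(\cdot,t)\rightharpoonup\delta_{\bx^*}$ as $t\to\infty$; since $p_0(\cdot,t)$ stays a probability measure (the transport equation conserves total mass), the Portmanteau theorem yields $\int_{B_\eps(\bx^*)}p_0(\bx,t)\,\d\bx\to1$, so there is $t^*=t^*(\eps,\delta)$ with $\int_{B_\eps(\bx^*)}p_0(\bx,t)\,\d\bx>1-\delta$ for $t>t^*$. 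Then I would apply Proposition~\ref{prop:weak_asymp} with $U=B_\eps(\bx^*)$ to obtain a constant $C>0$, depending only on $p^\ini$ (and the fixed $\eps$), such that $\bigl|\int_{B_\eps(\bx^*)}(p_\kappa-p_0)\,\d\bx\bigr|<C\kappa t$ for all $t$; combining the two estimates gives $\int_{B_\eps(\bx^*)}p_\kappa(\bx,t)\,\d\bx>1-\delta-C\kappa t$ for $t>t^*$. Finally, the probability–conservation law ($\rho_0\equiv1$ in~\eqref{continuous_conservation_laws}) bounds $\int_{B_\eps(\bx^*)}p_\kappa\le\int_{S^{n-1}}p_\kappa=1$, and the two one-sided bounds together give $\bigl|\int_{B_\eps(\bx^*)}p_\kappa(\bx,t)\,\d\bx-1\bigr|<C\kappa t+\delta$ for $t>t^*$.

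For a boundary attractor, $\bx^*\in\partial S^{n-1}$, the ball $B_\eps(\bx^*)\cap S^{n-1}$ is no longer compactly contained in the interior and Proposition~\ref{prop:weak_asymp} cannot be used verbatim. Here I would split $S^{n-1}\setminus B_\eps(\bx^*)$ into a part $G_1$ at positive distance from $\partial S^{n-1}$ and a boundary collar $G_2$: on $G_1$ one has $p_\kappa=r_\kappa$ (no singular mass in the interior), and the earlier argument — the estimate on $r_\kappa-p_0$ together with $p_0(G_1)\to0$ by Portmanteau — controls $\int_{G_1}p_\kappa$; on $G_2$ one invokes the structure representation~\eqref{strut_soln} and argues by downward induction on the dimension of the subsimplices: on each subsimplex through $\bx^*$ the restricted replicator flow still has $\bx^*$ as its unique global attractor, so Theorem~\ref{thm:delta_conv} in lower dimension together with Theorem~\ref{thm:conv_to_replicator} for the regular part $p_{ij}$ shows the singular components $p_{ij}\delta^{ij}$ also concentrate at $\bx^*$, and summing the finitely many contributions recovers the bound $C\kappa t+\delta$.

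The step I expect to be the main obstacle is precisely this boundary case: one must control the mass of the singular measures that the replicator-diffusion equation instantaneously creates on $\partial S^{n-1}$ and show that they do not leak away from $\bx^*$, which forces the use of the full structural decomposition, the induction on subsimplex dimension, and the observation that "unique global attractor" is inherited by the restricted dynamics on invariant faces. For an interior attractor — the generic and dynamically relevant situation — the first two paragraphs already deliver the complete statement, and that is the part I would write out in full first.
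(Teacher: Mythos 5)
Your argument for the interior attractor is essentially the paper's proof: a triangle inequality splitting $\int_{B_\eps(\bx^*)}p_\kappa-1$ into the comparison term, bounded by $C\kappa t$ via Proposition~\ref{prop:weak_asymp}, and the concentration term for $p_0$, bounded by $\delta$ via Theorem~\ref{thm:delta_conv}; your packaging through two one-sided bounds plus conservation of probability is logically equivalent. Your additional observation about a boundary attractor is well taken and goes beyond the paper: Proposition~\ref{prop:weak_asymp} indeed requires $\bar{U}\cap\partial S^{n-1}=\emptyset$, and the paper's proof applies the comparison estimate without addressing this, so strictly speaking it covers only the interior case (consistent with the remark following Theorem~\ref{thm:conv_to_replicator}); your proposed induction over the subsimplices via the structure decomposition~\eqref{strut_soln} is a reasonable, if unverified, route to the general statement.
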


\begin{proof}
We have
\[
\left|\int_{B_\eps(\bx_*)}p_{\kappa}(\bx,t)\,\d\bx-1\right|\leq \left|\int_{B_\eps(\bx_*)}\left(p_{\kappa}(\bx,t)-p_0(\bx,t\right)\,\d\bx\right| + \left|\int_{B_\eps(\bx_*)}p_{0}(\bx,t)\,\d\bx-1\right|.
\]
From Theorem~\ref{thm:conv_to_replicator}, we have a constant $C>0$ such that
\[
\left|\int_{B_\eps(\bx_*)}\left(p_{\kappa}(\bx,t)-p_0(\bx,t)\right)\,\d\bx\right|<C\kappa t.
\]
From Theorem~\ref{thm:delta_conv}, we have that there exists a time $t^*$ such that, for $t>t^*$, we have
\[
\left|\int_{B_\eps(\bx_*)}p_{0}(\bx,t)\,\d\bx-1\right|<\delta.
\]
Combining these two calculations yields the result.
\end{proof}

\section{Numerical results}
\label{sec:numerics}

We show, in this section, numerical results for two variants of the Rock-Scissor-Paper game~\citep{HofbauerSigmund}; i.e., fitness are identified with the pay-off
from game theory. In Subsection~\ref{ssec:num_f}, we study the evolution of the discrete evolution numerically in time, and show that the peak of 
distribution behaves accordingly to the replicator equation while the average value of the same distribution converges to a point which is not the ESS.
In Subsection~\ref{ssec:num_b} we obtain explicitly the fixation probability of a given type for the symmetric Rock-Scissor-Paper game. A full animation 
is available in the website indicated in the caption of figure~\ref{simulation}.

\subsection{Forward equation}
\label{ssec:num_f}

\begin{figure}
\includegraphics[width=.32\linewidth]{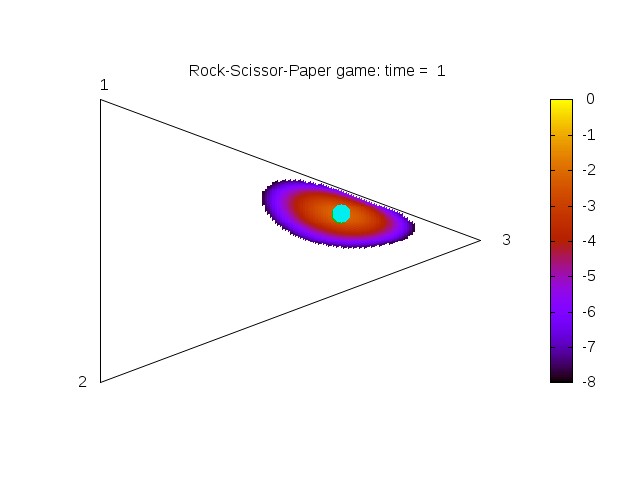}
\includegraphics[width=.32\linewidth]{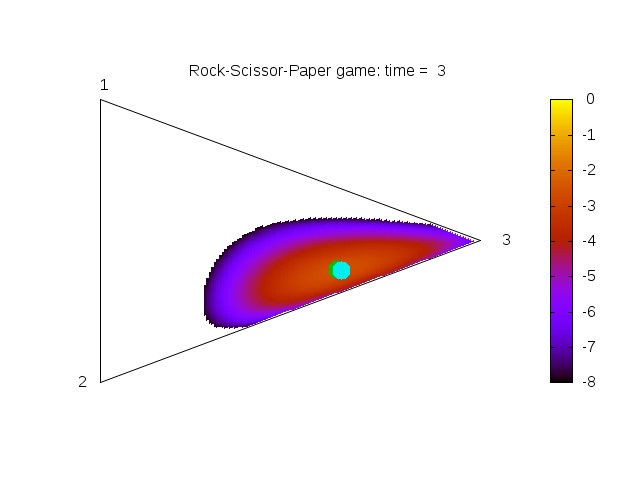}
\includegraphics[width=.32\linewidth]{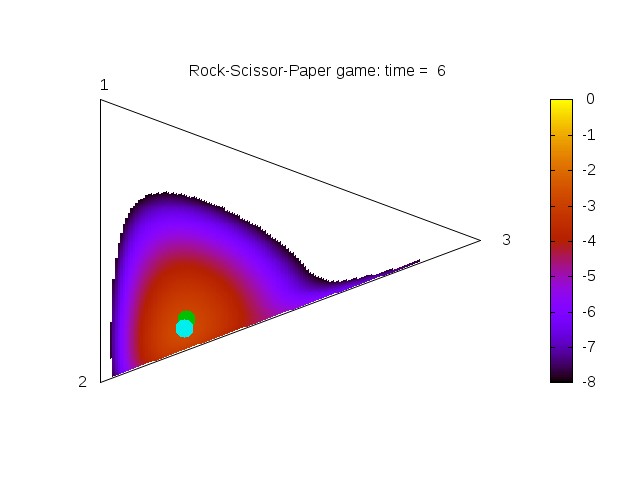}\\
\includegraphics[width=.32\linewidth]{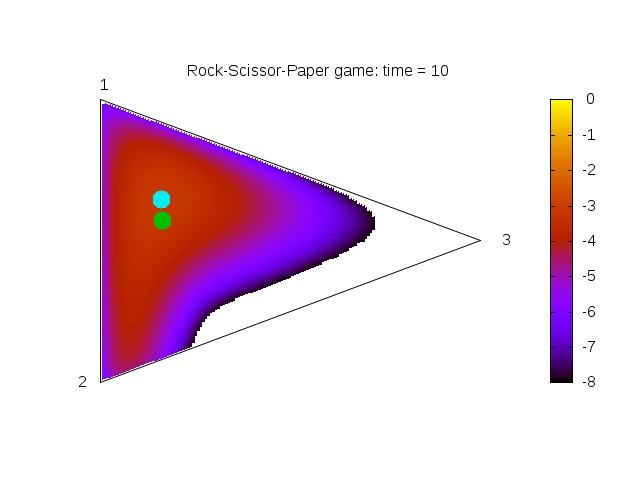}
\includegraphics[width=.32\linewidth]{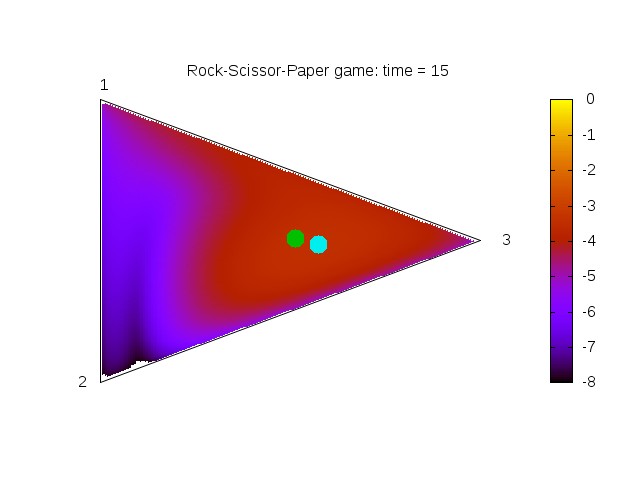}
\includegraphics[width=.32\linewidth]{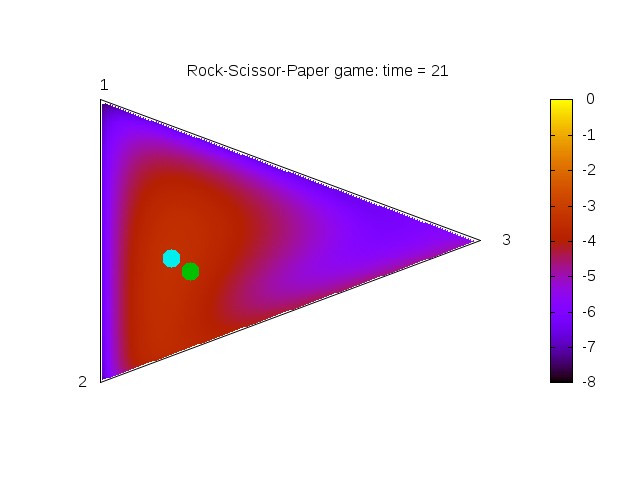}\\
\includegraphics[width=.32\linewidth]{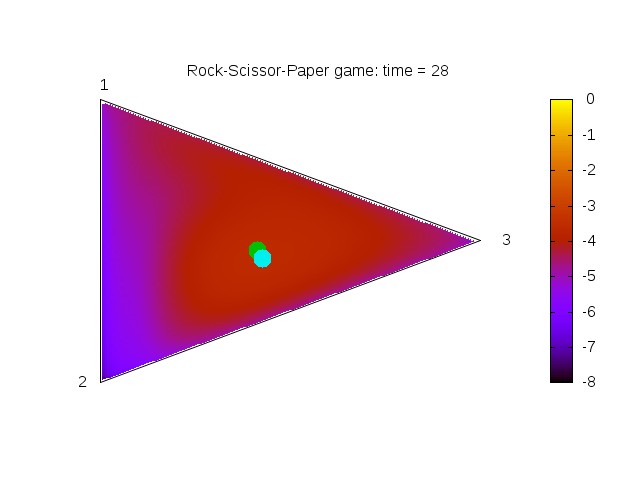}
\includegraphics[width=.32\linewidth]{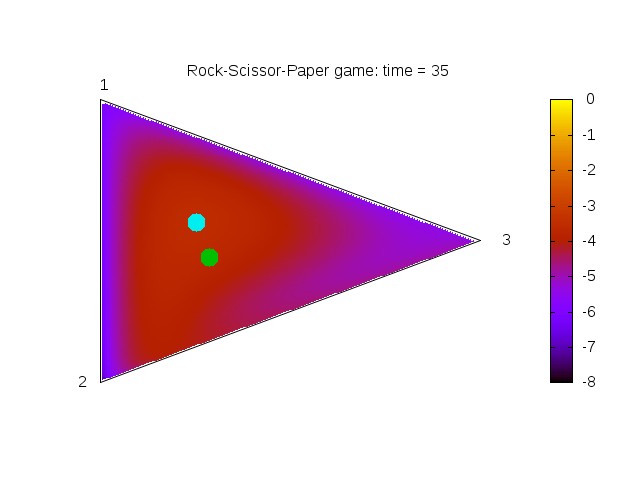}
\includegraphics[width=.32\linewidth]{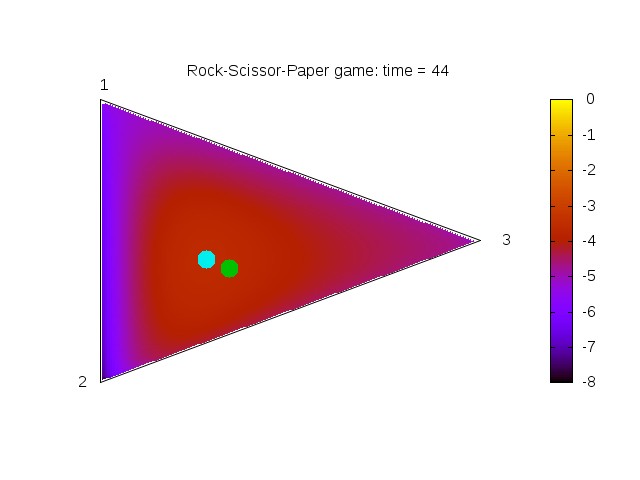}\\
\includegraphics[width=.32\linewidth]{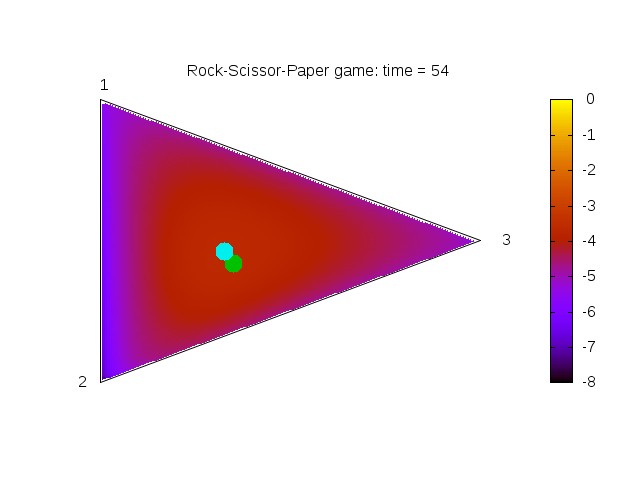}
\includegraphics[width=.32\linewidth]{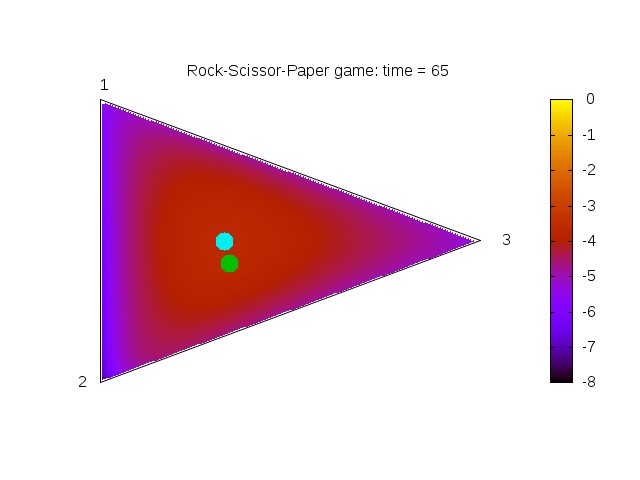}
\includegraphics[width=.32\linewidth]{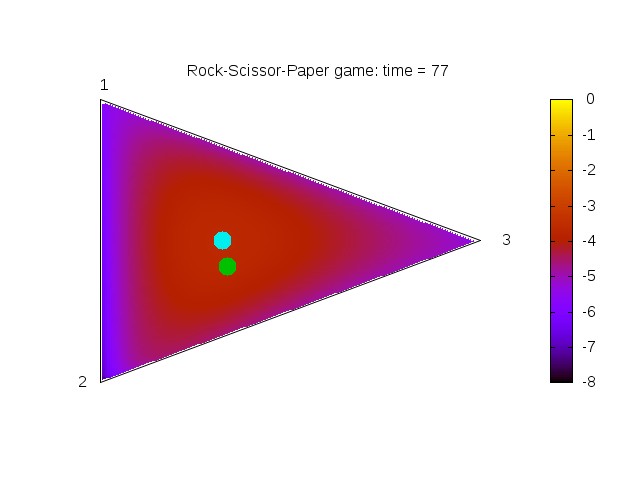}
\caption{Solution for short times (1,3,6,10,15,21,28,35,44,54,65,77) of the Wright-Fisher evolution for a population
of 150 individuals of two given types, with fitness
given by equations~(\ref{fitness_rep}) and (\ref{pay_off_RSP}) for a 
distribution initially concentrated in the interior non-stationary point
$\frac{1}{150}(70,70,10)$. The value of
the distribution $P(x,y,t)$ is in logarithmic scale. Note that the cyan spot, marking
the interior peak of the probability distribution rotates and converges to the 
ESS $\left(\frac{1}{3},\frac{1}{3},\frac{1}{3}\right)$ (along characteristics of the PDE or, equivalently, the trajectories of
the replicator dynamics). At the same time, the green spot marks
the mean value of the probability distribution and also rotates initially. After a long time,
it moves toward its final position, given by 
$\mathbf{x}^\infty\bydef\left(F^{(1)}_{p^\ini},F^{(2)}_{p^\ini},1-F^{(1)}_{p^\ini}-F^{(2)}_{p^\ini}\right)\approx(0.331,0.227,0.442)$.
For a full animation, also for different population sizes $N$, see
{\footnotesize
\texttt{http://dl.dropbox.com/u/11325424/WFsim/RSPFinal.html}}
}\label{simulation}
\label{fig:evolutionRSP}
\end{figure}

We use evolutionary game theory~\citep{JMS,HofbauerSigmund} to define the fitness function.
More precisely, we define a pay-off matrix $\mathbf{M}=\left(M_{ij}\right)_{i,j=1,\cdots,n}$ such that $M_{ij}$ is
the gain (in fitness) of the $i$ type against the $j$ type. 
The fitness of the $i$ type in a population at state $\mathbf{x}$ is 
\begin{equation}\label{fitness_rep}
 \Psi^{(i)}(\mathbf{x})=\sum_{j=1}^nM_{ij}x_j=\left(\mathbf{M}\mathbf{x}\right)_{i}\ .
\end{equation}
In a simulation where both effects, drift and diffusion, are apparent, we have $\mu=\nu=1$ and $\kappa=\bigO(1)$. The last identity implies
$\Delta t=\bigO\left(N^{-1}\right)$. 
Furthermore, from equation~(\ref{WSP}), we have $\psi^{(i)}(\bx)=\frac{1}{\Delta t}\left(\Psi^{(i)}(x)-1\right)$, and therefore, in order to see both effects
we need strong fitness functions and long times, i.e., $\Delta t\approx N^{-1}\ll 1$.

We consider in Figure~\ref{fig:evolutionRSP} the 
evolution of a discrete population of $N=150$ individuals with the pay-off matrix given by
\begin{equation}\label{pay_off_RSP}
 \mathbf{M}=\left(\begin{matrix}
                30&81&29\\
		6&30&104\\
		106&4&30
                  \end{matrix}\right)\ .
\end{equation}
This is know as the generalized Rock-Scissor-Paper game and presents an evolutionary stable state (ESS) $(x^*,y^*,z^*)=\left(\frac{1}{3},\frac{1}{3},\frac{1}{3}\right)$. 
Furthermore, the flow of the replicator dynamics converges in spirals to the ESS. The vertices as well as $(x^*,y^*,z^*)$ are equilibrium points
for the continuum dynamics. See~\cite{HofbauerSigmund} for the choice of values of the matrix $\mathbf{M}$.

Note that the peak moves in inward spirals
around the central equilibrium, following the trajectories of the replicator dynamics, while all the mass diffuses to the boundary.

The green spot indicates the average value for $x$ and $y$; at first it moves in spirals close to the trajectories of the 
replicator dynamics. After a time depending on the value of $N$ it starts to move in the direction of its final point $(x^\infty,y^\infty,z^\infty)
=(\pi_1[p^\ini],\pi_2[p^\ini],\pi_3[p^\ini])$.
This point can be calculated using equation~(\ref{final_state}) and the $n=3$ independent conservation laws. 

\subsection{Backward equation and the decay of the interior $L^1$-norm}
\label{ssec:num_b}

\begin{figure}
\begin{center}
\includegraphics[width=.8\linewidth]{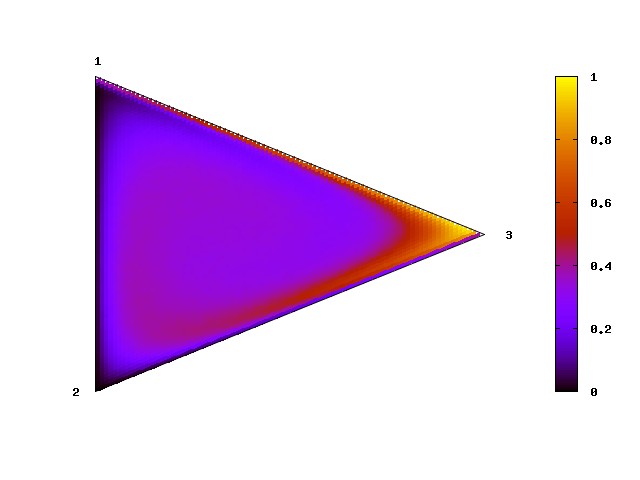}
\end{center}
\caption{Fixation probability of the third type, in a Rock-Scissor-Paper game. This is the numerical solution
of the stationary state 
of the equation (\ref{backward_kimura_full}), simulated by a Wright-Fisher process with $N=150$ and 
pay-off matrix $\left([[20, 0, 40],[40, 20, 0],[0, 40, 20]]\right)$. Note that higher values of the
fixation probability ``rotates'' around the center of the simplex (the stationary state of the
replicator dynamics).
}
\label{fig:backward}
\end{figure}

The stationary state of the backward equation (\ref{backward_kimura_full}) represents the fixation of probability
of a given type. This type is specified by the associated boundary conditions. Let us consider, as an example, that
$n=3$, the evolution is given by the Rock-Scissor-Paper game defined by the matrix
\begin{equation}\label{RSP_matrix}
\mathbf{M}= \left(\begin{matrix}
                   0&40&20\\
		  20&0&40\\
		  40&20&0
                  \end{matrix}
\right)\ ,
\end{equation}
and we study the fixation probability of the third type.
An exact solution is difficult to obtain, as it would be necessary to solve an hierarchy of equations, each solution representing 
boundary conditions of a larger set; however, a numerical solution is extremely easy to compute, as the Wright-Fisher
process is a natural discretisation of the (forward as well as the) backward equation (cf. Theorem~\ref{thm:finite_cons}). This is probably computationally inefficient, and 
different processes can be compatible with the same limit equations. See figure \ref{fig:backward} for
an illustration.

In figure~\ref{fig:l1norm}, we plot the $L^1$ norm in the interior of the simplex and all subsimplexes, showing that that
the probability mass flows from the simplex $S^{n-1}$ to the faces (which are equivalent to the simplexes $S^{n-2}$); the
solution behaves on the faces as the solution of the replicator-diffusion problem with one dimension less. The probability
flows to the ``faces of the faces'', i.e., to simplexes $S^{n-3}$ until it reaches the absorbing state $\mathbf{e}_i$ (simplexes $S^0$) for
$i=1,\dots,n$. We may think of a stochastic process reaching and sticking to the faces of the simplex until they reach their
final spot, the vertices. 
We further observe that, the probability mass in the interior of the simplex is the so-called quasi-stationary distribution of the process, namely, the probability distribution given that it has not been absorbed. See \cite{MeleardVillemonais2011} for a recent survey on the topic.

\begin{figure}
\begin{center}
 \includegraphics[width=0.7\textwidth]{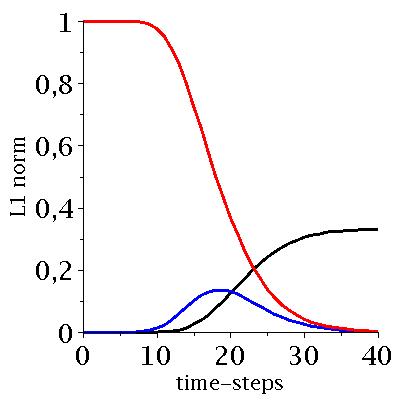}
\caption{Evolution of the probability mass, for the Rock-Scissor-Paper game given by matrix (\ref{RSP_matrix}) and with initial condition concentrated
in the ESS, $p^\ini=\delta_{(\frac{1}{3},\frac{1}{3})}$. The red line indicates the mas ($L^1$-norm) in the interior of the simplex; the blue line, the mass
in the interior any of the faces, and the black line, the mass in any of the vertices.}\label{fig:l1norm}
\end{center}
\end{figure}

\section{Conclusions}
\label{sec:conclusion}

We present a derivation of continuous limits of discrete Markov chain evolutionary models, that are frequency-dependent extensions
the classical Wright-Fisher model, through pure analytical techniques. 
The derivation presented pays close attention to the variety of possible time scalings possible as related to the selection and population size that are measured by two parameters $\mu,\nu\geq1$.  The balance of diffusion and selection ($\mu=\nu=1$ in our terminology) can be seen as slight extension of the results in \cite[Chapter 10]{EthierKurtz} using analytical methods instead of probabilistic arguments, and that favours the forward Fokker-Planck equation instead of the backward. In this sense, from a mathematical point of view, the weak formulation presented for the forward equation seems to be new,  in particular  for the minimal assumptions on the fitness functions. 
The case $\mu>\nu=1$ yields a hyperbolic equation that is the PDE version of the replicator equation. An apparently similar result can be found in  \cite[Chapter 11]{EthierKurtz}, which would correspond formally to take $\Delta t=1$ and $N\gg 1$, without using explicit scaling between these two variables.

With some additional regularity assumptions on the fitnesses functions, we can show that \eqref{weak:replicator_diffusion} is equivalent to \eqref{replicator_diffusion} together with the conservation laws \eqref{continuous_conservation_laws}. In particular, this allows
to characterise the behaviour of $p$ on the lower dimensional subsimplices of $S^{n-1}$. This can be used to obtain equations for the probability of extinction among other information. 

The results here are also related to results
in~\cite{ChampagnatFerriereMeleard_TPB2006,ChampagnatFerriereMeleard_SM2008}, where the idea that the underling scaling influences the macroscopic model was already present, although in a less explicit way than here. 
Nevertheless,  the accelerated birth-death regime in \cite{ChampagnatFerriereMeleard_SM2008} can be seen as a counterpart to our scaling of $\Delta t$ and $N$. On the other hand, the scaling for the fitness are taken as fixed (corresponding to our $\nu=1$ in our terminology), and this explains why they do not obtain the pure diffusive limit in the large population regime. Notice also that the  large population regime taken there seems to annihilate any stochastic effects coming from births and deaths, and that the stochastic effects in this limit are due only to the mutation process.

However, as pointed out above, as we allow more flexibility in the scaling laws, we are able to highlight any of these two factors independently; more precisely, for certain choices of the scaling in the fitnesses functions (namely, 
the exponent $\nu>1$), their influence in the dynamics goes to zero so fast that the limit model is purely diffusive. On the other hand, if we grow the population size fast enough (i.e., $\mu>1$) then
we highlight the deterministic evolution, providing a direct way to compare the replicator equation with the Wright-Fisher process (or, for that matter, also with the Moran process, but, naturally,
in a different time scale). To the best of our knowledge,
this explicit comparison is new. See also~\cite{Fournier_Meleard_AAP2004} for a similar approach.
 
The use of ordinary differential equations in population dynamics is widespread. However, as they are valid only for infinite populations, and 
real populations are always finite, the precise justification of its use and the precise meaning of its solution is seldom made clear.
In this paper, we showed, in a limited framework, but expanding results from previous works~\citep{Chalub_Souza:CMS_2009,Chalub_Souza:TPB_2009}, that ODEs
can be justifiably used to model the evolution of a population. However, the validity of the modelling is necessarily limited in time
(increasing with the population size), and the solution of the differential equation models the most probable state of the system
(therefore, the differential equation would give answers compatible with the maximum likelihood method, but not necessarily compatible
with other estimators).

One of the central issues of the present work is to discuss the possibility of using diffusive approximation for large, but finite, $N$. However, a major challenge to
any one interested to use the replicator-diffusion equation to fit experimental data is the value of $\kappa$.

From the derivation of the replicator-diffusion equation, we see that $\kappa$ is directly linked to the variance of the diffusion, while $\|\psi\|_\infty$ is directly linked with natural selection. Hence, their ratio is an adimensional measure of the relative relevance of the genetic drift with respect to natural selection. If we normalise the fitness functions such that $\|\psi\|_\infty=1$, then $\kappa$ becomes a measure of such relative relevance.

 In this sense, $\kappa^{-1}$ could be an alternative definition of effective population size (see also~\cite{EtheridgeLNM} for usual definitions).
Only when the population is small or times are long in the evolutionary scale, we would expect
order 1 values of $\kappa$.

We are currently applying a
similar technique to epidemiological models; in this case it is necessary to impose boundary conditions in part of the boundary
(as an homogeneous population of infected individual is not stationary, as infected individuals become, with time, removed or
even susceptible) and it is impossible to impose boundary conditions in part of the boundary 
(a population of susceptible remains in this state for ever). Early results were already published in~\cite{Chalub_Souza_2011}. 
The same problem, regarding the imposition of boundary conditions is true if we include mutations in the Moran or Wright-Fisher
model. This is work in progress.

\section*{Acknowledgements}

FACCC was partially supported by CMA/FCT/UNL, financiamento base 2011 ISFL-1-297 and projects PTDC/FIS/101248/2008, PTDC/FIS/70973/2006
from FCT/MCTES/Portugal. FACCC also acknowledges the hospitality of CRM/Barcelona where part of this work was performed and discussions with J. J. Velazquez (Madrid). MOS was partially supported by CNPq grants \#s 309616/2009-3 and 451313/2011-9, and FAPERJ grant \# 110.174/2009. We thank the careful reading and comments of three anonymous referees.

\appendix

\section{Third moment of multinomial distributions}
\label{ap:thirdmomentum}

Let $\balpha$ be a multinomially distributed vector. 
Let also $\partial_i=\frac{\partial\ }{\partial q_i}$ and $|\mathbf{q}|=\sum_i q_i$ (not necessarily equal to 1). Evidently $\partial_iq_j=\delta_{ij}$. 
Then:
\begin{align*}
 \mathbb{E}[\alpha_i\alpha_j\alpha_k]&=\left\{\sum_{\boldsymbol{\alpha}}\alpha_i\alpha_j\alpha_kf(\mathbf{q},\boldsymbol{\alpha},N)\right\}_{|\mathbf{q}|=1}
=\left\{q_i\partial_i\left[q_j\partial_j\left(q_k\partial_k|\mathbf{q}|^N\right)\right]\right\}_{|\mathbf{q}|=1}\\
&=N[q_i\delta_{ij}\delta_{kj}]+N(N-1)\left[q_iq_k\delta_{ij}+q_iq_j\delta_{kj}+q_kq_j\delta_{ki}\right]+N(N-1)(N-2)q_iq_jq_k\ .
\end{align*}

The expression for the third moment now follows from a straightforward
calculation.

\section{Weak formulation in time}
\label{ap:A}

In order to obtain a truly weak formulation, without any requirement upon the regularity of $p$,
we observe that the equation \eqref{eq:weconcludethat} is valid for any time $t_k=t_0+k\Delta t$. Hence, if we also
 let $T=(m+1)\Delta t$ in the equation \eqref{eq:weconcludethat}, and  sum over $k$, we obtain that
\begin{align*}
& \sum_{k=0}^m\sum_{\bx\in S^{n-1}_N}\left(p_N(\bx,t_{k+1})-p_N(\bx,t_k)\right)g(\bx,t_k)\\
&\quad=\frac{1}{2N}\sum_{k=0}^m\sum_{\bx\in S^{n-1}_N}p_N(\bx,t_k)\left(\sum_{i,j=1}^{n-1}x_i(\delta_{ij}-x_j)\partial^2_{ij}g(\mathbf{x},t_k)\right)  \\
&\qquad+\sum_{k=0}^m \left(\Delta t\right)^\nu \sum_{\bx\in S^{n-1}_N}p(\mathbf{x},t_k)\left[\sum_{j=1}^{n-1}x_j\left(\psi^{(j)}(\mathbf{x})-\bar\psi(\mathbf{x})\right)\partial_{j}g(\mathbf{x},t_k)\right]\d\mathbf{x}.
\end{align*}

On summing by parts the left hand side, we obtain
\begin{align*}
 &-\sum_{k=0}^{m-1}\sum_{\bx\in S^{n-1}_N}p_N(\bx,t_k)\left(g(\bx,t_{k+1})-g(\bx,t_k)\right) \\
&\quad- \sum_{\bx\in S^{n-1}}p_N(\bx,t_0)g(\bx,t_0)\,\d\bx + \sum_{\bx\in S^{n-1}_N}p_N(\bx,T)g(\bx,T)\\ 
&\qquad=\frac{1}{2N}\sum_{k=0}^m\sum_{\bx\in S^{n-1}_N}p_N(\bx,t_k)\left(\sum_{i,j=1}^{n-1}x_i(\delta_{ij}-x_j)\partial^2_{ij}g(\mathbf{x},t_k)\right)   \\
&\quad\qquad
+\sum_{k=0}^m \left(\Delta t\right)^\nu \sum_{\bx\in S^{n-1}}p(\mathbf{x},t_k)\left[\sum_{j=1}^{n-1}x_j\left(\psi^{(j)}(\mathbf{x})-\bar\psi(\mathbf{x})\right)\partial_{j}g(\mathbf{x},t_k)\right].
\end{align*}

%\bibliography{formalbib}
%\bibliographystyle{jmb}

\end{document}